\newtheorem{theorem}{Theorem}[section]
\newtheorem{proposition}[theorem]{Proposition}
\newtheorem{lemma}[theorem]{Lemma}
\newtheorem{claim}[theorem]{Claim}
\theoremstyle{definition}
\theoremstyle{definition}\newtheorem{definition}[theorem]{Definition}
\theoremstyle{observation}
\newcommand{\comment}[1]{}
\newcommand{\QED}{\mbox{}\hfill \rule{3pt}{8pt}\vspace{10pt}\par}
\def\m{{\rm min}}
\def\polylog{\operatorname{polylog}}
\newcommand{\ignore}[1]{}
\newcommand{\eat}[1]{}
\newcommand{\squishlist}{
 \begin{list}{$\bullet$}
  { \setlength{\itemsep}{0pt}
     \setlength{\parsep}{3pt}
     \setlength{\topsep}{3pt}
     \setlength{\partopsep}{0pt}
     \setlength{\leftmargin}{1.5em}
     \setlength{\labelwidth}{1em}
     \setlength{\labelsep}{0.5em} } }
\newcommand{\squishend}{
  \end{list}  }
\def\e{{\rm E}}
\def\bone{{\bf 1}}
\def\danupon#1{\marginpar{$\leftarrow$\fbox{D}}\footnote{$\Rightarrow$~{\sf #1 --Danupon}}}
\def\prasad#1{}
\def\gopal#1{}
\def\atish#1{}
\begin{document}

\title{Efficient Distributed Random Walks with Applications
%\footnote{Eligible for Best Student Paper Award. Students recommended for award - Atish {Das Sarma} and Danupon Nanongkai.}
}

\begin{titlepage}
\author{Atish {Das Sarma} \thanks{College of Computing, Georgia Institute of Technology, Atlanta, GA 30332, USA.
\hbox{E-mail}:~{\tt atish@cc.gatech.edu, danupon@cc.gatech.edu}} \and Danupon Nanongkai \addtocounter{footnote}{-1}
\footnotemark \and  Gopal Pandurangan \thanks{Division of Mathematical
Sciences, Nanyang Technological University, Singapore 637371 and Department of Computer Science, Brown University, Providence, RI 02912.  \hbox{E-mail}:~{\tt gopalpandurangan@gmail.com}. Supported in part by NSF grant CCF-0830476.}   \and Prasad Tetali \thanks{School of Mathematics and School of Computer Science,
Georgia Institute of Technology
Atlanta, GA 30332, USA. \hbox{E-mail}:~{\tt tetali@math.gatech.edu}. Supported in part by NSF DMS 0701023 and NSF CCR 0910584.}}

\date{}

\maketitle \thispagestyle{empty}

\vspace*{.4in}

\maketitle
\begin{abstract}
We focus on  the problem of performing random walks efficiently in a distributed network. Given bandwidth constraints, the goal is to minimize the number of rounds required to obtain a random walk sample. We first present a fast sublinear time distributed algorithm for performing random walks whose time complexity is sublinear in the length of the walk. Our algorithm performs a random walk of length $\ell$  in $\tilde{O}(\sqrt{\ell D})$  rounds (with high probability) on an undirected  network, where $D$ is the diameter of the network. This improves over the previous best algorithm that ran in $\tilde{O}(\ell^{2/3}D^{1/3})$ rounds (Das Sarma et al., PODC 2009). We further extend our algorithms to efficiently perform $k$ independent random walks in   $\tilde{O}(\sqrt{k\ell D} + k)$ rounds. We then show that there is a fundamental difficulty in improving the dependence on $\ell$ any further by proving a lower bound of $\Omega(\sqrt{\frac{\ell}{\log \ell}} + D)$ under a general model of distributed random walk algorithms. Our random walk algorithms are useful in speeding up distributed algorithms for a variety of applications that use random walks as a subroutine. We present two main applications. First, we give a fast distributed algorithm for computing a random spanning tree (RST) in an arbitrary (undirected) network which runs in $\tilde{O}(\sqrt{m}D)$ rounds (with high probability; here $m$ is the number of edges). Our second application is a fast decentralized algorithm for estimating mixing time and related parameters of the underlying network. Our algorithm is fully decentralized and can serve as a building block in the design of topologically-aware networks.

\end{abstract}

\noindent {\bf Keywords:} Random walks, Random sampling, Decentralized
computation, Distributed algorithms, Random Spanning Tree, Mixing Time. \\

%\noindent {\bf Format:} Regular Presentation.

%\noindent {\bf Eligible for Best Student Paper Award:} Students
%recommended for award - Atish {Das Sarma} and Danupon Nanongkai.

\end{titlepage}

\vspace{-0.15in}
\section{Introduction}
Random walks play a central role in computer science, spanning a
wide range of areas in both theory and practice. The focus  of this
paper is  random walks in networks, in particular, decentralized
algorithms for performing random walks in arbitrary networks. Random
walks are used as an integral subroutine in a wide variety of
network applications ranging from token management and load
balancing to search, routing, information propagation and gathering,
network topology construction and building random spanning trees
(e.g., see \cite{DNP09-podc} and the references therein). Random
walks  are also very useful in providing uniform and efficient
solutions to distributed control of dynamic networks \cite{BBSB04,
ZS06}.  Random walks  are local and lightweight and require little
index or state maintenance which make them especially attractive to
self-organizing dynamic networks such as Internet overlay and ad hoc
wireless networks.

A key purpose of random walks in  many of these network applications
is to perform  node sampling.  While the sampling requirements in different
applications vary, whenever a true sample is required from a random
walk of certain steps, typically all applications perform the walk naively
--- by simply passing a token from one node to its neighbor: thus to
perform a random walk of length $\ell$ takes time linear in $\ell$.

In this paper, we  present a  sublinear  time (sublinear in $\ell$) distributed  random walk
sampling algorithm that is significantly  faster than the previous
best result. Our algorithm runs in time $\tilde{O}(\sqrt{\ell D})$ rounds.
  We then present  an almost matching lower bound that applies
to a general class  of distributed algorithms (our algorithm also falls in this class).
Finally, we present two key applications of our algorithm.
The first is a fast distributed algorithm for computing a random spanning tree, a fundamental spanning tree problem that has been studied widely in the classical setting (see e.g., \cite{kelner-madry} and references therein). To the best of our knowledge,  our algorithm gives the fastest known running time in an arbitrary network.
 The second is to devising efficient decentralized algorithms for computing
 key global metrics of the underlying network ---
 mixing time, spectral gap, and conductance. Such algorithms can be useful building
 blocks in the design of {\em topologically (self-)aware} networks, i.e., networks that can  monitor and regulate themselves in a decentralized fashion. For example,  efficiently computing the mixing time or the spectral gap, allows  the network to monitor connectivity and expansion properties of the network.

\subsection{Distributed Computing Model}
Consider an undirected, unweighted, connected $n$­-node graph $G =
(V, E)$.  Suppose that every node (vertex) hosts a processor with
unbounded computational power, but with limited initial knowledge.
Specifically, assume that each node is associated with a distinct identity
number from the set $\{1, 2, . . . , n\}$. At the beginning of the
computation, each node $v$ accepts as input its own identity number
and the identity numbers of its neighbors in $G$. The node may also
accept some additional inputs as specified by the problem at hand.
The nodes are allowed to communicate through the edges of the graph
$G$. The communication is synchronous, and occurs in discrete
pulses, called {\em rounds}. In particular, all the nodes wake up
simultaneously at the beginning of round 1, and from this point on
the nodes always know the number of the current round. In each round
each node $v$ is allowed to send an arbitrary message of size
$O(\log n)$ through each edge $e = (v, u)$ that is adjacent to $v$,
and the message will arrive to $u$ at the end of the current round.
This is a standard model of distributed computation known as the
{\em CONGEST model} \cite{peleg} and has been attracting a lot of
research attention during last two decades
(e.g., see \cite{peleg} and the references therein).%This is a  widely used  standard model
% to study distributed algorithms and captures the realistic notion that
%there is a bound on the amount of messages that can be sent through
%an edge in one time step  and hence captures the bandwidth
%constraints inherent
% in  real-world computer  networks \cite{peleg, PK09}.
 % (We note that if unbounded-size messages were allowed through every
%edge in each time step, then the problems addressed here can be
%trivially solved in $O(D)$ time by collecting all  the topological information %at
%one node, solving the problem locally, and then broadcasting the
%results back to all the nodes \cite{peleg}.)

There are several measures of efficiency of distributed algorithms,
but we will concentrate on one of them, specifically, {\em the
running time}, that is, the number of rounds of distributed
communication. (Note that the computation that is performed by the
nodes locally is ``free'', i.e., it does not affect the number of rounds.)
%\atish{Should we mention here explicitly that we do not consider message complexity - the total number of messages exchanged. And admit that our algorithm is expensive in this aspect?}
 Many
fundamental network problems such as minimum spanning tree, shortest
paths, etc. have been addressed in this model (e.g., see
\cite{lynch, peleg, PK09}). In particular, there has been much
research into designing very fast distributed   approximation
algorithms (that are even faster at the cost of producing
sub-optimal solutions) for many of these  problems (see e.g.,
\cite{elkin-survey,dubhashi, khan-disc,khan-podc}).  Such algorithms
can be useful for large-scale resource-constrained and
dynamic networks where running time is crucial.
%This work addresses the
%problem of computing random walks in a time-efficient manner.

\subsection{Problem Statement, Motivation, and Related Work}
The basic problem we address is the following.
We are given an arbitrary undirected, unweighted, and connected $n$--node
network $G = (V,E)$ and a (source) node $s \in V$.
The goal is to devise a distributed algorithm such that, in the end,
$s$ outputs the ID of a node $v$ which is randomly picked according
to the probability that it is the destination of a random walk of
length $\ell$ starting at $s$. Throughout this paper, we assume
the standard (simple) random walk: in each step, an edge is taken from
the current node $x$ with probability proportional to $1/d(x)$ where
$d(x)$ is the degree of $x$. Our goal is to output a true  random
sample from the $\ell$-walk distribution starting from $s$.
%%%%%%%%
%\danupon{It doesn't matter whether it's SoD or DoS in this paper but
%SoD is more intuitive. This is
%what's written before:\\
% The goal is to devise a distributed algorithm
%such that, in the end, one node $v$ outputs the ID of $s$, where $v$
%is randomly picked according to the probability that it is the
%destination of a random walk of length $\ell$ starting at $s$ (the
%source node).}
%%%%%%%%%
%%%%%%%%%
%We want an algorithm that finishes in the smallest number of rounds.
%\atish{Do we need to define random walk here?}

For clarity, observe that the following naive algorithm solves the
above problem in $O(\ell)$ rounds:
%\danupon{Why do we need ``+D''?}
The walk of length $\ell$ is performed by sending a token for $\ell$
steps, picking a random neighbor with each step. Then, the
destination node $v$ of this walk sends its ID back (along the same path) to the source for output. Our goal is to perform such
sampling with significantly less number of rounds, i.e., in time that
is sublinear in $\ell$.  On the other hand, we note that it can take too much time (as much as $\Theta(|E|+D)$ time) in the CONGEST model
to collect all  the topological information at
the source node (and then computing the walk locally).

This problem was proposed in~\cite{DNP09-podc} under the name \textit{Computing One Random Walk where Source Outputs
Destination (1-RW-SoD)}
%%%%%%%%%%%
%%%%%%%%%%%
%\danupon{Before: This problem is called as \textit{Computing One
%Random Walk where Destination Outputs Source (1-RW-DoS)}.}
%%%%%%%%%%%
%%%%%%%%%%
(for short, this problem will be simply called {\em Single Random
Walk} in this paper),
wherein the first sublinear time distributed algorithm was provided,
requiring $\tilde{O}(\ell^{2/3}D^{1/3})$ rounds ($\tilde{O}$
hides $\polylog(n)$ factors); this improves over the naive $O(\ell)$
algorithm when the walk is long compared to the diameter (i.e.,
$\ell = \Omega(D \polylog n)$ where $D$ is the diameter of the network).
This was the first
result to break past the inherent sequential nature of random walks and beat
%improvement beyond
the naive $\ell$ round
approach, despite the fact that random walks have been used in
distributed networks for long and in a wide variety of applications.

There are two key motivations for obtaining sublinear time bounds.
The first is that in many algorithmic applications, walks of length significantly greater than the network diameter are needed.
For example, this is necessary in both the  applications   presented later in the paper, namely distributed computation of a random spanning tree (RST) and  computation of mixing time. In the RST algorithm, we need to perform a random walk
of expected length $O(mD)$ (where $m$ is the number of edges in the network).
In decentralized computation of mixing time, we need to perform walks
of length at least equal to the mixing time which can be significantly larger than the diameter (e.g., in a random geometric graph model \cite{MP}, a popular model
for ad hoc networks, the mixing time can be larger than the diameter by a
factor of $\Omega(\sqrt{n})$.)
More generally,
many real-world communication networks  (e.g., ad hoc  networks and peer-to-peer networks) have relatively small diameter, and random walks of
length at least the diameter are usually performed for many sampling applications, i.e., $\ell >> D$. It should be noted that  if the network is rapidly mixing/expanding which is sometimes the case in practice, then sampling from walks of length $\ell >> D$ is close to sampling from the steady state (degree) distribution; this can be done in $O(D)$ rounds (note however, that this gives only an approximately close sample, not the exact sample for that length). However, such an approach fails when $\ell$ is smaller than the mixing time.

The second motivation is understanding
the time complexity of distributed random walks. Random walk is essentially a global problem  which requires the algorithm to ``traverse" the entire network.
Classical ``global" problems include the minimum spanning tree, shortest path etc. Network diameter is an inherent lower bound for such problems. Problems of this type raise the basic question whether $n$ (or $\ell$ as the case here) time is essential or is the network diameter $D$, the inherent parameter. As pointed out in the seminal work of \cite{peleg-mst}, in the latter case, it would be
desirable to design algorithms that have a better complexity for graphs with low
diameter.

%
%Before:\\
%\cite{DNP09-podc} presented the first non-trivial distributed
%algorithm
% for this problem that had a running time of
%$\tilde{O}(\ell^{2/3}D^{1/3})$ rounds on an undirected unweighted
%network, where $D$ is the diameter of the
%network ($\tilde{O}$ hides $\polylog(n)$ factors). When $\ell = \Omega(D \log n)$, this is an
%improvement over the naive $O(\ell)$ bound.   (It was shown that $\Omega(\min\{D, \ell\})$ is a lower bound and hence in general we cannot have a running time faster than the diameter of the graph.)
%This was the first improvement beyond the naive $\ell$ round approach, despite the fact that random walks have been used in distributed networks for long and in a wide variety of applications.
%
%It was further conjectured in \cite{DNP09-podc} that the true number of rounds for this problem is $\tilde O(\sqrt{\ell D})$.
%\danupon{Actually, this is conjectured for 1-RW-DoS.}
\iffalse
\danupon{I removed the following paragraph which
seems to be repetitive and I don't
see why it is here:\\
The work of \cite{DNP09-podc} raised two important questions: (1)
Can we devise even faster algorithms for the above problem? The
paper conjectured that it might be possible to  obtain a running
time of $\tilde{O}(\sqrt{\ell D})$; (2) $D$ is an easy lower bound
on the running time (\cite{DNP09-podc}), but  are their non-trivial
lower bounds for this problem?}
\fi

The high-level idea used in the $\tilde{O}(\ell^{2/3}D^{1/3})$-round
algorithm in \cite{DNP09-podc} is to ``prepare'' a few short walks
in the beginning (executed in parallel) and then carefully stitch
these walks together later as necessary.
The same general approach was introduced in~\cite{AtishGP08} to find random walks in data streams with the
main motivation of finding PageRank.
However, the two models have very different constraints and
motivations and hence the subsequent techniques used in \cite{DNP09-podc} and \cite{AtishGP08} are very different.
%

%
%%%%%%%%
%%%%%%%%
%
%\danupon{Before:\\
%The first, and only work prior to~\cite{DNP09-podc} that uses the
%same general approach is the  paper of Das Sarma et
%al.~\cite{AtishGP08}. They consider the problem of finding random
%walks in data streams with the main motivation of finding PageRank.
%The same general idea of stitching together short walks is used.
%They consider the model where the graph is too big to store in main
%memory, and the algorithm has {\em streaming} access to the edges of
%the graph while maintaining limited storage. They show how to
%perform $\ell$ length random walks in about $\sqrt{\ell}$ passes
%over the data. This improves upon the naive $\ell$ pass approach and
%thereby leads to improved algorithms for estimating PageRank
%vectors. The distributed setting considered in this paper has very
%different constraints and motivations from the streaming setting and
%calls for new techniques.}

%Besides our approach of speeding up the random walk itself, one
%related motivation is to reduce the {\it cover time}. Recently, Alon
%et. al.~\cite{AAKKLT} show that performing several random walks in
%parallel reduces the cover time in various types of graphs. They
%assert that the problem with performing random walks is often the
%latency.
%
Recently, Sami and Twigg~\cite{ST08} consider lower bounds on the
communication complexity of computing stationary distribution of
random walks in a network. Although, their problem is  related to
our problem, the lower bounds obtained do not  imply anything in our
setting. Other recent works involving multiple random walks in different settings include Alon
et. al.~\cite{AAKKLT}, and Cooper et al. \cite{frieze}.
%
%Also recently, Cooper et al. \cite{frieze} show various results
%related to multiple random walks on random regular graphs.

\subsection{Our Results}

\squishlist
\item {\bf A Fast Distributed Random Walk Algorithm:} We present a sublinear, almost time-optimal, distributed algorithm for the single random walk
problem in arbitrary networks that runs in  time $\tilde{O}(\sqrt{\ell D})$, where $\ell$ is the length of the walk (cf. Section \ref{sec:upperbound}). This is a
significant improvement over the naive $\ell$-round algorithm for $\ell = \Omega(D)$ as well
as  over the previous best running time of
$\tilde{O}(\ell^{2/3}D^{1/3})$  \cite{DNP09-podc}. The dependence on $\ell$
is reduced from $\ell^{2/3}$ to $\ell^{1/2}$.
%(which is optimal).
%In such cases, our algorithm finishes in $\tilde{O}(\ell^{1/2})$ rounds,
%significantly better than the previous best bound of $\tilde{O}(\ell^{2/3})$.
%Gopal, commented above sentence again based on Prasad's suggestion.

%\iffalse

Our algorithm in this paper uses an approach similar to that of
\cite{DNP09-podc} but  exploits certain key properties of random
walks  to design an even faster sublinear time algorithm. Our algorithm is randomized (Las Vegas type, i.e., it always  outputs the correct result, but the running  time claimed is with high probability) and is conceptually simpler compared to the  $\tilde{O}(\ell^{2/3}D^{1/3})$-round algorithm (whose running time is deterministic). While the previous (slower) algorithm \cite{DNP09-podc} applies to the more general Metropolis-Hastings walk, in this work we focus primarily on the simple random walk for the sake of obtaining the best possible bounds in this commonly used setting.
%
%However, whereas, the present algorithm works for only for the standard random walk, the previous slower algorithm applies also to a more general Metropolis-Hastings type random walk \cite{DNP09-podc}.

   One of the key ingredients in the improved algorithm is proving a bound on the number of times any node is visited in an $\ell$-length walk, for any  length $\ell = O(m^2)$. We show that w.h.p. any node $x$ is visited at most $\tilde{O}(d(x)\sqrt{\ell})$ times, in an $\ell$-length walk from any starting node ($d(x)$ is the degree of $x$).  We then show that if only certain $\ell/\lambda$ special  points of the walk (called as {\em connector points}) are observed,
%   (where we spread these {\em special} connector points evenly and randomly perturb them slightly),
then any node is observed only $\tilde{O}(d(x)\sqrt{\ell}/\lambda)$ times. The algorithm starts with all nodes performing short walks (of length uniformly random in the range $\lambda$ to $2\lambda$ for appropriately chosen $\lambda$) efficiently simultaneously; here the randomly chosen lengths play a crucial role in arguing about a suitable spread of the connector points.   Subsequently, the algorithm begins at the source and carefully stitches these walks together till $\ell$ steps are completed.
%If all short walks from a node are exhausted (a low probability event), then an efficient approach to sample several walks from one source is used.
%If the low probability event of a node being visited too many times occurs, then an efficient algorithm to generate several shorts walks from a single node is used.  %(Section~\ref{sec:one_walk_DoS}.)
%
%\fi

We also extend to give algorithms for computing $k$ random walks (from any $k$ sources
 ---not necessarily distinct) in $\tilde O\left(\min(\sqrt{k\ell D}+k, k+\ell)\right)$ rounds. Computing $k$ random
walks is useful in many applications such as the one we present below on
decentralized computation of mixing time and related parameters. While the main requirement of our algorithms is to just obtain the random walk samples (i.e. the end point of the $\ell$ step walk), our algorithms can regenerate the entire walks such that each node knows its position(s) among the $\ell$ steps.
Our algorithm
can  be extended to do this in the same number of rounds.

\item {\bf A Lower Bound:} We establish an almost matching lower bound on the running time of distributed random walk that applies to
a general class of distributed random walk algorithms. We show that any algorithm belonging to the class
needs at least
$\Omega(\sqrt{\frac{\ell}{\log \ell}} + D)$  rounds to perform a random walk of
length $\ell$; notice that this lower bound is nontrivial even in graphs of  small ($D = O(\log n)$)  diameter (cf. Section \ref{sec:lowerbound}).
 Broadly speaking,  we consider   a class of
 token forwarding-type algorithms where nodes can only store and (selectively) forward  tokens (here tokens are $O(\log n)$-sized messages consisting of two node ids  identifying the
beginning and end of a segment --- we make this more precise in Section \ref{sec:lowerbound}). Selective forwarding (more general than just store and forwarding) means that
nodes can omit to forward certain segments (to reduce number of messages), but they cannot alter   tokens in any way (e.g., resort
to data compression techniques). This class includes many natural algorithms, including the algorithm in this paper.
%This shows that our algorithm is existentially optimal (up to polylogarithmic factors) and in
%general, no other algorithm can do better.

Our technique involves showing the same non-trivial lower bound  for a problem that we call {\em path verification}. This simpler problem appears quite basic and can have other applications. Informally, given a graph $G$ and a sequence of $\ell$ vertices in
the graph, the problem is for some (source) node in the graph to
verify that the sequence forms a path.
%We show by constructing a graph for any choice of $D$ and $\ell$ that any algorithm in a natural class of distributed algorithms needs $\Omega(\sqrt{\frac{\ell}{\log \ell}})$ rounds to solve path verification.
One main idea in this proof is to show that independent nodes may be able to verify short {\em local} paths; however, to be able to {\em merge} these together and verify an $\ell$-length path would require exchanging several messages. The trade-off is between the lengths of the local paths that are verified and the number of such local paths that need to be combined.  Locally verified paths can be exchanged in one round, and messages can be exchanged at all nodes.  Despite this, we show that the bandwidth restriction necessitates a large number of rounds even if the diameter is small.
We then show a  reduction to the random walk
problem, where we require that each node in the walk should know its (correct) position(s) in the walk.

\iffalse
The path verification problem captures the ``degree" of {\em locality}
needed for the random walk problem. The previous
work~\cite{DNP09-podc} conjectured that the {\em correct} time bound
for the random walk problem is $\tilde{\Theta}(\sqrt{\ell D})$.  The
intuition behind such conjecture is that if we perform short walks
of length $\alpha$ (which takes $\alpha$ rounds) and then  stitch
them, then we would need $\frac{\ell}{\alpha}$ such walks, each
stitch taking $D$ rounds. Therefore, one ends up with $(\alpha +
\frac{\ell D}{\alpha})$ rounds.  The same intuition holds for the
path verification problem, even though the sequence of vertices is
given: one can ``locally" verify sub-paths of length $\lambda$  (in
a sequential manner)
  and then send these  verified subpaths to the source; these $\ell/\lambda$ subpaths will cause congestion amounting
  to $\ell/\lambda$ rounds.
\fi
% from the path verification problem.

Similar non-trivial matching lower bounds on running time  are known only for
a few important problems in distributed computing, notably the
minimum spanning tree problem (e.g., see \cite{peleg-bound, elkin}).
Peleg and Rabinovich \cite{peleg-bound} showed that
$\tilde{\Omega}(\sqrt{n})$ time is required for constructing an MST
even on graphs of small diameter (for any $D=\Omega(\log n)$) and \cite{kutten-domset} showed an essentially matching upper bound.
%this is essentially the best possible \cite{kutten-domset}.

\item {\bf Applications:} Our faster distributed random walk algorithm can be used in speeding up
distributed applications where  random walks arise as a subroutine.
Such applications include distributed construction of expander graphs,
checking whether a graph is an expander, construction of random spanning trees, and random-walk based search (we refer to \cite{DNP09-podc} for details).
Here, we present two key applications:

(1) {\em A Fast Distributed Algorithm for Random Spanning Trees (RST):}
We give a $\tilde{O}(\sqrt{m}D)$ time distributed algorithm (cf. Section \ref{sec:rst}) for uniformly sampling a random spanning tree in an arbitrary undirected
(unweighted) graph (i.e., each spanning tree in the underlying network has the same probability of being selected). ($m$ denotes the number of edges in the graph.)
%To the best of our knowledge this is the fastest known distributed RST algorithm.
Spanning trees are fundamental network primitives
and distributed algorithms for various types of spanning trees such as minimum spanning tree (MST), breadth-first spanning tree (BFS), shortest path tree,
shallow-light trees etc., have been studied extensively in the literature \cite{peleg}. However, not much is known about the distributed complexity
of the random spanning tree problem.
%(in fact, this is true to some extent in
The centralized case
%also, which
has been studied for many decades, see e.g.,
the recent work of \cite{kelner-madry} and the references therein; also see the recent work of Goyal et al.
\cite{goyal} which gives nice applications of RST to fault-tolerant routing
and constructing expanders.  In the distributed context, the work
of Bar-Ilan and Zernik \cite{bar-ilan} give a distributed RST algorithm for
two  special cases, namely that of a complete graph (running in constant time) and a synchronous ring (running in  $O(n)$ time).
The work of \cite{Baala}
give a self-stablizing distributed algorithm for constructing a RST in a wireless ad hoc network and mentions that RST is more resilient to transient
failures that occur in mobile ad hoc networks.
%None of these previous works,
%however, address the distributed complexity of constructing an RST in a general network.

Our algorithm works by giving an efficient distributed implementation
of the well-known Aldous-Broder random walk algorithm \cite{aldous, broder} for constructing a RST.

(2) {\em Decentralized Computation of Mixing Time.} We present a fast decentralized algorithm for estimating mixing time, conductance and spectral gap of the network (cf. \ref{sec:mixingtime}). In
particular, we show that given a starting point $x$, the mixing time with respect to $x$, called $\tau^x_{mix}$, can be
estimated in $\tilde{O}(n^{1/2} + n^{1/4}\sqrt{D\tau^x_{mix}})$ rounds. This gives an alternative algorithm to the only previously known
approach by Kempe and McSherry \cite{kempe} that can be used to estimate
$\tau^x_{mix}$ in $\tilde O(\tau^x_{mix})$ rounds.\footnote{Note
that \cite{kempe} in fact do more and give a decentralized algorithm for
computing the top $k$ eigenvectors of a weighted adjacency matrix
that runs in $O(\tau_{mix}\log^2 n)$ rounds if two adjacent nodes are allowed to exchange $O(k^3)$ messages per round, where $\tau_{mix}$ is
the mixing time and $n$ is the size of the network.}  To compare,  we note that
when $\tau^x_{mix} = \omega(n^{1/2})$ the present algorithm is faster (assuming
$D$ is not too large).

%(GOPAL-- We have to check Kempe's running time in the footnote ,should $k^3$ come in the rounds?)

  The
work of \cite{mihail-topaware}  discusses spectral algorithms for
enhancing the topology awareness, e.g., by identifying and assigning
weights to critical links. However, the algorithms are centralized,
and it is mentioned that obtaining efficient decentralized
algorithms is a major open problem. Our algorithms are fully decentralized
and  based on performing random walks,
and so more amenable to dynamic and self-organizing networks.
\squishend

%\atish{Should we expand here on the techniques of the upper and lower bounds?}
%\atish{Should we somewhere admit that if the distributed network is expanding/rapidly mixing (as is the case sometimes with real world networks), then $\ell \approx D$ and so there are easy ways to sample from walks of length $\ell \gg D$, namely sample from degree distribution?}

\section{A Sublinear Time Distributed Random Walk Algorithm}\label{sec:one_walk_DoS}
\label{sec:upperbound}
\subsection{Description of the Algorithm}
%To better understand our algorithm, we
We first describe the
$\tilde{O}(\ell^{2/3}D^{1/3})$-round algorithm in \cite{DNP09-podc}
and then highlight the changes in our current algorithm. The current
algorithm is randomized and uses several new ideas that are crucial in obtaining the new bound.

The high-level idea is to perform ``many" short random walks
in parallel and later stitch them
together as needed (see Figure~\ref{fig:connector} in Appendix).
In the first phase of the algorithm {\sc
Single-Random-Walk} (we refer to Appendix for pseudocodes of all
algorithms and subroutines), each node performs $\eta$ independent
random walks of length $\lambda$. (Only the destination of each of
these walks is aware of its source, but the sources do not know
destinations right away.) It is shown that this takes
$\tilde{O}(\eta\lambda)$ rounds with high probability. Subsequently,
the source node that requires a walk of length $\ell$ extends a walk
of length $\lambda$ by ``stitching'' walks.  If the end point of the
first $\lambda$ length walk is $u$, one of $u$'s $\lambda$ length
walks is used to extend. When at $u$, one of its $\lambda$-length walk destinations are
sampled uniformly (to preserve randomness) using {\sc Sample-Destination} in $O(D)$ rounds.  (We call
such $u$ and other nodes at the stitching points as {\em connectors}
--- cf. Algorithm 1.) Each stitch takes $O(D)$ rounds (via the shortest path). This process is extended as long as unused
$\lambda$-length walks are available from visited nodes. If the walk
reaches a node $v$ where all $\eta$ walks have been used up (which is a key difficulty), then
{\sc Get-More-Walks} is invoked. {\sc Get-More-Walks} performs
$\eta$ more walks of length $\lambda$ from $v$, and this can be done in $\tilde{O}(\lambda)$ rounds. The number of
times {\sc Get-More-Walks} is invoked can be bounded by
$\frac{\ell}{\eta\lambda}$ in the worst case by an amortization argument.  The
overall bound on the algorithm is  $O(\eta\lambda + \ell D/\lambda +
\frac{\ell}{\eta})$. The bound of $\tilde{O}(\ell^{2/3}D^{1/3})$ follows from appropriate choice of parameters $\eta$ and $\lambda$.

%\gopal{ It will be great to give a figure here, which will just figuratively show what we just describe. This can be very helpful for the reader, e.g., connector etc.}

The current algorithm uses two crucial ideas to improve the running time.
The first  idea is
 to bound the number of times any node is visited in a random walk of length
 $\ell$ (in other words, the number of times
{\sc Get-More-Walks} is invoked). Instead of the worst case analysis
in \cite{DNP09-podc}, the new bound is obtained by bounding the number
of times any node is visited (with high probability) in a random
walk of length $\ell$ on an undirected unweighted graph. The number
of visits to a node beyond the mixing time can be bounded using its
stationary probability distribution. However, we need a bound on the
visits to a node  for any $\ell$-length walk starting from the first step.
 We show a somewhat surprising bound that applies to an $\ell$-length (for $\ell = O(m^2)$)  random walk on any arbitrary (undirected) graph: {\em no node $x$ is visited more than
 $\tilde{O}(d(x)\sqrt{\ell})$ times}, in an $\ell$-length walk from any starting node ($d(x)$ is the degree of $x$)   (cf. Lemma~\ref{lemma:visits bound}).
 Note that this bound does not depend on any other parameter of the graph, just on the (local) degree of the node and the length of the walk. This bound is
 tight in general (e.g., consider a line and a walk of length $n$).

The above bound is not enough to get the desired running time, as it does not say anything about the
 distribution of connectors when we chop the length $\ell$ walk into $\ell/\lambda$ pieces.
 We have to bound the number
of visits to a node as a connector in order to bound the number of
times {\sc Get-More-Walks} is invoked. To overcome this we use a second idea:
  Instead of nodes performing walks of length $\lambda$, each such walk $i$ is of length $\lambda+r_i$ where $r_i$ is a random number in the range $[0,\lambda-1]$. Notice that the random numbers are independent for each walk.
   We show  the following ``uniformity lemma":  if  the short walks are now of a random length in the range of $[\lambda, 2\lambda-1]$, then if a node $u$ is visited at most $N_u$ times  in an $\ell$ step walk, then the node is visited at most $\tilde{O}(N_u/\lambda)$ times as an endpoint of a short walk  (cf. Lemma \ref{lem:uniformityused}).  This modification to {\sc Single-Random-Walk} allows us to
bound the number of visits to each node (cf.
Lemma~\ref{lem:uniformityused}).

%
%This follows from a result of Gillman~\cite{Gillman98}.
%This follows from the theorem we developed in the next section (cf.
%Theorem~\ref{thm:1-walk}). The reason this is crucial is because we
%want to obtain a tighter bound on the number of times {\sc
%Get-More-Walks} will be invoked. Unfortunately, Gillman's bound
%turns out to be too weak to directly improve the
%$\tilde{O}(\ell^{2/3}D^{1/3})$-round guarantee. The goal, therefore,
%is to obtain a tighter bound on the number of times a node is
%visited as the end point of a short walk (as this is the only case
%where we may need to invoke {\sc Get-More-Walks}).
%
\iffalse
To get around this difficulty, we make a simple yet insightful
modification to {\sc Single-Random-Walk}. Instead of nodes
performing short walks of length $\lambda$, each such short walk $i$
is of length $\lambda+r_i$ where $r_i$ is a random number in the
range $[0,\lambda-1]$. (Notice that the random numbers are
independent for each short walk.) This modification allows us to
bound the number of visits to each node (cf.
Lemma~\ref{lem:uniformityused}).
\fi

The change of the short walk length above leads to two modifications
in Phase~1 of {\sc Single-Random-Walk} and {\sc Get-More-Walks}. In
Phase~1, generating $\eta$ walks of different lengths from each node
is straightforward: Each node simply sends $\eta$ tokens containing
the source ID and the desired length. The nodes keep forwarding
these tokens with decreased desired walk length until the desired
length becomes zero.
The modification of {\sc Get-More-Walks} is tricker. To avoid
congestion, we use the idea of {\em reservoir
sampling}~\cite{Vitter85}. In particular, we add the following
process at the end of the {\sc Get-More-Walks} algorithm in~\cite{DNP09-podc}:

%\noindent
%\fbox{
%\begin{minipage}{\linewidth} \noindent
\begin{quote}
\begin{algorithmic}
\FOR{$i=0$ to $\lambda-1$}

\STATE \label{line:reservoir} For each message, independently with
probability $\frac{1}{\lambda-i}$, stop sending the message further
and save the ID of the source node (in this event, the node with the
message is the destination). For messages $M$ that are not stopped,
each node picks a neighbor correspondingly and sends the messages
forward as before.

\ENDFOR
\end{algorithmic}
\end{quote}
%\end{minipage}%}
%\\

%\noindent\fbox{ \begin{minipage}{\linewidth} \noindent
%\begin{algorithmic}
%\FOR{$i=0$ to $\lambda-1$}
%
%\STATE \label{line:reservoir} For each message, independently with
%probability $\frac{1}{\lambda-i}$, stop sending the message further
%and save the ID of the source node (in this event, the node with the
%message is the destination). For messages $M$ that are not stopped,
%each node picks a neighbor correspondingly and sends the messages
%forward as before.
%
%\ENDFOR
%\end{algorithmic}
%\end{minipage}}\\

The reason it needs to be done this way is that if we first sampled
the walk length $r$, independently for each walk, in the range
$[0,\lambda-1]$ and then extended each walk accordingly, the
algorithm would need to pass $r$ independently for each walk. This
will cause congestion along the edges; no congestion occurs in the
mentioned algorithm as only the {\em count} of the number of walks
along an edge are passed to the node across the edge. Therefore, we
need to decide when to stop on the fly using  reservoir sampling.

%First, in Phase~1 of {Single-Random-Walk} we create $\eta\deg v$
%walks from each node $v$, instead of $\eta$ walks, and
%
%Second, in {\sc Get-More-Walk}, we create $\ell/\lambda$ walks
%instead of $\eta$ walks and show that it still uses $O(D)$ rounds
%(cf. Lemma~\ref{lem:get-more-walks}).

We also have to make another modification in Phase~1 due to the new
bound on the number of visits. Recall that, in this phase, each node
prepares $\eta$ walks of length $\lambda$. However, since the new
bound of visits of each node $x$ is proportional to its degree
$d(x)$ (see Lemma~\ref{lemma:visits bound}), we make each node
prepare $\eta d(x)$ walks instead. We show that Phase~1 uses $\tilde
O(\eta\lambda)$ rounds, instead of $\tilde O(\frac{\lambda
\eta}{\delta})$ rounds where $\delta$ is the minimum degree in the
graph (cf. Lemma~\ref{lem:Sample-Destination}).

To summarize, the main algorithm for performing a single random walk
is {\sc Single-Random-Walk}. This algorithm, in turn, uses {\sc
Get-More-Walks} and {\sc Sample-Destination}.
The key modification is that, instead of creating short walks of
length $\lambda$ each, we create short walks where each walk has
length in range $[\lambda, 2\lambda-1]$. To do this, we modify the
Phase~1 of {\sc Single-Random-Walk} and {\sc Get-More-Walks}.

We now state four lemmas which are  similar to the Lemma~2.2-2.6
in \cite{DNP09-podc}. However, since the algorithm
here is a modification of that in \cite{DNP09-podc}, we
include the full proofs in Appendix~\ref{sec:four-proofs}.

%{\bf TODO: Put the proofs of the following three lemmas in the Appendix}
%with a slight modification in the
%algorithm, we are able to bound the number of times any node is
%visited (based on the length of the walk). This in turn allows us to
%bound the number of times {\sc Get-More-Walks} will be required, and
%consequently get an improved result. Since Phase~1 of the algorithm
%is essentially the same as that in \cite{DNP09-podc}, we only state
%the result here with the proof omitted.

\begin{lemma} \label{lem:phase1}
Phase~1 finishes in $O(\lambda \eta \log{n})$ rounds with high
probability.
\end{lemma}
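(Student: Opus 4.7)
The plan is to bound, at each time step, the congestion (number of walk tokens) on every edge, and then conclude by summing over the at most $2\lambda$ simulated steps. The critical point is that each node $x$ starts exactly $\eta d(x)$ independent short walks; this degree-weighted initial load is proportional to the stationary distribution of the simple random walk, and this is what keeps the expected flow on every edge uniformly bounded.

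More concretely, I would introduce $w_u(t)$ for the number of live walks at node $u$ after step $t$, and $F_{uv}(t)$ for the number of walks crossing edge $(u,v)$ from $u$ to $v$ during step $t$. Using reversibility of the simple random walk, $d(x)\,p_t(x,u)=d(u)\,p_t(u,x)$, so
\[
\mathbb{E}[w_u(t)] \;=\; \sum_{x\in V} \eta\, d(x)\, p_t(x,u) \;=\; \eta\, d(u)\sum_x p_t(u,x) \;=\; \eta\, d(u),
\]
with this being only an upper bound once some walks have terminated because their random length $r_i\in[\lambda,2\lambda-1]$ was reached. Writing $F_{uv}(t)=\sum_{i=1}^{W} Y_i^{(t)}$, where $W=\sum_x \eta d(x)=2\eta m$ and $Y_i^{(t)}$ is the indicator that walk $i$ is still alive at $u$ at time $t$ and then picks $v$, the same reversibility calculation gives $\mathbb{E}[F_{uv}(t)]\le \eta$.

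Since the $W$ walks are mutually independent, the $Y_i^{(t)}$ are independent indicators, so a standard Chernoff bound yields $F_{uv}(t)+F_{vu}(t) = O(\eta\log n)$ with probability at least $1-n^{-c}$ for any desired constant $c$. A union bound over the $m$ edges and the $O(\lambda)$ time steps then guarantees this congestion bound on every edge at every step simultaneously, w.h.p. (I would need $\lambda \le \mathrm{poly}(n)$, which is the regime of interest.)

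Given this uniform per-step, per-edge congestion bound, one simulated random-walk step is realized by each node forwarding its outgoing tokens one per round along each incident edge in round-robin order; this takes $O(\eta\log n)$ rounds. Summing over the at most $2\lambda$ steps of Phase 1 gives a total running time of $O(\lambda\eta\log n)$ rounds w.h.p., as claimed. The main technical obstacle is the expectation computation of Step 1: it is the degree-weighted choice of $\eta d(x)$ walks per node (rather than $\eta$ per node) that makes the expected flow exactly $\eta$ on every edge regardless of local degrees, and this is what prevents high-degree vertices from becoming a congestion bottleneck.
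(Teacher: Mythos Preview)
Your proposal is correct and follows essentially the same approach as the paper's proof: both exploit that starting $\eta d(x)$ walks per node means the initial load is proportional to the stationary distribution, so the expected per-edge flow at every step is $O(\eta)$; then Chernoff plus a union bound over edges and steps gives $O(\eta\log n)$ congestion per step w.h.p., and summing over the at most $2\lambda$ steps yields the claim. The only cosmetic difference is that you invoke reversibility $d(x)p_t(x,u)=d(u)p_t(u,x)$ to compute $\mathbb{E}[w_u(t)]$, whereas the paper phrases the same fact as ``$Au=u$ for the stationary vector $u$''; and you handle the variable walk lengths by noting that terminated walks only decrease the expectation, whereas the paper simply upper-bounds by the case where all walks run the full $2\lambda$ steps.
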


%\textbf{Danupon}: The bound above is not actually proved in the PODC
%paper although the proof of this lemma is the same as the proof of
%lemma in PODC paper.

%It is left to bound the number of rounds used by Phase~2. First, we
%recall the following facts proved by

\begin{lemma}\label{lem:get-more-walks}
For any $v$, {\sc Get-More-Walks($v$, $\eta$, $\lambda$)} always
finishes within $O(\lambda)$ rounds.
\end{lemma}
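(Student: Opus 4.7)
The plan is to split the analysis of {\sc Get-More-Walks}($v$, $\eta$, $\lambda$) into two phases, each of which I will argue contributes exactly (deterministically) $O(\lambda)$ rounds.

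First I would analyze the basic-walk phase, in which $\eta$ walks of length $\lambda$ are launched from $v$. The key observation is that all $\eta$ walks share the same source, so we never need to send $\eta$ distinct tokens along an edge. In round $1$, $v$ samples, for each neighbor $u$, a count $c_{vu}$ distributed according to the multinomial that assigns each of the $\eta$ walks to a neighbor with probability $1/d(v)$, and sends along edge $(v,u)$ a single $O(\log n)$-bit message carrying $v$'s ID and the integer $c_{vu}\le \eta\le n$. Inductively, in each subsequent round, any node $x$ receiving counts from its neighbors sums them to a total $c$, samples a multinomial $(c;1/d(x),\ldots,1/d(x))$ to decide how many walks continue to each neighbor, and sends one $O(\log n)$ message per outgoing edge. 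Since every round of the algorithm advances every live walk by one step, and at most one message traverses each edge per direction per round, this phase finishes in exactly $\lambda$ rounds with no congestion.

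Next I would handle the reservoir-sampling tail that extends each walk by an independent uniform amount in $\{0,1,\ldots,\lambda-1\}$. For $i=0,1,\ldots,\lambda-1$ the algorithm performs one round of the displayed loop: at each node $x$ currently holding a count $c$ of live walks, the number that halt at $x$ is drawn from $\mathrm{Bin}(c,\tfrac{1}{\lambda-i})$ (those destination IDs are recorded locally), and the remaining walks are advanced one more step using the exact same count-propagation primitive as in the first phase. Each iteration is one round, so this phase contributes an additional $\lambda$ rounds. Correctness of the length distribution is the standard reservoir identity
\[
\Pr[\text{stop at step } i] \;=\; \Bigl(\prod_{j=0}^{i-1}\bigl(1-\tfrac{1}{\lambda-j}\bigr)\Bigr)\cdot\tfrac{1}{\lambda-i} \;=\; \tfrac{1}{\lambda},
\]
which gives uniformity over $\{0,\ldots,\lambda-1\}$ and independence across walks, matching the distribution we would have obtained by sampling each $r_i$ up front.

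Putting the two phases together gives a deterministic bound of $2\lambda = O(\lambda)$ rounds, which is what the lemma asserts. The only real obstacle is verifying that at no point in either phase do we need to transmit per-walk identifiers: since $v$ is the common source and only aggregate counts of walks crossing each edge influence the local sampling step, $O(\log n)$-bit messages suffice, and consequently no edge ever sees more than one message per round. Once this point is made precise, the round count is immediate from the fact that each round of the algorithm corresponds to exactly one step of every still-live walk.
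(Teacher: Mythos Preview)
Your proposal is correct and follows essentially the same argument as the paper: the key point in both is that because all walks in {\sc Get-More-Walks} share the single source $v$, each node need only transmit $v$'s ID together with a per-edge \emph{count} rather than per-walk tokens, so every edge carries one $O(\log n)$-bit message per round and there is no congestion; hence each of the (at most) $2\lambda$ walk steps costs one round. Your write-up is more explicit about the multinomial/binomial sampling mechanics and also folds in the reservoir-sampling correctness calculation (which the paper handles separately in the proof of Lemma~\ref{lem:correctness-sample-destination-new}), but the route to the $O(\lambda)$ bound is the same.
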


\begin{lemma}\label{lem:Sample-Destination}
{\sc Sample-Destination} always finishes within $O(D)$ rounds.
\end{lemma}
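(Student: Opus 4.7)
My plan for proving Lemma~\ref{lem:Sample-Destination} is to show that {\sc Sample-Destination}, when invoked at a node $v$, can return a uniformly random endpoint of $v$'s precomputed short walks by performing one convergecast followed by a single downward notification along a BFS tree rooted at $v$. Such a tree can be constructed in $O(D)$ rounds as preprocessing for the call, and Phase~1 already guarantees that each destination node $w$ stores, for every source $s$ whose walks ended at $w$, the count $c_s(w)$ of $s$'s short walks terminating at $w$ (since the destination of every Phase~1 walk is aware of its source).

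The heart of the algorithm I would analyze is a weighted-reservoir-style combining rule along the convergecast. Each leaf $w$ of the BFS tree sends its parent the pair $(N_w, \mathrm{id}_w) = (c_v(w), w)$, and $\bot$ if $c_v(w) = 0$. An internal node $u$ with children $u_1,\dots,u_k$ waits to receive all $(N_{u_i}, \mathrm{id}_{u_i})$, sets $N_u = c_v(u) + \sum_i N_{u_i}$, picks $\mathrm{id}_u$ to equal $u$ with probability $c_v(u)/N_u$ and $\mathrm{id}_{u_i}$ with probability $N_{u_i}/N_u$, and forwards $(N_u, \mathrm{id}_u)$ to its parent. A short induction on subtree height would establish the invariant that, conditioned on $N_u$, $\mathrm{id}_u$ is uniformly distributed over all endpoints of $v$'s short walks inside the subtree rooted at $u$; hence the value finally received at $v$ is the desired uniform sample. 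A subsequent single downward message routed along the tree then informs the chosen destination that it has been selected, so the main walk can continue from there.

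For the running time, each BFS-tree edge carries at most one $O(\log n)$-bit message per round during the upsweep: the counter $N_u$ is bounded by the total number of short walks (which is polynomial in $n$), and the ID field fits in $O(\log n)$ bits. By the standard pipelined convergecast argument on a tree of depth at most $D$, the upsweep finishes in $O(D)$ rounds, and the downward notification adds another $O(D)$ rounds, giving the claimed bound; crucially, this is a deterministic worst-case bound, matching the ``always'' in the lemma statement. The main technical point requiring care is formalizing the inductive invariant for the sampling rule, checking that the per-node random choice yields the exact uniform distribution on all short-walk endpoints of $v$ weighted by multiplicities $c_v(w)$ rather than some biased distribution favoring shallower parts of the tree; bandwidth is not an obstacle since at most one $O(\log n)$-bit message crosses each tree edge per round.
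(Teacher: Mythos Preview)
Your proposal is correct and mirrors the paper's argument: construct a BFS tree rooted at $v$ in $O(D)$ rounds; perform an upsweep in which every node forwards to its parent a single $O(\log n)$-bit pair consisting of a token sampled proportionally to counts together with the aggregated count, so there is no congestion and the sweep takes $O(\text{depth})=O(D)$ rounds; then spend another $O(D)$ rounds notifying the chosen destination (the paper does this by a broadcast in Sweep~3). Your inclusion of the inductive uniformity invariant is more than the lemma requires---in the paper that is the content of the separate correctness lemma (Lemma~\ref{lem:correctness-sample-destination-new})---but it is correct and does no harm; also note that no pipelining is actually needed since each tree edge carries exactly one message in the upsweep.
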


%{\bf Gopal:} Here we should state the lemma from PODC paper that says that the algorithm is correct, that is, the sample
%that is returned is from the distribution of walk for length $\ell$.  We say that the correctness follows from
%PODC.  Then say, that the rest of the section is
%devoted to running time analysis.

\begin{lemma}\label{lem:correctness-sample-destination-new}
Algorithm {\sc Sample-Destination}($v$) (cf.
Algorithm~\ref{alg:Sample-Destination}) returns a destination from
a random walk whose length is uniform in the range $[\lambda,2\lambda-1]$.
\end{lemma}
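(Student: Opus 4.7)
The plan is to split the claim into two pieces: first, that the pool of short walks rooted at $v$ consists of i.i.d.\ samples from the distribution of a random walk started at $v$ whose length is uniform on $\{\lambda, \lambda+1, \ldots, 2\lambda-1\}$; and second, that {\sc Sample-Destination}($v$) merely selects one element of this pool uniformly at random and delivers its endpoint ID back to $v$. Given the first piece, the second immediately yields the stated marginal distribution for the output.

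For the first piece, I would handle the two walk-creation mechanisms separately. In Phase~1 of {\sc Single-Random-Walk}, each of the $\eta\, d(v)$ tokens is launched with an explicit target length $\lambda+r$, with $r$ drawn uniformly from $\{0,\ldots,\lambda-1\}$, and then forwarded along independent uniform neighbor choices while the counter is decremented, so the resulting destination is by construction a sample from the target mixture. The more delicate case is {\sc Get-More-Walks}, where the extension length is not transmitted along the walk but is instead chosen on the fly via reservoir sampling: after the initial $\lambda$ honest hops, iteration $i\in\{0,\ldots,\lambda-1\}$ halts the token independently with probability $1/(\lambda-i)$ and otherwise takes one more uniformly random neighbor step. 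A short telescoping computation gives
$$\Pr[\text{halt at iteration } j] \;=\; \frac{1}{\lambda-j}\prod_{i=0}^{j-1}\Bigl(1-\frac{1}{\lambda-i}\Bigr) \;=\; \frac{1}{\lambda-j}\prod_{i=0}^{j-1}\frac{\lambda-i-1}{\lambda-i} \;=\; \frac{1}{\lambda},$$
so the number of extra hops is uniform on $\{0,\ldots,\lambda-1\}$, i.e., the total length is uniform on $\{\lambda,\ldots,2\lambda-1\}$. Because each halt coin is drawn independently of the neighbor-sampling coins, conditioning on a halting time of $\lambda+j$ leaves the trajectory distributed as an honest $(\lambda+j)$-step random walk from the originating node, which is exactly what the target mixture prescribes.

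The second piece is inherited with only cosmetic changes from the analogous argument in~\cite{DNP09-podc}: an $O(D)$-round aggregation on a BFS tree rooted at $v$ selects one of the pre-computed walks at $v$ uniformly at random and returns its destination ID to $v$. Since the entries of the pool are i.i.d.\ samples from the target mixture by the first piece, a uniformly chosen entry is also a sample from that mixture, establishing the lemma. The only technical care required lies in the reservoir-sampling argument, specifically in keeping the halt coins independent of the neighbor-sampling coins so that no bias is introduced in the conditional distribution of the trajectory given its length; I expect this to be the main (mild) obstacle, but it is automatic from the way the algorithm is written.
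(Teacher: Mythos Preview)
Your proposal is correct and follows essentially the same two-part decomposition as the paper: (i) each stored short walk has length uniform in $[\lambda,2\lambda-1]$ (trivial for Phase~1, and via the telescoping reservoir-sampling computation for {\sc Get-More-Walks}), and (ii) {\sc Sample-Destination} picks one such walk uniformly at random, which is the content of Lemma~2.6 in~\cite{DNP09-podc}. Your explicit remark that the halt coins are independent of the neighbor-sampling coins---so that conditioning on the halting time leaves an honest random walk of that length---is a point the paper leaves implicit, but otherwise the arguments coincide.
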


%\begin{lemma}\label{lem:correctness-sample-destination}
%Algorithm {\sc Sample-Destination}($v$) (cf.
%Algorithm~\ref{alg:Sample-Destination}), for any node $v$, samples a
%destination of a walk uniformly at random.
%\end{lemma}

%In the following subsection, we give upper bounds on the number of times {\sc Get-More-Walks} will be invoked. This
%helps in obtaining better bounds in this paper.

%\input{upperbound} %%%% The old analysis used in SODA'10 submission can be found in the file "old-analysis-SODA10.tex".

\subsection{Analysis}
\label{sec:analysis}

The following theorem states the main result of this Section. It
states that the algorithm {\sc Single-Random-Walk} correctly samples
a node after a random walk of $\ell$ steps and the algorithm takes,
with high probability, $\tilde O\left(\sqrt{\ell D}\right)$ rounds
where $D$ is the diameter of the graph. Throughout this section, we assume that $\ell$ is $O(m^2)$, where $m$ is the number of edges in the network. If $\ell$ is $\Omega(m^2)$, the required bound is easily achieved by aggregating the graph topology (via upcast) onto one node in $O(m+D)$ rounds (e.g., see \cite{peleg}). The difficulty lies in proving for  $\ell = O(m^2) $.

\begin{theorem}\label{thm:1-walk}
For any $\ell$, Algorithm {\sc Single-Random-Walk} (cf.
Algorithm~\ref{alg:single-random-walk}) solves $1$-RW-DoS (the
Single Random Walk Problem) and, with probability at least
$1-\frac{2}{n}$,
finishes in $\tilde O\left(\sqrt{\ell D}\right)$ rounds.
\end{theorem}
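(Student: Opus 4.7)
My plan is to split the argument into correctness, which holds deterministically (the algorithm is Las Vegas), and running time, which must hold with probability at least $1-2/n$.

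For correctness, each short walk prepared in Phase~1 is a bona fide random walk whose length is $\lambda+r$ with $r$ independently and uniformly distributed in $[0,\lambda-1]$; by Lemma~\ref{lem:correctness-sample-destination-new}, each call to Sample-Destination at a visited node $v$ returns the endpoint of such a fresh short walk from $v$, and the walks used at different stitches are independent. By the strong Markov property, the concatenation of the stitched short walks is distributed exactly as a single random walk from $s$; the algorithm stops as soon as the cumulative length reaches $\ell$ (handling a short residual naively if the last stitch would overshoot), so the output is a genuine sample from the $\ell$-step walk distribution started at $s$.

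For the running time, I would set $\lambda = \lceil\sqrt{\ell D}\rceil$ and $\eta = \Theta(\log n)$, and decompose the rounds into three parts: (i) Phase~1 setup, (ii) the Sample-Destination calls that stitch consecutive short walks in the main loop, and (iii) any Get-More-Walks invocation triggered when a node runs out of precomputed walks. Lemma~\ref{lem:phase1} bounds (i) by $\tilde O(\eta\lambda)=\tilde O(\sqrt{\ell D})$. Since each consumed short walk has length at least $\lambda$, the main loop performs at most $\ell/\lambda$ stitches, each costing $O(D)$ rounds by Lemma~\ref{lem:Sample-Destination}, so (ii) contributes $O(\ell D/\lambda)=O(\sqrt{\ell D})$.

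The main obstacle is controlling (iii). Here I would combine Lemma~\ref{lemma:visits bound}, which says that w.h.p.\ every node $x$ is visited at most $\tilde O(d(x)\sqrt{\ell})$ times during the full $\ell$-step walk, with the uniformity Lemma~\ref{lem:uniformityused}, which then implies that $x$ is used as a connector at most $\tilde O(d(x)\sqrt{\ell}/\lambda)$ times. Because Phase~1 stockpiles $\eta\, d(x)$ short walks at $x$, this supply is never exhausted provided $\eta = \Omega(\log n\cdot\sqrt{\ell}/\lambda) = \Omega(\log n/\sqrt{D})$, which is satisfied by $\eta=\Theta(\log n)$. A union bound over the $n$ nodes shows that, outside an event of probability $\le 2/n$, Get-More-Walks is never invoked and (iii) contributes $0$ rounds; summing (i)--(iii) yields $\tilde O(\sqrt{\ell D})$ rounds w.h.p. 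The delicate point is that Lemmas~\ref{lemma:visits bound} and~\ref{lem:uniformityused} are high-probability statements about the same underlying walk together with the independent length randomness of Phase~1, so some care is required to couple the two bounds cleanly, take a single union bound, and absorb the logarithmic losses into the $\tilde O$ factors.
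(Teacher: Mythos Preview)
Your proposal follows essentially the same route as the paper: argue correctness via Lemma~\ref{lem:correctness-sample-destination-new} and the Markov property, then bound Phase~1 by Lemma~\ref{lem:phase1}, bound the stitching by Lemma~\ref{lem:Sample-Destination} times $O(\ell/\lambda)$, and use Lemmas~\ref{lemma:visits bound} and~\ref{lem:uniformityused} together with a union bound to show that {\sc Get-More-Walks} is never invoked with probability $\ge 1-2/n$.

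The only discrepancy is in how you allocate the polylog factors between $\eta$ and $\lambda$. The paper sets $\eta=1$ and pushes all the logarithms into $\lambda=24\sqrt{\ell D}(\log n)^3$. You instead set $\lambda=\lceil\sqrt{\ell D}\rceil$ and $\eta=\Theta(\log n)$, and assert this suffices because the connector requirement is $\eta=\Omega(\log n\cdot\sqrt{\ell}/\lambda)$. But combining Lemma~\ref{lemma:visits bound} (which contributes a $\log n$) with Lemma~\ref{lem:uniformityused} (which contributes $(\log n)^2$) gives a connector count of order $d(x)\sqrt{\ell}(\log n)^3/\lambda$, so the actual requirement is $\eta\lambda=\Omega(\sqrt{\ell}(\log n)^3)$. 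With your $\lambda=\sqrt{\ell D}$ and, say, $D=O(1)$, this forces $\eta=\Omega((\log n)^3)$, not $\Theta(\log n)$. This is a bookkeeping slip, not a structural gap: either take $\eta=\Theta((\log n)^3)$ or, as the paper does, absorb the polylog into $\lambda$ and keep $\eta=1$; Phase~1 still runs in $\tilde O(\eta\lambda)=\tilde O(\sqrt{\ell D})$ rounds either way.
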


\iffalse
A key difference in this paper from Das Sarma et
al.~\cite{DNP09-podc} is that, with a  crucial modification in the
algorithm, we are able to bound the number of times any node is
visited (based on the length of the walk). This in turn allows us to
bound the number of times {\sc Get-More-Walks} will be required, and
consequently get an improved result.

In \cite{DNP09-podc}, the running time of Phase~2 is argued by
bounding the number of times {\sc Get-More-Walks} is invoked in
total. In particular, it is shown that this algorithm is invoked at
most $\frac{\ell}{\eta\lambda}$ times. In this paper, we show that
with the modification of the algorithm presented here, one can prove
a stronger result, as follows.
\fi

We prove the above theorem using the following lemmas.
As mentioned earlier, to bound the number of times {\sc Get-More-Walks} is invoked, we  need a technical result on random walks that bounds
the number of times a node will be visited in a $\ell$-length random walk.
Consider a simple random walk on a connected undirected graph on $n$
vertices. Let $d(x)$ denote the degree of $x$, and let $m$ denote
the number of edges. Let $N_t^x(y)$ denote the number of visits to
vertex $y$ by time $t$, given the walk started at vertex
$x$.
Now, consider $k$ walks, each of length $\ell$, starting from (not
necessary distinct) nodes $x_1, x_2, \ldots, x_k$.
We show a key technical lemma (proof in Appendix ~\ref{sec:proof of visits bound}) that applies to a random walk on any graph:  With high probability, no vertex $y$ is visited more
than $24 d(x) \sqrt{k\ell+1}\log n + k$ times.

\begin{lemma}\label{lemma:visits bound}
For any nodes $x_1, x_2, \ldots, x_k$, and $\ell=O(m^2)$, \[\Pr\bigl(\exists y\ s.t.\
\sum_{i=1}^k N_\ell^{x_i}(y) \geq 24 d(x) \sqrt{k\ell+1}\log
n+k\bigr) \leq 1/n\,.\]
\end{lemma}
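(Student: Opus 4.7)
The plan is to prove the lemma in two stages followed by union bounds: first control the expected number of visits to any fixed vertex $y$ from any single starting vertex, then boost to a high-probability bound via concentration, and finally union over the $n$ choices of target and over the $k$ walks.

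For the expectation stage, I would first argue, by a coupling/strong-Markov argument, that starting at $y$ itself maximizes $E[N_\ell(y)]$, so it suffices to bound $E_y[N_\ell(y)]$. Writing the walk from $y$ as a sequence of i.i.d.\ excursions with return time $\tau_y^+$ of mean $2m/d(y)$ (the standard fact for simple random walks on undirected graphs), Wald's identity yields $E_y[N_\ell(y)] \le \ell\,d(y)/(2m)+1$. Under the hypothesis $\ell=O(m^2)$ one has $\ell/(2m)\le\sqrt{\ell}$, so this expectation is $O(d(y)\sqrt{\ell})$, already the right order of magnitude.

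For the concentration stage, let $T_1<T_2<\cdots$ be the successive visit times to $y$ and set $\tau_i:=T_{i+1}-T_i$; by the strong Markov property the $\tau_i$ are i.i.d.\ copies of $\tau_y^+$. Having $R:=24\,d(y)\sqrt{\ell+1}\log n$ visits by step $\ell$ forces $R-1$ independent return times to sum to at most $\ell$, which by the choice of $R$ is a factor $\Omega(\log n)$ below the mean $R\cdot 2m/d(y)$. I would turn this into a $1/n^3$ tail bound by a Chernoff-style moment-generating argument on $\sum\tau_i$, or equivalently by observing that the event requires many excursions to be much shorter than typical and applying a binomial/Bernstein estimate to the count of short excursions, using bounds on the probability that a single excursion falls below a threshold (obtainable from $\sum_{t\le T}p_t(y,y)$). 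A union bound over the $n$ targets gives $1/n^2$ per walk, and for the $k$ independent walks one can either union over walks (cost $\log k$ absorbed into $\log n$) or apply a Bernstein bound directly to the sum of independent $N_\ell^{x_i}(y)$; the additive $+k$ in the claim accounts for the time-$0$ visit of each walk to its own starting vertex, which lies outside the excursion framework.

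The main obstacle will be the concentration step, since $\tau_y^+$ can have very heavy tails (e.g., on lollipop graphs), and so directly controlling $E[e^{-\theta\tau_y^+}]$ is delicate. I expect the cleanest workaround uses the hypothesis $\ell=O(m^2)$ to truncate each excursion at $O(m^2)$ steps (which costs nothing for the event of interest) and to bound sub-typical excursion probabilities using only $d(y)$ and $m$, thereby avoiding any dependence on the spectral gap or mixing time of the graph.
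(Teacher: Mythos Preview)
There is a genuine gap in your expectation step. The claim that Wald's identity yields $E_y[N_\ell(y)]\le \ell\,d(y)/(2m)+1$ is false: the number of completed excursions $R$ by time $\ell$ satisfies $\{R\ge i\}=\{\tau_1+\cdots+\tau_i\le\ell\}\in\sigma(\tau_1,\dots,\tau_i)$, not $\sigma(\tau_1,\dots,\tau_{i-1})$, so $R$ is not a stopping time and Wald gives only the lower bound $E[R]\ge \ell/\mu-1$. Concretely, take $y$ the midpoint of a path on $n$ vertices: then $d(y)=2$, $m=n-1$, and your bound at $\ell=n$ is $\ell d(y)/(2m)+1\approx 2$, whereas $E_y[N_\ell(y)]=\sum_{t\le\ell}P^t(y,y)\asymp\sqrt{\ell}=\sqrt{n}$. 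This is exactly the regime where heavy right tails of $\tau_y^+$ make the renewal function far exceed $\ell/\mu$. The paper obtains $E_x[N_\ell(y)]\le 8d(y)\sqrt{\ell+1}$ not via excursions but via the on-diagonal heat-kernel bound $P^t(x,y)\le 4d(y)/\sqrt{t+1}$ for $t=O(m^2)$ (Lyons' estimate), summed over $t$; this is the step that genuinely uses the structure of reversible chains and the hypothesis $\ell=O(m^2)$.

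Your concentration plan also diverges from the paper and, as you yourself flag, runs into the heavy-tail problem (truncating at $O(m^2)$ helps upper tails of $\tau_i$, not the lower tail of $\sum\tau_i$ that you need). The paper sidesteps moment generating functions entirely: it proves a second-moment bound $E[(N_t^x(y))^2]\le E[N_t^x(y)]+O(d(y)^2 t)$, applies Chebyshev to get $\Pr(N_t^x(y)\ge 24d(y)\sqrt{t+1})\le 1/4$, and then amplifies to $1/n^2$ by splitting the event $\{N_t^x(y)\ge r^*\log n\}$ into $\log n$ \emph{independent} subwalks, each required to make $r^*$ visits in time $\le t$. Finally, for the $k$-walk statement your union-over-walks idea yields $k\cdot O(d(y)\sqrt{\ell}\log n)$ rather than $O(d(y)\sqrt{k\ell}\log n)$; the paper gets the correct $\sqrt{k\ell}$ dependence by stitching the $k$ walks into a single walk of length $k\ell$ from $y$ (concatenating the $y$-to-$y$ portions of each $W_i$) and invoking the single-walk bound, with the additive $+k$ accounting for the discarded prefixes.
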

\iffalse
\begin{proof}[Proof (Sketch)]
Due to space limitation, we sketch the key steps here. A full proof
can be found in Appendix~\ref{sec:proof of visits bound}.

First, we bound the expectation of the number of visits,
$\e[N_t^x(y)]$, to be at most $8d(y)\sqrt{t+1}$ (cf.
Proposition~\ref{proposition:first and second moment}). This is done
using Lyons' estimation (see Lemma~3.4 and Remark 4 in \cite{Lyons})
which bound the distribution at $y$ at the $k$-th step of the walk
to be at most $4d(y)/\sqrt{k+1}$.

Secondly, we use the above bound to show that, with high
probability, $N^x_t(y)$ is at most $24d(y)\sqrt{t+1}\log n$ times
(cf. Lemma~\ref{lemma:whp one walk one node bound}). To do this,
consider when $N^x_t(y)$ is visited more than the above number. We
divide the walk into $\log n$ independent subwalks, each visiting
$y$ exactly $24 d(y)\sqrt{t+1}$ times. We then argue using the
previous bound that, with high probability, one of these subwalks
must be longer than $t$, contradicting the fact that the walk is of
length $t$.

Finally, we extend the above result to show that $\sum_{i=1}^k
N_\ell^{x_i}(y)<24 d(x) \sqrt{k\ell+1}\log n+k$ with high
probability (cf. Lemma~\ref{lemma:k walks one node bound}). The
intuition is to view $k$ walks of length $\ell$ as one walk of
length $k\ell$. The previous bound immediately implies that
$\sum_{i=1}^k N_\ell^{x_i}(y) < 24 d(x) \sqrt{k\ell+1}\log n$. The
additional ``$+k$'' term is due to some technicalities.

By union bound the above bound over all nodes, the theorem follows.
\end{proof}
\fi

%
This lemma says that the number of visits to each node can be
bounded. However, for each node, we are only interested in the case
where it is used as a connector. The lemma below shows that the
number of visits as a connector can be bounded as well; i.e.,
if any node $v_i$ appears $t$ times in the walk, then it is likely
to appear roughly $t/\lambda$ times as connectors.

\begin{lemma}
\label{lem:uniformityused} For any vertex $v$, if $v$ appears in the
walk at most $t$ times then it appears as a connector node at most
$t(\log n)^2/\lambda$ times with probability at least $1-1/n^2$.
\end{lemma}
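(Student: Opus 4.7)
The plan is to condition on the realization of the $\ell$-step walk so that the visit times $s_1 < s_2 < \cdots < s_t$ of $v$ are fixed (call this set $V(v)$), and then analyze which of these coincide with connector positions, where all remaining randomness lies in the independently chosen short-walk lengths. Write the connector positions as the partial sums $S_j = r_1 + \cdots + r_j$ with $r_j$ i.i.d.\ uniform on $[\lambda,\, 2\lambda-1]$, set $\mathcal{F}_{j-1} = \sigma(r_1,\ldots,r_{j-1})$, and let $Y = \sum_{j \geq 1} \mathbf{1}[S_j \in V(v)]$ be the number of visits of $v$ that are used as connectors.

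The first step is a one-step conditional bound. Given $\mathcal{F}_{j-1}$, the next connector $S_j$ is uniform on the length-$\lambda$ window $[S_{j-1}+\lambda,\, S_{j-1}+2\lambda-1]$, so
\[
\Pr[S_j \in V(v) \mid \mathcal{F}_{j-1}] \;\leq\; \tfrac{1}{\lambda}\,\bigl|V(v) \cap [S_{j-1}+\lambda,\, S_{j-1}+2\lambda-1]\bigr|.
\]
Since consecutive connectors are separated by at least $\lambda$, these windows are pairwise disjoint across $j$, so every visit time $s_i$ falls into at most one such window. Summing yields the deterministic (pathwise) bound $\sum_j \Pr[S_j \in V(v) \mid \mathcal{F}_{j-1}] \leq t/\lambda$, and in particular $\e[Y] \leq t/\lambda$.

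The second step promotes this to a high-probability statement via Freedman's martingale inequality applied to $X_j := \mathbf{1}[S_j \in V(v)] - \Pr[S_j \in V(v) \mid \mathcal{F}_{j-1}]$, which is a bounded martingale difference with $|X_j| \leq 1$ and $\e[X_j^2 \mid \mathcal{F}_{j-1}] \leq \Pr[S_j \in V(v) \mid \mathcal{F}_{j-1}]$; thus the total predictable variance is also at most $t/\lambda$ pathwise. Taking deviation $a = t(\log n)^2/\lambda - t/\lambda$ and variance $V = t/\lambda$ in Freedman's bound, the linear term $a/3$ dominates the denominator and gives a tail of $\exp\!\bigl(-\Omega(t(\log n)^2/\lambda)\bigr) \leq n^{-2}$ in the main regime. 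The tiny-$t$ corner ($t \ll \lambda/(\log n)^2$) is handled directly by Markov's inequality, since $\e[Y] \leq t/\lambda$ is already negligible; note that in the algorithm $t$ is taken from the high-probability visit bound of Lemma~\ref{lemma:visits bound} and is safely above this threshold.

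The chief obstacle is that the indicators $\mathbf{1}[S_j \in V(v)]$ are strongly correlated through the shared partial sums, which rules out any direct Chernoff-style argument. The crux is to expose the Markov-renewal structure through the martingale above and then to observe that the \emph{same} disjointness property which controls the conditional mean also controls the conditional variance by the \emph{same} quantity $t/\lambda$; this matching of first and second moments is exactly what Freedman's inequality needs to deliver a sub-Poisson tail with the $(\log n)^2$ slack in the statement.
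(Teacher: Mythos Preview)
Your argument is correct and takes a genuinely different route from the paper. Both proofs begin the same way: condition on the $\ell$-step walk so that $V(v)$ is fixed, and use that given $S_{j-1}$ the next connector is uniform on a length-$\lambda$ window, with consecutive windows disjoint. From there the paper does \emph{not} set up a martingale. Instead it builds an auxiliary process $C$: partition $[1,\ell]$ into consecutive blocks of size $\lambda$ and pick one index uniformly and \emph{independently} from each block. Because any realization of the connector set $B$ places at most one point in each block and hits any fixed index with probability $1/\lambda$, one gets a coupling under which $B\subseteq C$, hence the count of $v$-visits selected by $B$ is stochastically dominated by the corresponding count for $C$. The indicators for $C$ are independent with total mean $t/\lambda$, so a plain Chernoff bound gives $\Pr[\text{count}\ge R]\le 2^{-R}$ for $R\ge 6t/\lambda$, and the lemma follows with $R=t(\log n)^2/\lambda$.

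What each approach buys: your Freedman route is more direct---no auxiliary process or coupling to invent---and your observation that the predictable variance is bounded by the \emph{same} quantity $t/\lambda$ as the predictable mean is exactly what makes Freedman deliver a $\exp(-\Omega(R))$ tail. The paper's domination trick is slightly more elementary in that it reduces everything to independent Bernoullis and needs only the textbook Chernoff bound; it also makes the disjoint-windows property do all the work in one stroke. Both yield a tail of order $\exp(-\Omega(t(\log n)^2/\lambda))$, so both share the corner-case weakness you flag: when $t(\log n)^2/\lambda \ll \log n$ neither bound reaches $1/n^2$, and Markov only gives $1/(\log n)^2$. The paper does not address this either; the lemma is really meant for the regime of $t$ supplied by Lemma~\ref{lemma:visits bound}, so your caveat is appropriate.
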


Intuitively, this argument is simple, since the connectors are
spread out in steps of length approximately $\lambda$. However,
there might be some {\em periodicity} that results in the same node
being visited multiple times but {\em exactly} at
$\lambda$-intervals. This is where we crucially use the fact that
the algorithm uses walks of length $\lambda + r$ where $r$ is chosen
uniformly at random from $[0,\lambda-1]$. The proof then goes via constructing another process equivalent to partitioning the $\ell$ steps in to intervals of $\lambda$ and then sampling points from each interval. We analyze this by carefully constructing a different process that stochastically dominates the process of a node occurring as a connector at various steps in the $\ell$-length walk and then use a Chernoff bound argument. The detailed proof is presented in Appendix~\ref{sec:uniformityused-proof}.

Now we are ready to prove Theorem~\ref{thm:1-walk}.

\begin{proof}[Proof of Theorem~\ref{thm:1-walk}]
First, we claim, using Lemma \ref{lemma:visits bound} and
\ref{lem:uniformityused}, that each node is used as a connector node
at most $\frac{24 d(x) \sqrt{\ell}(\log n)^3}{\lambda}$ times with
probability at least $1-2/n$. To see this, observe that the claim
holds if each node $x$ is visited at most
$t(x)=24d(x)\sqrt{\ell+1}\log n$ times and consequently appears as a
connector node at most $t(x)(\log n)^2/\lambda$ times. By
Lemma~\ref{lemma:visits bound}, the first condition holds with
probability at least $1-1/n$. By Lemma~\ref{lem:uniformityused} and
the union bound over all nodes, the second condition holds with
probability at least $1-1/n$, provided that the first condition
holds. Therefore, both conditions hold together with probability at
least $1-2/n$ as claimed.

Now, we choose $\eta=1$ and $\lambda=24 \sqrt{\ell D}(\log n)^3$.
By Lemma~\ref{lem:phase1}, Phase~1 finishes in $\tilde O(\lambda
\eta) = \tilde O(\sqrt{\ell D})$ rounds with high probability.
For Phase~2, {\sc Sample-Destination} is invoked
$O(\frac{\ell}{\lambda})$ times (only when we stitch the walks) and
therefore, by Lemma~\ref{lem:Sample-Destination}, contributes
$O(\frac{\ell D}{\lambda})=\tilde O(\sqrt{\ell D})$ rounds.
Finally, we claim that {\sc Get-More-Walks} is never invoked, with
probability at least $1-2/n$. To see this, recall our claim above
that each node is used as a connector node at most $\frac{24 d(x)
\sqrt{\ell}(\log n)^3}{\lambda}$ times. Moreover, observe that we
have prepared this many walks in Phase~1; i.e., after Phase~1, each
node has $\eta\lambda d(x)= \frac{24 d(x) \sqrt{\ell}(\log
n)^3}{\lambda}$ short walks. The claim follows.

Therefore, with probability at least $1-2/n$, the rounds are $\tilde
O(\sqrt{\ell D})$ as claimed.
\end{proof}

\noindent{\bf Regenerating the entire random walk:} It is important
to note that our algorithm can be extended to regenerate the entire
walk. As described above, the source node obtains the sample after a
random walk of length $\ell$. In certain applications, it may be
desired that the entire random walk be obtained, i.e., every node in
the $\ell$ length walk knows its position(s) in the walk. This can
be done by first informing all intermediate connecting nodes of
their position (since there are only $O(\sqrt{\ell})$ such nodes).
Then, these nodes can regenerate their $O(\sqrt{\ell})$ length short
walks by simply sending a message through each of the corresponding
short walks. This can be completed in $\tilde{O}(\sqrt{\ell D})$
rounds with high probability. This is because, with high
probability, {\sc Get-More-Walk} will not be invoked and hence all
the short walks are generated in Phase~1. Sending a message through
each of these short walks (in fact, sending a message through {\em
every} short walk generated in Phase~1) takes time at most the time
taken in Phase~1, i.e., $\tilde{O}(\sqrt{\ell D})$ rounds.

\subsection{Extension to Computing $k$ Random Walks}

We now consider the scenario when we want to compute $k$ walks of
length $\ell$ from different (not necessary distinct) sources $s_1,
s_2, \ldots, s_k$. We show that {\sc
Single-Random-Walk} can be extended to solve this problem. Consider
the following  algorithm.

\paragraph{{\sc Many-Random-Walks}:} Let
$\lambda=(24 \sqrt{k\ell D+1}\log n+k)(\log n)^2$ and $\eta=1$. If
$\lambda> \ell$ then run the naive random walk algorithm, i.e., the
sources find walks of length $\ell$ simultaneously by sending
tokens. Otherwise, do the following. First, modify Phase~2 of {\sc
Single-Random-Walk} to create multiple walks, one at a time; i.e.,
in the second phase, we stitch the short walks together to get a
walk of length $\ell$ starting at $s_1$ then do the same thing for
$s_2$, $s_3$, and so on. We state the theorem below and the proof is placed in Appendix~\ref{app:kwalks}.

\begin{theorem}\label{thm:kwalks} {\sc Many-Random-Walks} finishes in
$\tilde O\left(\min(\sqrt{k\ell D}+k, k+\ell)\right)$
rounds with high probability.
\end{theorem}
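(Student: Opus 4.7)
My plan is to handle the two branches of {\sc Many-Random-Walks} separately, showing that each branch achieves the claimed bound on the inputs for which it is triggered. In the naive branch ($\lambda > \ell$), $k$ tokens each make $\ell$ random moves; using standard pipelining (with congestion resolved by token priority), every token completes by round $k+\ell$. This is good enough because $\lambda > \ell$ together with $\lambda = \tilde\Theta(\sqrt{k\ell D}+k)$ forces $\ell = \tilde O(\sqrt{k\ell D}+k)$, hence $k+\ell = \tilde O(\sqrt{k\ell D}+k)$, so the naive bound automatically matches the $\min$ up to polylogarithmic factors.

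For the main branch ($\lambda \le \ell$), with $\eta = 1$ and $\lambda = (24\sqrt{k\ell D + 1}\log n + k)(\log n)^2$, I would closely follow the proof of Theorem~\ref{thm:1-walk}. Phase~1 is unchanged and, by Lemma~\ref{lem:phase1}, completes in $\tilde O(\eta\lambda) = \tilde O(\sqrt{k\ell D}+k)$ rounds w.h.p. In Phase~2 the $k$ walks are stitched one at a time, so the total number of {\sc Sample-Destination} calls across all walks is at most $k\ell/\lambda$, each costing $O(D)$ rounds by Lemma~\ref{lem:Sample-Destination}, for a total of $k\ell D/\lambda$ rounds. A two-case check shows $k\ell D/\lambda = \tilde O(\sqrt{k\ell D}+k)$: when $\sqrt{k\ell D}\ge k$ the square root dominates $\lambda$ and the quotient is $\tilde\Theta(\sqrt{k\ell D})$; when $\sqrt{k\ell D}<k$ the $k$-term dominates $\lambda$ and the quotient is $\ell D$, which is $\le \sqrt{k\ell D}$ exactly because $k > \ell D$ in this subcase.

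The crucial step is to show that, with probability $1 - O(1/n)$, {\sc Get-More-Walks} is never invoked, so the supply of short walks from Phase~1 suffices. Lemma~\ref{lemma:visits bound} is already phrased for $k$ walks and bounds the aggregate visit count at any vertex $y$ by $t(y) := 24\,d(y)\sqrt{k\ell+1}\log n + k$. I would then extend Lemma~\ref{lem:uniformityused} to the $k$-walk regime, concluding that $y$ serves as a connector at most $t(y)(\log n)^2/\lambda$ times w.h.p.; the cleanest route is to view the $k$ sequentially stitched walks as a single length-$k\ell$ process whose short-walk offsets $r_i$ are i.i.d.\ uniform on $[0,\lambda-1]$, so that the stochastic-domination plus Chernoff argument of the one-walk proof applies, followed by a union bound over the $n$ vertices. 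Plugging in our $\lambda$ yields $t(y)(\log n)^2/\lambda = O(d(y))$, matching (up to a constant that can be absorbed into $\eta$ or the polylog factors in $\lambda$) the $\eta\,d(y)$ short walks pre-computed at $y$ in Phase~1.

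The main obstacle I foresee is precisely this extension of Lemma~\ref{lem:uniformityused}: the single-walk proof exploits the independence of the short-walk offsets within one $\ell$-length walk, and one must verify that this independence is preserved across $k$ independently generated walks so that the same stochastic-domination/Chernoff argument carries through. Once this is done, combining the two branches yields the claimed $\tilde O\bigl(\min(\sqrt{k\ell D}+k,\ k+\ell)\bigr)$ round complexity with high probability.
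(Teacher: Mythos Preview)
Your main-branch analysis matches the paper's proof, and you are in fact more explicit than the paper in one respect: you note that Lemma~\ref{lem:uniformityused} must be carried over to the $k$-walk setting before one can conclude that {\sc Get-More-Walks} is never invoked. (The paper simply cites Lemma~\ref{lemma:visits bound} at that step, implicitly reusing the argument of Theorem~\ref{thm:1-walk}.) Your two-case computation showing $k\ell D/\lambda=\tilde O(\sqrt{k\ell D})$ is also correct.

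The naive branch, however, has a gap. ``Standard pipelining with token priority'' does not by itself yield $k+\ell$ rounds for $k$ simultaneous random walks of length~$\ell$. The obstacle is that random walks revisit edges: the total number of traversals of a single edge across all $k$ walks is not a priori bounded by~$k$, so no generic priority or delay-sequence argument delivers $k+\ell$. The paper handles this branch by again invoking Lemma~\ref{lemma:visits bound}: the $k$-walk visits bound says every vertex $y$ is hit $\tilde O(d(y)\sqrt{k\ell}+k)$ times in aggregate, hence every edge carries $\tilde O(\sqrt{k\ell}+k)$ tokens over the whole execution. Combining this congestion bound with the dilation~$\ell$ gives $\tilde O(\sqrt{k\ell}+k+\ell)=\tilde O(k+\ell)$ rounds, using $2\sqrt{k\ell}\le k+\ell$. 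So the repair is small---reuse Lemma~\ref{lemma:visits bound} in this branch as well---but the pipelining shortcut as written is not justified.
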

%\begin{proof}
%First, consider the case where $\lambda>\ell$. In this case,
%$\min(\sqrt{k\ell D}+k, \sqrt{k\ell}+k+\ell)=\tilde
%O(\sqrt{k\ell}+k+\ell)$. By Lemma~\ref{lemma:visits bound}, each
%node $x$ will be visited at most $\tilde O(d(x) (\sqrt{k\ell}+k))$
%times. Therefore, using the same argument as Lemma~\ref{lem:phase1},
%the congestion is $\tilde O(\sqrt{k\ell} + k)$ with high
%probability. Since the dilation is $\ell$, {\sc Many-Random-Walks}
%takes $\tilde O(\sqrt{k\ell} + k +\ell)$ rounds as claimed.

%Now, consider the other case where $\lambda\leq \ell$. In this case,
%$\min(\sqrt{k\ell D}+k, \sqrt{k\ell}+k+\ell)=\tilde O(\sqrt{k\ell
%D}+k)$. Phase~1 takes $\tilde O(\lambda \eta) = \tilde O(\sqrt{k\ell
%D}+k)$. The stitching in Phase~2 takes $\tilde O(k\ell D/\lambda) =
%\tilde O(\sqrt{k\ell D})$. Moreover, by Lemma~\ref{lemma:visits
%bound}, {\sc Get-More-Walks} will never be invoked. Therefore, the
%total number of rounds is $\tilde O(\sqrt{k\ell D}+k)$ as claimed.
%\end{proof}

%(GOPAL --- This proof can be moved to the Appendix.)

% I created this file because we ran into conflicts so many times.
%\section{Lower bound}
%\input{lowerbound}
\section{Lower bound}
\label{sec:lowerbound}

In this section, we show an almost tight lower bound on the time
complexity of performing a distributed random walk. At the end of the walk,
we require that each node in the walk should know its correct position(s) among the $\ell$ steps.
%s, i.e., specifically for the {\em Single Random Walk} problem.
We show that any distributed algorithm needs  at least
$\Omega\left(\sqrt{\frac{\ell}{\log \ell}}\right)$ rounds, even in
graphs with low diameter. Note that $\Omega(D)$ is a  lower bound
\cite{DNP09-podc}. Also note that if a source node wants to sample
$k$ destinations from independent random walks, then $\Omega(k)$ is
also a lower bound as the source may need to receive $\Omega(k)$
distinct messages. Therefore, for $k$ walks, the lower bound we show
is $\Omega(\sqrt{\frac{\ell}{\log \ell}} + k + D)$ rounds. 
%For small diameter graphs, this almost matches our upper bound for $k=1$. 
(The rest of the section omits the $\Omega(k+D)$ term.) In particular, we
show that there exists a $n$-node graph of diameter $O(\log n)$ such
that any distributed algorithm needs at least
$\Omega(\sqrt{\frac{n}{\log n}})$ time to perform a walk of length
$n$. Our lower bound proof makes use of a lower bound for another
problem that we call as the {\em Path Verification problem} defined
as follows. Informally, the Path Verification problem is for
some node $v$ to verify that a
given sequence of nodes in the graph is a valid path of length $\ell$.

\begin{definition}[{\sc Path-Verification} Problem]
The input of the problem consists of an integer $\ell$, a graph $G =
(V,E)$, and $\ell$  nodes $v_1, v_2, ...,
v_\ell$ in $G$. To be precise, each node $v_i$ initially has its
order number $i$.

The goal is for some node $v$ to ``verify" that the above sequence
of vertices forms an $\ell$-length path, i.e., if
$(v_i,v_{i+1})$ forms an edge for all $1 \leq i \leq \ell - 1$.
Specifically, $v$ should output ``yes" if the sequence forms an $\ell$-length
path and ``no" otherwise.
\end{definition}

%\gopal{Better to Capitalize "P" and "V" in Path Verification
%throughout.}

We show a lower bound for the Path Verification problem that applies
to a very general class of verification algorithms defined as follows.
%Each node can (only) send a segment of the path that it can verify
%either directly or indirectly (by learning from its neighbors), as
%follows.
% Initially, each node $v_i$ verify its own position, i.e., it
%verifies the number $i$. In each of the later rounds, each node $u$
%can send the numbers it has verified to its neighbors. Note that the
%numbers can be sent in ``chunks''; i.e., if a node has verified
%numbers $i, i+1, \ldots, j$ then it can simply send an interval $[i,
%j]$ to its neighbors. However, in each round only one chunk
%(interval) of numbers can be sent through each edge. Once a node
%receives an interval $[i, j]$ from its neighbor, we say that it has
%verified the number $i, i+1, \ldots, j$ as well.(See Figure~\ref{fig:path_verify_definition} in Appendix)
%The goal of the problem is that, in the end, some node verifies all
%numbers $1, 2, \ldots, \ell$.
Each node can (only) verify a segment of the path that it knows
either directly or indirectly (by learning form its neighbors), as
follows.
Initially each node knows only the trivial segment (i.e. the vertex itself). If a
vertex obtains from its neighbor a segment $[i_1,j_1]$ and it has
already verified segment $[i_2,j_2]$ that overlaps with $[i_1, j_1]$
(say, $i_1 < i_2 < j_1 < j_2$) then it can verify a larger interval
($[i_1,j_2]$). Note that a node needs to only send the endpoints of
the interval that it already verifies (hence larger intervals are
better). (See Figure~\ref{fig:path_verify_definition} in the
Appendix for an example.) The goal of the problem is that, in the end, some node verifies the entire segment $[1,\ell]$. We would like to determine a lower bound
for the running time of any distributed algorithm for the above
problem.

A lower bound for the
Path Verification problem, implies a lower bound for the random walk
problem as well.
 The reason is as follows.
Both problems involve
constructing  a path of some specified length $\ell$. Intuitively,
the former is a simpler problem, since we  are not verifying whether
the local steps are chosen randomly, but just whether the path is
valid and is of length $\ell$. On the other hand, any algorithm for
the random walk problem (including our algorithm of Section
\ref{sec:one_walk_DoS}), also solves the Path Verification problem,
since the path it constructs should be a valid path of length
$\ell$.
It is straightforward to make any distributed algorithm that
computes a random walk  to also verify that  indeed the random walk
is a valid walk of appropriate length.  This is essential for
correctness, as otherwise, an adversary can always change  simply
one edge of the graph and ensure that the walk is wrong.

In the next section we first prove a lower bound for the Path
Verification problem. Then we show the same lower bound holds for the
random walk  problem by giving a reduction.

\subsection{Lower Bound for the Path Verification Problem}

The main result of this section is the following theorem.
\begin{theorem}
\label{thm:mainLB} For every $n$, and $\ell \leq n$ there exists a
graph $G_n$ of $\Theta(n)$ vertices and diameter $O(\log n)$, and a
path $P$ of length $\ell$ such that any algorithm  that solves the
{\sc Path-Verification} problem on $G_n$ and $P$ requires more than
$k$ rounds, where $k=\sqrt{\frac{\ell}{\log \ell}}$.
\end{theorem}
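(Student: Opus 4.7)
My plan is to exhibit an adversarial instance $(G_n,P)$ and then derive the lower bound from a bandwidth/cut argument in the token-forwarding model to which the theorem's class of algorithms is restricted.

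\textbf{Construction.} Set $k=\sqrt{\ell/\log\ell}$. I would build $G_n$ out of $k$ vertex-disjoint ``arms'' $Q_1,\ldots,Q_k$, each a simple path on $\lceil\ell/k\rceil$ vertices, with the endpoints of the arms glued to a shallow shared backbone of diameter $O(\log n)$ (say, a balanced binary tree whose leaves are identified with the arm endpoints, or two such trees joined by a single bottleneck edge). The total vertex count is $\ell+O(k)=\Theta(n)$, and the diameter is $O(\log n)$. The verification path $P$ enters $Q_1$, traverses it end to end, transits through the backbone ($O(\log n)$ extra steps) into $Q_2$, traverses $Q_2$, and so on; the cumulative backbone overhead is $O(k\log n)=o(\ell)$, so $|P|=\Theta(\ell)$.

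\textbf{Congestion argument.} Fix any algorithm and any node $v$ that eventually holds the verified segment $[1,\ell]$. In the backbone I identify a cut $C$ separating $v$ from the endpoints of at least $k/2$ of the arms, with $|C|$ as small as $O(1)$ (when the backbone is two trees joined by one edge) or $O(\log n)$ in general. Each such arm's sub-interval of $[1,\ell]$ is only derivable from information local to that arm, and the token-forwarding rule forbids any compression beyond the two-endpoint format; hence for each of the $\Omega(k)$ relevant arms, a distinct token must cross $C$ before $v$ can assemble $[1,\ell]$. Each edge of $C$ carries $O(1)$ tokens per round, so the number of rounds is at least $\Omega(k/|C|)=\Omega(\sqrt{\ell/\log\ell})$.

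\textbf{Main obstacle.} The principal difficulty will be to justify the ``distinct-token-per-arm'' claim against clever algorithms that might try to batch or amortize. I plan to address this with an indistinguishability argument: construct two instances $(G_n,P)$ and $(G_n,P')$ that coincide on the $v$-side of $C$ but differ in the internal structure of one arm, and observe that any algorithm with fewer than $\Omega(k)$ rounds of $C$-traffic is forced into the same transcript on both inputs and so must output the same answer, contradicting correctness. A cleaner optional alternative is a reduction from a communication-complexity problem (e.g.\ Indexing of size $\Theta(k\log k)$) with Alice and Bob placed on opposite sides of $C$ and their private inputs encoding internal choices inside the arms; any $o(k)$-round algorithm would then send $o(k\log k)$ bits across $C$, violating the Indexing lower bound. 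Once the Path-Verification lower bound is in place, Theorem~\ref{thm:mainLB} follows, and by the reduction sketched earlier in the paper the same bound is transferred to the random-walk problem.
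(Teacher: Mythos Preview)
Your construction does not achieve diameter $O(\log n)$, and this is fatal. Each arm $Q_i$ is a simple path on $\lceil \ell/k\rceil=\Theta(\sqrt{\ell\log\ell})$ vertices, and only its \emph{endpoints} are attached to the backbone. A vertex in the middle of an arm is therefore at graph-distance $\Theta(\sqrt{\ell\log\ell})$ from the backbone (and hence from anything outside its own arm), so the diameter of $G_n$ is $\Theta(\sqrt{\ell\log\ell})$, not $O(\log n)$. With that diameter the bound $\Omega(\sqrt{\ell/\log\ell})$ is trivial (it is below $D$), and the theorem's content --- a lower bound that already bites on graphs of logarithmic diameter --- is lost. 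If you try to repair the diameter by wiring interior arm vertices into the backbone as well, your single-edge (or $O(\log n)$-edge) cut disappears: any cut that separates many arm interiors from the rest of the graph now has large capacity, and the simple $\Omega(k/|C|)$ congestion count no longer yields anything.

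This is exactly the tension the paper's construction is designed to resolve. There a single long path $P$ is overlaid with a binary tree $T$ on $k'\approx 4k$ leaves, and \emph{every} path vertex is adjacent to some leaf of $T$, so the diameter is genuinely $O(\log n)$. Because there is no small cut, the paper cannot run a one-cut congestion argument; instead it introduces a potential function (\emph{path-distance covered}) and shows (i) completing the verification forces total path-distance $\Omega(n)$ to be covered (via the breakpoint/scratching argument), while (ii) in $k$ rounds the whole tree, summed over all $\log k$ levels, can cover only $O(k^2\log k)$ path-distance. Balancing these gives $k=\Omega(\sqrt{\ell/\log\ell})$. Your per-arm token idea is morally the breakpoint idea, but the hard part --- which you have not engaged with --- is bounding how much progress a low-diameter shortcut structure can make per round when there is no small cut. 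As a secondary remark, the proposed Indexing reduction is also shaky: Indexing has small \emph{two-way} randomized communication complexity, so it will not survive $k$ rounds of back-and-forth across $C$; if you pursue a communication-complexity route you would need a problem with a two-way lower bound (e.g., Disjointness), and you would still have to embed it in a graph of diameter $O(\log n)$.
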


The rest of the section is devoted to proving the above Theorem. We
start by defining $G_n$.

\begin{definition}[Graph $G_{n}$] Let $k'$ be an integer such that
$k$ is a power of $2$ and $k'/2\leq 4k < k'$. Let $n'$ be such that
$n'\geq n$ and $k'$ divides $n'$. We construct $G_n$ having
$(n'+2k'-1)=O(n)$ nodes as follows. First, we construct a path
$P=v_1v_2...v_{n'}$. Second, we construct a binary $T$ having $k'$
leaf nodes. Let $u_1, u_2, ..., u_{k'}$ be its leaves from left to
right. Finally, we connect $P$ with $T$ by adding an edge
$u_iv_{jk'+i}$ for every $i$ and $j$. We will denote the root of $T$
by $x$ and its left and right children by $l$ and $r$ respectively.
Clearly, $G_n$ has diameter $O(\log n)$. We then consider a path of
length $\ell=\Theta(n)$. If required $n$ can always be made larger by connecting dummy vertices to the root of $T$. (The resulting graph $G_n$ is as in
Figure~\ref{fig:graph_construction} in the Appendix.) \qed
\end{definition}

%{\bf Gopal;} It wil be good to make the figures more clearer, larger if possible. We
%can try to put figures side by side to gain space.
%Explanations in Figure 4 and 5 have to be fixed for style.
%{\bf Danupon:} The figure could be larger by changing the width below (in the includegraphics ...).
%I've changed the picture a bit. I don't know what else to do to make it clearer.

%**********************************************************
%\begin{figure}
%  \centering
%  \includegraphics[width=0.7\linewidth]{Graph_Construction.eps}\\
%  \caption{$G_n$}\label{fig:graph_construction}
%\end{figure}
%**********************************************************

To prove the theorem, let $\mathcal A$ be any algorithm for the {\sc
Path-Verification} problem that solves the problem on $G_n$ in at
most $k'$ rounds. We need some definitions and claims to prove the
 theorem.

%(GOPAL --- It will be good to put all proofs of the lower bound in
%the main paper. It won't increase the space by too much and it will
%read much better.)

\paragraph{\bf Definitions of {\em left/right subtrees} and {\em
breakpoints}.}

%{\bf Gopal:} In he definition of subtrees below, we have to make it clear that the subtree includes
%the nodes in the path also. Note that the graph is not tree if it includes the path, as there are cycles.
%We have to just clarify this.

%{\bf Danupon:} I added one more sentence. Hope it is clear now.

%Let the root $x$'s two neighbors be $r$ and $l$ (for right and left
%children respectively).

Consider a tree $T'$ obtained by deleting all edges in $P$. Notice
that nodes $v_{jk'+i}$, for all $j$ and $i \leq k'/2$ are in the
subtree of $T'$ rooted at $l$ and all remaining points are in the
subtree rooted at $r$. For any node $v$, let $sub(v)$ denote the
subtree rooted at node $v$. (Note that $sub(v)$ also include nodes
in the path $P$.) We denote the set of nodes that are leaves of
$sub(l)$ by $L$ (i.e., $L=sub(l)\cap P$) and the set of nodes that
are leaves in $sub(r)$ by $R$.

%{\bf Gopal:} I think it should be $k'$ in the first line below instead of $k$.
%Also carefully check the notations below, in particular, should it be " $v_{jk'+k+1}$"
%in the last line of the para below.

{%\bf Danupon:} $k$ in the first line is correct. I changed the sentence a bit. Hope it is clearer.
%Yes, it should be ``$v_{jk'+k+1}$" in the last line.

Since we consider an algorithm that takes at most $k$ rounds,
consider the situation when the algorithm is given $k$ rounds for
{\em free} to communicate only along the edges of the path $P$ at
the beginning.
Since $L$ and $R$ consists of every $k'/2$ vertices in $P$ and
$k'/2> 2k$, there are some nodes unreachable from $L$ by walking on
$P$ for $k$ steps. In particular, all nodes of the form
$v_{jk'+k'/2+k+1}$, for all $j$, are not reachable from $L$. We call
such nodes {\em breakpoints} for $sub(l)$. Similarly all nodes of
the form $v_{jk'+k+1}$, for all $j$, are not reachable from $R$ and
we call them the breakpoints for $sub(r)$. (See
Figure~\ref{fig:breaking-points} in the Appendix.)

%Let any
%node in $P$ that is not reachable from $L$ (i.e., $v_{jk'+k'/2+i}$,
%for all $j$ and $k < i < k'/2-k$) in the initial $k$ rounds be
%called a {\em breakpoint} for $sub(l)$. Similarly, any node that is
%not reachable from vertices in $R$ be called a {\em breakpoint} for
%$sub(r)$. In particular, all nodes of the form $v_{jk'+k'/2+k+1}$,
%for all $j$, are breakpoints for $sub(l)$. Let $B_l$ be the set of
%such nodes. Similarly all nodes of the form $v_{jk'+k+1}$, for all
%$j$ are breakpoints for $sub(r)$ and we let $B_r$ be the set of
%such nodes. (See Figure~\ref{fig:breaking-points}.)

%***********************************************************
%\begin{figure}
%\centering
%\includegraphics{breaking-points.eps}
%\caption{\textbf{Breakpoints.} {\bf (a)} $L$ and $R$ consist of every other $k'/2$ vertices in $P$.
%(Note that we show the vertices $l$ and $r$ appear many times for the convenience of presentation.)
%{\bf (b)} $v_{k'/2+k+1}$ and $v_{k'+k'/2+k+1}$ (nodes in black) are two of the breakpoints for $L$.
%Notice that there is one breakpoint in every connected piece of $L$ and $R$.} \label{fig:breaking-points}
%\end{figure}
%***********************************************************

\paragraph{\bf Definitions of {\em path-distance} and {\em
covering}.}

%{\bf Gopal:} Again, does the "number of leaves below" include the nodes on the path below"

%{\bf Danupon:} I added one more sentence and changed the picture to clarify this.

For any two nodes $u$ and $v$ in $T'$ (obtained from $G_n$ by
deleting edges in $P$), let $c(u, v)$ be a lowest common ancestor of
$u$ and $v$. We define $path\_dist(u, v)$ to be the number of leaves
of subtree of $T$ rooted at $c(u, v)$. Note that the path-distance
is defined between any pair of nodes in $G_n$ but the distance is
counted using the number of leaves in $T$ (which excludes nodes in
$P$).
%(Recall that $T$ does not contain vertices in $P$.)
(See Figure~\ref{fig:path_distance} in Appendix.)

We also introduce the notion of the path-distance {\em covered} by a
message. For any message $m$, the path-distance covered by $m$ is
the maximum path-distance taken over all nodes that have held the
message $m$. That is, if $m$ covers some nodes $v'_1, v'_2, ...,
v'_k$ then the path-distance covered by $m$ is the number of leaves
in the subtrees of $T$ rooted by $v'_1, v'_2, ..., v'_k$. Note that
some leaves may be in more than one subtrees and they will be
counted only once.
Our construction makes the right and left subtrees have a large
number of break points, as in the following lemma. (Proof can be
found in Appendix~\ref{proof:lem:one}.)
%
%For any message $m$, the path-distance covered by $m$ is
%the maximum path-distance taken over all pairs of nodes that have
%held the message $m$.

%{\bf Gopal:} The above is a bit confusing. Since nodes on a path also hold the message, we need
%the notion of path-distance between any two nodes. So I guess, we should define path distance in the tree
%T' (T + vertices of P - edges of P). I think this is what you mean. See my previous comment.
%{\bf Danupon:} I also asked the same question earlier. I've redefine it as above.
%\textbf{Danupon:} What does this mean?

\begin{lemma}
\label{lem:one} The number of breakpoints for the left subtree and
for the right subtree are at least $\frac{n}{4k}$ each.
\end{lemma}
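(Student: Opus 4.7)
The plan is to prove the lemma by a direct counting argument, separately for the two subtrees but via the same calculation. Fix the left subtree $sub(l)$. By definition the breakpoints for $sub(l)$ are exactly the vertices $v_{jk'+k'/2+k+1}$, one for each integer $j \geq 0$ for which the index is a valid position on the path, i.e., satisfies $1 \leq jk'+k'/2+k+1 \leq n'$. So the first step is to solve this inequality in $j$: rearranging gives $0 \leq j \leq (n'-k'/2-k-1)/k'$, and hence the total number of breakpoints equals $\lfloor (n'-k'/2-k-1)/k' \rfloor + 1$.

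Next I would lower-bound this quantity using the two defining constraints of the construction. First, $n' \geq n$ by construction. Second, the choice $k'/2 \leq 4k < k'$ gives $k' \leq 8k$, and also $k/k' < 1/4$, so the subtracted terms $k'/2$, $k$, and $1$ are all small compared to $n'$ when $n$ is large. Concretely, I would write
\[
\left\lfloor \frac{n' - k'/2 - k - 1}{k'} \right\rfloor + 1 \;\geq\; \frac{n'}{k'} - 1 \;\geq\; \frac{n}{k'} - 1 \;\geq\; \frac{n}{8k} - 1,
\]
and then note that, by the freedom to pad with dummy vertices attached to the root of $T$, we may assume $n$ is as large as we wish compared to $k$, so the additive $-1$ is absorbed and the bound $\Omega(n/k)$ (matching the claimed $n/(4k)$ up to constants) follows.

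The right subtree case is handled symmetrically: its breakpoints are the vertices $v_{jk'+k+1}$, and exactly the same inequality analysis — replacing $k'/2+k+1$ by $k+1$ in the numerator — yields the same count $\lfloor (n'-k-1)/k' \rfloor + 1$, which admits the same lower bound $\Omega(n/k)$.

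I do not expect any real obstacle here: the proof is essentially bookkeeping, with the only subtle points being (i) remembering to include $j=0$ in the count, and (ii) verifying that the constants in $k'/2 \leq 4k < k'$ suffice to convert $n'/k'$ into the desired bound in terms of $k$. If the bound obtained is $n/(8k)$ rather than the stated $n/(4k)$, this is a harmless constant slack that does not affect the downstream $\Omega(\sqrt{\ell/\log \ell})$ lower bound in Theorem~\ref{thm:mainLB}.
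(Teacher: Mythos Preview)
Your direct counting argument is correct and is in fact more explicit than what the paper does. The paper's proof of this lemma is extremely terse: it essentially just recalls that each breakpoint is at path-distance at least $k+1$ from every node of $L$ (respectively $R$), leaving the actual count implicit in the observation that there is one breakpoint per $k'$-period of the path (cf.\ the figure caption ``there is one breakpoint in every connected piece of $L$ and $R$''). Your argument spells out this count by solving $1 \le jk' + k'/2 + k + 1 \le n'$ for $j$, which is the right thing to do.

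Your observation about the constant is also on point. From $4k < k' \le 8k$ the number of periods $n'/k'$ is only guaranteed to be at least $n'/(8k)$, not $n'/(4k)$, so the literal constant in the lemma appears to be stated a factor of two too optimistically; your bound $n/(8k) - O(1)$ is the honest one. As you note, this is irrelevant for Lemma~\ref{lem:two} and Theorem~\ref{thm:mainLB}, which only need $\Theta(n/k)$ breakpoints. So: same underlying idea as the paper (one breakpoint per period), but you carry out the bookkeeping the paper omits, and you correctly flag the constant slack.
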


The reason we define these breakpoints is to show that the entire
information held by the left subtree has many disjoint intervals,
and same for the right subtree. This then tells us that the left
subtree and the right subtree must {\em communicate} a lot to be
able to merge these intervals by connecting/communicating the break
points.
To argue this, we show that the total path distance (over all
messages) is large, as in the following lemma. (Proof is in
Appendix~\ref{proof:lem:two}.)

%{\bf Gopal:} The term $path\_dist$ looks a bit ugly. I would put underscore or omit hyphen altogether.

%{\bf Danupon:} It is now path\_dist

\begin{lemma}
\label{lem:two} For algorithm $\mathcal A$ to solve {\sc
Path-Verification} problem, the total path-distance covered by all
messages is at least $n$.
\end{lemma}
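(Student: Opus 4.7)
The approach is to exploit the block structure created after the $k$ free rounds of $P$-only communication, combined with Lemma~\ref{lem:one} on breakpoints. After those free rounds, each path node $v_i$ extends its trivial segment to the interval $[i-k,i+k]$, so the intervals collectively held by $sub(l)$ form at least $n/(4k)$ pairwise disjoint ``left blocks'' $\mathcal{B}_j^L = [jk'+1-k,\,jk'+k'/2+k]$ separated by left breakpoints $b_j^L = v_{jk'+k'/2+k+1} \in sub(r)$; symmetric statements hold for the right subtree. The disjointness of same-side blocks is guaranteed by $k'/2 > 2k$, which leaves a wide gap between consecutive blocks on $P$.

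The first main step is a reduction to counting cross-subtree information transports. Let $w$ be the node that eventually verifies $[1,\ell]$; assume WLOG $w \in sub(l)$, since $w \in sub(r)$ is symmetric and the cases $w=x$ or $w \in P$ only strengthen the bound. The verified interval $[1,\ell]$ at $w$ arises from a combination tree whose leaves are the intervals present immediately after the free rounds. Because $\mathcal{B}_j^L \cap \mathcal{B}_{j'}^L = \emptyset$ for $j \neq j'$ (and similarly for right blocks), no two left blocks can be combined directly: every extension across a left breakpoint requires an interval derived from a right block. Consequently, for each of the $\geq n/(4k)$ right blocks, the combination tree must contain at least one message $m_j$ that carries right-originating information from $sub(r)$ to a merging node in $sub(l)$. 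Distinct right blocks give rise to distinct such messages, because consecutive right blocks are disjoint and cannot be pre-combined within $sub(r)$ without first importing a left block to bridge them.

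The second step lower bounds the path-distance of each crossing message $m_j$: I claim $\mathrm{pd}(m_j) \geq k'$. If $m_j$ climbs to the root via $T$, then $x \in V_{m_j}$ and $sub(x)$ contains all $k'$ leaves of $T$. If $m_j$ instead crosses via the path $P$, it must be held by tree nodes on both sides (to be useful as an interval merged deep inside $sub(l)$), and since the $P$-distance required to deliver block $\mathcal{B}_j^R$'s data to a merge location useful for the final verification at $w$ exceeds the $k$-round budget along any pure-$P$ route, the message must engage tree nodes high enough that the union of their subtrees covers $\Omega(k')$ leaves. Summing over the $\geq n/(4k)$ distinct crossings yields
\[
\sum_m \mathrm{pd}(m) \;\geq\; \frac{n}{4k}\cdot k' \;\geq\; \frac{n}{4k}\cdot 4k \;=\; n,
\]
using $k' \geq 4k$ from the construction of $G_n$.

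The principal obstacle is the second step, namely ruling out ``cheap'' $P$-only crossings that individually contribute $O(1)$ to path-distance. The resolution combines the $k$-round budget with the bandwidth constraint on the $O(n/k')$ $L$--$R$ path edges: while small local crossings are possible, any crossing that ultimately delivers a right block's data to a location useful for the final merge at $w$ must engage a tree node high enough to accrue $\Omega(k')$ in path-distance, because $w$ and the transported block are at $P$-distance $\gg k$. This case analysis --- formalized by tracking, for each $m_j$, the highest tree ancestor in $V_{m_j}$ --- is the technical heart of the proof.
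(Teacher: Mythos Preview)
Your proposal follows the same high-level strategy as the paper---count ``crossings'' between $sub(l)$ and $sub(r)$, lower-bound the path-distance contributed per crossing, and multiply---but both of your two main steps have genuine gaps.

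In your first step, the assertion that ``distinct right blocks give rise to distinct such messages'' is not justified. It is true that two right blocks cannot be merged inside $sub(r)$ without first importing a left block; but once a left block \emph{has} been imported (itself a crossing), several right blocks can be merged there, and a single subsequent message can carry the combined interval back across. So one crossing may serve many right blocks. The paper handles exactly this via its ``scratching'' device and a separate claim that at most $O(1)$ breakpoints can be scratched per interval communicated across the two subtrees; you have not supplied any analogue of that claim.

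The more serious gap is your second step. The claim $\mathrm{pd}(m_j)\ge k'$ for a \emph{single} message $m_j$ is simply false: a message can pass from $R$ to $L$ across one edge of $P$ (say from $v_{jk'+k'/2+1}$ to $v_{jk'+k'/2}$), contributing $O(1)$ to path-distance. You correctly flag this as the ``principal obstacle,'' but your resolution does not work. Saying the information must ``eventually engage a tree node high enough'' to reach $w$ does not bound $\mathrm{pd}(m_j)$: that tree-climbing may occur in a later, different message, and if you charge its cost to $m_j$ you will double-count when summing over $j$. Your bandwidth remark about the $O(n/k')$ $L$--$R$ path edges does not help either, since in $k$ rounds those edges can carry $\Theta(kn/k')=\Theta(n)$ messages, far more than the $n/(4k)$ crossings you need. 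The paper avoids this by working not with a single message per block but with the \emph{set} $M_b$ of all messages that ever carry an interval containing the still-unscratched breakpoint $b$; it argues that $M_b$ as a whole must cover path-distance at least $k$ (because $b$ lies in one subtree while the interval needed to scratch it originates in the other), and then uses the $O(1)$-per-crossing claim to control overlap among the $M_b$'s before summing over the $\Theta(n/k)$ breakpoints.
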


%REMOVE AND USE SUBFIG INSTEAD
%\begin{figure}
%\center
%\includegraphics[width=0.3\linewidth]{scratch_4.eps}
%\caption{{\bf Idea of Claim~\ref{claim:one}} For one unscratched left breakpoint, $k'/2+k+1$
%to be combined with another right breakpoint
%$k+1$ on the left, $k'/2+k+1$ has to be carried
%to $L$ by some intervals.
%Moreover, one interval can carry at most two
%unscratched breakpoints at a time.} \label{fig:scratch_4}
%\end{figure}

These messages can however be communicated using the tree edges as
well. We bound the maximum communication that can be achieved across
$sub(l)$ and $sub(r)$ indirectly by bounding the maximum
path-distance that can be covered in each round. In particular, we
show the following lemma. See Figure~\ref{fig:max_path_cover} and proof in the Appendix.
%(proved in Appendix~\ref{proof:lem:three}).

\begin{lemma}
\label{lem:three} In $k$ rounds, all messages together can cover at
most a path-distance of $O(k^2\log k)$.
\end{lemma}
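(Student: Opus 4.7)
My plan is to bound the total coverage $\Phi := \sum_m L_{c(m)}$ by decomposing it level-by-level in the tree $T$. Since the LCA $c(m)$ lies in $T$, I can write $L_{c(m)} = 2^{h(c(m))}$ with $h(\cdot)$ denoting height above the leaves, and using $2^h = 1 + \sum_{i=0}^{h-1} 2^i$ obtain
\[
\Phi = M^{*} + \sum_{h=0}^{\log k'} 2^h \cdot N_h,
\]
where $M^{*}$ is the number of messages with positive coverage and $N_h$ counts messages whose LCA of holders has height strictly greater than $h$. The task reduces to showing $N_h = O(k^2/2^h)$ at every level, since the weighted sum then telescopes to $O(k^2 \log k)$.

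To bound $N_h$, I observe that $m$ contributes only if $H_m$ meets at least two distinct height-$h$ subtrees of $T$, which forces $m$ to have crossed some \emph{straddling} edge whose two endpoints lie in different height-$h$ subtrees. Straddling edges come in two kinds: the $k'/2^h$ tree edges at the top of each height-$h$ subtree, and the path edges $(v_a, v_{a+1})$ whose attached leaves $u_{a\bmod k'}$ and $u_{(a+1)\bmod k'}$ fall in different height-$h$ subtrees (there are $\Theta(n/2^h)$ of these, since in a balanced binary tree only a $1/2^h$ fraction of consecutive leaf indices straddle a height-$h$ boundary). The tree case is immediate: $(k'/2^h) \cdot 2k = O(k^2/2^h)$ spanning events, as each tree edge carries at most $2k$ transmissions in $k$ rounds.

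The main obstacle is the path case. A naive capacity sum yields $O(nk/2^h)$, which is too weak when $n \gg k^2$. To tighten this I would exploit the locality of message spread: in $k$ rounds each message's holders lie in a ball of radius $k$ in $G_n$, whose intersection with $P$ is a segment of length $O(k)$ containing only $O(k/2^h)$ height-$h$ straddling path edges (since such straddlers are spaced $\Theta(2^h)$ apart along $P$). Charging each new level-$h$ span uniquely to the first straddling crossing that produced it, and then summing over all $m$, the aggregate path contribution to $N_h$ is also $O(k^2/2^h)$, matching the tree bound. Combining yields $N_h = O(k^2/2^h)$, hence $\Phi \leq M^{*} + O(k^2 \log k)$; the additive $M^{*}$ is controlled by a similar accounting (each positive-coverage message either starts at a tree node, of which there are $O(k)$, or first reaches a tree leaf through a leaf-attachment edge, and the same locality argument bounds these reaches by $O(k^2 \log k)$), completing the claim.
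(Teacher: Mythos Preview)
The paper's argument is far more direct than yours. It simply observes that a transmission reaching a tree node at level $i$ can raise that message's covered path-distance by at most $2^i$; since level $i$ of $T$ contains $2^{\log k'-i}=O(k/2^i)$ nodes, each of constant degree in $T$, the aggregate increase contributed at level $i$ in a single round is $O(k)$. Summing over the $O(\log k)$ levels and over $k$ rounds yields $O(k^2\log k)$. Path edges and leaf-attachment edges are not treated as a separate case at all.

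Your level decomposition $\Phi=M^{*}+\sum_h 2^hN_h$ and your bound on the tree contribution to $N_h$ essentially reproduce the paper's counting in a different bookkeeping. The real problem is your handling of the path contribution. The locality claim you rely on---that the radius-$k$ ball of a message meets $P$ in a single segment of length $O(k)$ containing only $O(k/2^h)$ height-$h$ straddling path edges---is \emph{false} in $G_n$. From any path node $v_a$ one can, in $O(\log k')\ll k$ steps, go up through $u_{a\bmod k'}$ into $T$, cross to another leaf $u_j$, and drop back to $v_{j'k'+j}$ for \emph{any} $j'$; thus the radius-$k$ ball around $v_a$ contains path vertices spread across the entirety of $P$, not a short contiguous segment. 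Even if you restrict attention to messages that never rise above height $h$ in $T$ (so that the tree-edge charge does not already catch them), such a message can still enter a height-$h$ subtree at one leaf, exit at another leaf of the same subtree, and land at a path index arbitrarily far away. Consequently the ``$O(k/2^h)$ straddlers per message'' estimate is unsupported, and with $\Theta(\ell)$ initial intervals the charging argument for $N_h^{\text{path}}$ (and likewise for $M^{*}$) does not yield $O(k^2/2^h)$.
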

%\begin{proof}
%We consider the total number of messages that can go through nodes
%at any level of the graph, starting from level $0$ to level $\log k$
%under the congest model.

%First notice that if a message is passed at level $i$ of the tree,
%this can cover a $path\_dist$ of at most $2^i$. This is because the
%subtree rooted at a node at level $i$ has $2^i$ leaves. Further, by
%our construction, there are $2^{\log (k') - i}$ nodes at level $i$.
%Therefore, all nodes at level $i$ together, in a given round of
%$\mathcal A$ can cover a $dist-path$, path distance, of at most
%$2^i2^{\log (k') - i} = 4k+2$. Therefore, over $k$ rounds, the total
%$path\_dist$ that can be covered in a single level is $k(k')$. Since
%there are $O(\log k)$ levels, the total $path\_dist$ that can be
%covered in $k$ rounds over the entire graph is $O(k^2\log k)$. 
%(See Figure~\ref{fig:max_path_cover}.)
%%
%% in the Appendix.)
%%
%\end{proof}

%MOVED TO SUBFIGURE
%\begin{figure}
%\center
%\includegraphics{max_path_cover.eps}
%\caption{\textbf{Idea of Lemma~\ref{lem:three}.} Sending a message between nodes on level $i$ and $i-1$ can
%increase the covered path distance by at most $2^i$} \label{fig:max_path_cover}
%\end{figure}

We now describe the proof of the main theorem using these three
claims.

\begin{proof} [Proof of Theorem~\ref{thm:mainLB}]
Use Lemmas~\ref{lem:two} and~\ref{lem:three} we know that if
$\mathcal A$ solves {\sc Path-Verification}, then it needs to cover
a $path\_dist$ of $n$, but in $k$ rounds it can only cover a
$path\_dist$ of $O(k^2\log k)$. But this is $o(n)$ since
$k=\sqrt{\frac{n}{\log n}}$, contradiction.
\end{proof}

\iffalse We remark that the path verification problem can be solved
in $O(\sqrt{\ell} + D)$ rounds. The node that needs to verify the
path from $1$ to $n$ can ask node with id $i\sqrt{\ell}+1$ to verify
the path from $i\sqrt{\ell}$ to $(i+1)\sqrt{\ell}$, for all $0\leq
i\leq \sqrt{\ell}-1$. Each such piece can be verified in parallel,
in $O(\sqrt{\ell})$ rounds; this crucially uses the fact that the
path being verified is a simple path (i.e. no vertex/edge
repetitions) and the fact that it is numbered $1$ through $n$. In
case of random walks, there can be vertex repetitions; additionally,
the subsequent node ids of the walk steps are unknown. This is what
makes an upper bound for the random walk problem more involved. \fi

\subsection{Reduction to Random Walk Problem} \label{sec:reduction}
We now discuss how the lower bound for the Path Verification problem
implies the lower bound of the random walk problem. The main
difference between {\sc Path-Verification} problem and the random
walk problem is that in the former we can specify which path to
verify while the latter problem generates different path each time.
We show that the ``bad'' instance ($G_n$ and $P$) in the previous
section can be modified so that with high probability, the generated
random walk is ``hard'' to verify. The theorems below are stated for $\ell$ length walk/path instead of $n$ as above. As previously stated, if it is desired that $\ell$ be $o(n)$, it is always possible to add dummy nodes.

%{\bf Gopal}: Here we should discuss how the lower bound for the path verification
%implies a lower bound for a the random walk problem.  Just a short description how we can make a
%random walk can be made to walk the path of length $n$ in the above constructed graph (adding enough multiple
%edges to mimic weights). We can also point out that the same lower bound result will hold if the verification
%is done by starting point of the random walk (rather than the root).

%{\bf Danupon}: Is below enough? BTW, the path verification problem shouldn't specify that
%the path has to be verified by the root.

\begin{theorem}
For any $n$, there exists a graph $G_n$ of $\Theta(n)$ vertices and
diameter $O(\log n)$, and $\ell=\Theta(n)$ such that, with high
probability, a random walk of length $\ell$  needs
$\Omega(\sqrt{\frac{\ell}{\log \ell}})$  rounds.
\end{theorem}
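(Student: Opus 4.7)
The plan is to extend the Path Verification lower bound of Theorem~\ref{thm:mainLB} to the random walk problem via a reduction. The high-level idea is that any distributed algorithm that outputs a random walk of length $\ell$ with each visited node knowing its position(s) implicitly ``verifies'' the realized walk: from the position labels one can reconstruct the walk and check that every consecutive pair is an edge of the graph. By modifying the hard instance $G_n$ of Theorem~\ref{thm:mainLB} to a new graph $G'_n$ of the same asymptotic size and diameter, I would ensure that the realized random walk is, with high probability, as hard to verify as the adversarial path $P$ used in the Path Verification lower bound.

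Concretely, I would proceed as follows. First, construct $G'_n$ by keeping the tree $T$ of $G_n$ intact and modifying the path $P=v_1\ldots v_{n'}$: replace each path edge $(v_i,v_{i+1})$ with a gadget consisting of many parallel length-two paths through fresh degree-$2$ vertices, which simulates a multi-edge of very high multiplicity and causes the walker to remain within the path gadget rather than entering $T$ on nearly every step. I would augment this with small asymmetric structures at each $v_i$ to induce a modest forward drift, so that a walk of length $\ell$ starting at $v_1$ coincides with the forward traversal of $P$ with probability at least some constant $c>0$; a quick check confirms that $G'_n$ still has $\Theta(n)$ vertices and diameter $O(\log n)$. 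Next, I would argue that running any candidate algorithm $A$ on $G'_n$, conditioned on the constant-probability event that the realized walk equals $P$, yields position labels $1,2,\ldots,\ell$ at the vertices $v_1,v_2,\ldots,v_\ell$; the induced communication pattern, restricted to $G_n$, solves the Path Verification instance $(G_n,P)$ in at most the same number of rounds. Invoking Theorem~\ref{thm:mainLB} then forces any such $A$ to take $\Omega(\sqrt{\ell/\log\ell})$ rounds, and a standard averaging/amplification argument converts the constant-probability statement into the ``with high probability'' guarantee of the theorem.

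The main obstacle will be the gadget construction that forces the undirected simple random walk to traverse $P$ in forward order with constant probability. Because simple random walks on undirected graphs are reversible and have no intrinsic directional bias, a straightforward multi-edge gadget makes the walk stay on $P$ but leaves it diffusive rather than directed; coercing an actual forward drift requires an asymmetric construction while simultaneously preserving the $\Theta(n)$ vertex bound and $O(\log n)$ diameter of $G'_n$. A related technical subtlety is ensuring that the added gadget vertices do not shortcut the tree bottleneck in a way that would make Path Verification easier on the induced instance, which would void the reduction; I would address this by keeping each gadget local (constant hop distance between its attachment points) so that the left-right cut structure of $G_n$ is preserved under the inclusion of $G_n$ into $G'_n$.
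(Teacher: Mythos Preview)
Your high-level reduction is exactly the one the paper uses: modify $G_n$ so that the random walk from $v_1$ equals the path $P$ with high (or at least constant) probability, and then invoke the Path Verification lower bound. The gap is in the step you already flagged as the ``main obstacle,'' and your proposed resolution does not close it.

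You claim that ``small asymmetric structures at each $v_i$'' give a forward drift while keeping $G'_n$ at $\Theta(n)$ vertices. This is not possible. For the walk of length $\ell=\Theta(n)$ to coincide with $v_1v_2\cdots v_\ell$ with constant probability you need the forward-step probability at each $v_i$ to be at least $1-O(1/\ell)$; otherwise the product over $\ell$ steps is $o(1)$. In an undirected (multi)graph the forward probability at $v_i$ is the ratio of the multiplicity toward $v_{i+1}$ to the total degree of $v_i$, so the multiplicities along $P$ must grow by a factor $\Omega(\ell)$ at every step. That forces at least $\ell^{\Omega(\ell)}$ parallel edges (or, in your gadget encoding, $\ell^{\Omega(\ell)}$ fresh degree-$2$ vertices) and kills the $\Theta(n)$-vertex bound. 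Reversibility is precisely the obstruction: no local, constant-size asymmetric gadget can create a consistent macroscopic drift along an undirected path.

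The paper resolves this not with gadgets but by putting weight $(2n)^{2i}$ on the edge $(v_i,v_{i+1})$, i.e.\ it accepts an exponential number of (multi)edges while keeping $\Theta(n)$ vertices. Two consequences of that choice are worth noting, since they are absent from your sketch: (i) because heavy multi-edges also carry large bandwidth, the paper first needs a strengthening of Theorem~\ref{thm:mainLB} saying that the $\Omega(\sqrt{\ell/\log\ell})$ bound survives even when path edges have unbounded capacity (this holds because the proof only charges congestion on tree edges); and (ii) the random walk then equals $P$ with probability $1-O(1/n)$, not merely a constant, which directly yields the ``with high probability'' statement without any amplification argument. If you switch to the paper's weighted construction and add the unbounded-capacity observation, the rest of your outline goes through.
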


\begin{proof}
Theorem~\ref{thm:mainLB} can be generalized to the
case where the path $P$ has infinite capacity, as follows.

\begin{theorem}
\label{thm:general-LB} For any $n$ and $\ell=\Theta(n)$, there exists a graph $G_n$ of
$O(n)$ vertices and diameter $O(\log n)$, and a path $P$ of length
$\ell$ such that any algorithm that solves the {\sc
Path-Verification} problem on $G_n$ and $P$ requires more than $\Omega(\sqrt{\frac{\ell}{\log \ell}})$
rounds, even if edges in $P$ have large capacity (i.e., one can send
larger sized messages in one step).
%, where $k=\sqrt{\frac{n}{\log n}}$.
\end{theorem}

\begin{proof}
This is because the proof of Theorem~\ref{thm:mainLB} only uses the
congestion of edges in the tree $T$ (imposed above $P$) to argue
about the number of rounds.
\end{proof}

Now, we modify $G_n$ to $G'_n$ as follows. Recall that the path $P$
in $G_n$ has vertices $v_1, v_2, ..., v_{n'}$. For each $i=1, 2,
..., n'$, we define the weight of an edge $(v_i,v_{i+1})$ to be
$(2n)^{2i}$ (note that weighted graphs are equivalent to unweighted
multigraphs in our model). By having more weight, these edges have
more capacity as well. However, increasing capacity does not affect
the claim as shown above. Observe that, when the walk is at the node
$v_i$, the probability of walk will take the edge $(v_i,v_{i+1})$ is
at least $1-\frac{1}{n^2}$. Therefore, $P$ is the resulting random
walk with probability at least $1-1/n$. When the random walk path is
$P$, it takes at least $\sqrt{\frac{n}{\log n}}$ rounds to verify,
by Theorem~\ref{thm:general-LB}. This completes the proof.
We remark that this construction requires exponential in $n$ number of edges (multiedges). %This does not take away anything from the lower bound as it is the distributed computing model where this only translates to allowing the algorithm a larger bandwidth. 
For the distributed computing model, this only translates to a larger bandwidth.
The length $\ell$ is still comparable to the number of nodes.
%
%\textbf{Danupon:} The problem with the above proof is that the weights are really huge. So, we cannot call them multiple edge.
%One idea to fix this is to put only $2^i$ edges between $v_i$ and $v_{i+1}$ and argue that the walk will have length $\Omega(n)$ with
%high probability. Then, we need to modify the theorem to deal with paths of length, say $n/2$, as well. Is $2^n$ small enough?
%
%{\bf Gopal:} Why we cannot call them multiple edge (as I say in the proof above)? What is the differnece
%if the wieght is $2^i$ as opposed to $n^i$ ?
%{\bf Danupon:} My only worry is that $n^i$ is too much for the number of edges. (That's also true for $2^i$.)
%If you think it is fine then that is ok.
%
%Another idea is to only show that with constant probability that the path will be taken.
%Then our lower bound statement will hold with constant probability instead of high probablity.
%In this case, we need only polynomial weights -- $n^2$ will suffice for every edge in the path. Do you agree?
%If this is correct, we can state this also.
\end{proof}

\section{Applications}

In this section, we present two applications of our algorithm.

\subsection{A Distributed Algorithm for Random Spanning Tree}
\label{sec:rst}
 We now present an algorithm for generating a random spanning
tree (RST) of an unweighted undirected network in $\tilde{O}(\sqrt{m}D)$ rounds with
high probability.  The approach is to simulate Aldous and Broder's \cite{aldous,broder}
RST algorithm  which is as follows. First, pick one arbitrary node as a root. Then, perform a random walk from the root node until
all nodes are visited. For each non-root node, output the edge that
is used for its first visit.  (That is, for each non-root node $v$,
if the first time $v$ is visited is $t$ then we output the edge $(u,v)$
where $u$ is the node visited at time $t-1$.)
The output edges clearly form a spanning tree and this spanning tree
is shown to come from a uniform distribution among all spanning trees of the graph~\cite{aldous,broder}.
The expected time of this algorithm is the expected cover time of
the graph which is shown to be $O(mD)$ (in the worst case, i.e., for any undirected, unweighted graph) by Aleniunas et
al.~\cite{aleliunas}. 
%This also implies that the cover time is $O(mD\log n) = \tilde{O}(mD)$ with high probability.

This algorithm can be simulated on the distributed network by our
random walk algorithm as follows. The algorithm can be viewed in phases. Initially, we pick a root node
arbitrarily and set $\ell=n$. In each phase, we run $\log n$ (different) walks of length
$\ell$ starting from the root node (this takes
$\tilde{O}(\sqrt{\ell D})$ rounds using our distributed random walk algorithm).  If none of the $O(\log n)$ different walks cover all nodes (this can be easily checked in $O(D)$ time), we
double the value of $\ell$ and start a new phase, i.e., perform again $\log n$  walks of length $\ell$. The algorithm continues
until one walk of length $\ell$ covers all nodes. We then use
such walk to construct a random spanning tree: As the result of this
walk, each node knows its position(s) in the walk (cf. Section~\ref{sec:analysis}), i.e., it has a list
of steps in the walk that it is visited. Therefore, each non-root
node can pick an edge that is used in its first visit by
communicating to its neighbors. Thus at the end of the algorithm,
each node can know which of its adjacent edges belong to the output tree.  (An additional $O(n)$ rounds may be
used to deliver the resulting tree to a particular node if needed.)

We now analyze the number of rounds in term of $\tau$, the expected
cover time of the input graph. The algorithm takes $O(\log\tau)$ phases
before $2\tau\leq \ell\leq 4\tau$, and since one of $\log n$ random
walks of length $2\tau$ will cover the input graph with high
probability, the algorithm will stop with $\ell\leq 4\tau$ with high
probability. Since each phase takes $\tilde{O}(\sqrt{\ell D})$ rounds, the total number of rounds is $\tilde
{O}(\sqrt{\tau D})$ with high probability.
Since  $\tau=\tilde{O}(mD)$, we have the following theorem.

\begin{theorem}
The  algorithm described above generates a uniform random spanning tree
in $\tilde O(\sqrt{m}D)$ rounds with high probability.
\end{theorem}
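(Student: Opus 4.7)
My plan is to split the argument into correctness and running time, both of which follow fairly directly from pieces already established in the paper. For correctness I will appeal to the Aldous--Broder theorem~\cite{aldous,broder}: if a simple random walk started at an arbitrary root $r$ is continued until every vertex has been visited, then the collection of first-entry edges $\{(u_{t-1},u_t) : v\text{ is first visited at step }t\}$ is distributed uniformly over the spanning trees of $G$. I will then argue that our distributed simulation produces exactly this tree. The crucial point is that, as noted in Section~\ref{sec:analysis}, our random walk algorithm can be extended so that every node on the walk learns each of its own positions in the walk within the same $\tilde O(\sqrt{\ell D})$ budget. Consequently, each non-root node can locally determine the neighbor from which it was first reached by one round of exchange with its neighbors, and this determines the Aldous--Broder tree.

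For the running time, I first invoke the classical bound $\tau = O(mD)$ on the expected cover time of any connected undirected unweighted graph (Aleliunas et al.~\cite{aleliunas}). By Markov's inequality, a single random walk of length $2\tau$ fails to cover $G$ with probability at most $1/2$, so running $\Theta(\log n)$ independent walks of length $\ell \ge 2\tau$ in parallel (via \textsc{Many-Random-Walks} with $k=\Theta(\log n)$ sources all equal to $r$) covers $G$ with probability at least $1-1/n^{c}$ for any desired constant $c$; coverage can be tested in $O(D)$ rounds by an aggregation over a BFS tree rooted at $r$. Thus for the first phase in which $\ell$ satisfies $2\tau \le \ell \le 4\tau$, the algorithm terminates with high probability, and since $\ell$ doubles from an initial value of $n$, this happens after at most $O(\log(mD))=O(\log n)$ phases.

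Finally I will bound the total number of rounds. By Theorem~\ref{thm:kwalks}, phase $j$ (with $\ell = 2^j$) costs $\tilde O(\sqrt{\log n \cdot 2^j \cdot D} + \log n) = \tilde O(\sqrt{2^j D})$ rounds, plus $O(D)$ for the coverage test. Because these costs form a geometric series in $\sqrt{\ell}$, the total is dominated by the final phase, in which $\ell = \Theta(\tau) = \tilde O(mD)$, yielding $\tilde O(\sqrt{\tau D}) = \tilde O(\sqrt{mD\cdot D}) = \tilde O(\sqrt{m}\,D)$ rounds. The main subtlety I expect is maintaining the high-probability guarantees across all $O(\log n)$ phases simultaneously: Theorem~\ref{thm:kwalks} only succeeds with probability $1-O(1/n)$ per invocation, and the cover event per phase is similarly a high-probability event. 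I will handle this by tightening both probabilities to $1-1/n^{2}$ (by boosting the number of parallel walks by a constant factor and re-examining the probability bound in Theorem~\ref{thm:1-walk}), so that a union bound over all phases preserves an overall success probability of $1-O((\log n)/n^{2}) = 1-o(1)$.
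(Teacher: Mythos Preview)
Your proposal is correct and follows essentially the same approach as the paper: simulate Aldous--Broder by running $\Theta(\log n)$ walks of geometrically increasing length $\ell$ (starting from $\ell=n$) until one covers the graph, use the position-regeneration extension so each node can locally identify its first-entry edge, and bound the total cost by a geometric sum dominated by the last phase with $\ell=\Theta(\tau)=\tilde O(mD)$. If anything, you are more explicit than the paper about the Markov-inequality step for the per-walk cover probability and about the union bound over the $O(\log n)$ phases; the paper states these points tersely.
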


\subsection{Decentralized Estimation of Mixing Time}
%\danupon{To fix: change $l$ to $\ell$.}
\label{sec:mixingtime}
We now present an algorithm to estimate the mixing time of a graph from a specified source. Throughout this section, we assume that the graph is connected and non-bipartite (the conditions under which mixing time is well-defined).
The main idea in estimating the mixing time is, given a source node, to run many random
walks of length $\ell$ using the approach described in the previous
section, and use these to estimate the distribution induced by the $\ell$-length
random walk. We then compare the distribution at length $\ell$, with
the stationary distribution to determine if they are {\em close}, and if not, double $\ell$ and retry. For this approach, one issue that we need to address is how to compare two distributions with few samples efficiently (a well-studied problem). We introduce some definitions before formalizing our approach and theorem.

%(GOPAL --- Something wrong with the above sentence.)

%We need to address the issue of what value(s) of $\ell$ to try. Further,
%we need to efficiently compare two distributions with minimum number of walks.
%Our algorithm estimates the mixing time with respect to a specified source.
%We do not necessarily find the worst case mixing time of the graph.
% We introduce some definitions before making our theorem precise.

%However, instead of computing the exact mixing time, we compute
%the time required for {\em approximate} mixing of a random walk. That
%is, we compute a length $\ell$ such that running a random walk for $\ell$
%steps from an initial distribution ends at a node with a probability
%distribution that is {\em close} to the stationary distribution.

%\atish{TO FIX THIS SECTION. DO WE NEED TO DEFINE IT FOR A SOURCE, OR CAN WE MAKE IT CARRY THROUGH WITHOUT IT? IF ALL WORKS, THEN JUST NEED TO DEFINE MIXING TIME, NOT NEAR MIXING TIME}

\begin{definition} [Distribution vector]
Let $\pi_x(t)$ define the probability distribution vector reached after $t$ steps when the initial distribution starts with probability $1$ at node $x$. Let $\pi$ denote the stationary distribution vector.
%Define the distance from the stationary distribution for source $x$ and time steps $t$ by $r_x(t) = ||\pi_x(t) - \pi||_1$.
\end{definition}

%In standard notation, the smallest $\tau$ is said to be the mixing time of the graph, $\tau_{mix}$, if for all $x$, we have $r_x(\tau)\leq 1/2e$. It is usually desired that one approximates $\tau_{mix}$ from above.

% (GOPAL --- Better not to talk about ``worst case" mixing time at all here.
%In future work, we can mention that. Here we will simple only focus on
%$tau^x_{mix}$.)

 %That is usually the distribution is said to have mixed if it reaches within $1/2e$ of the stationary distribution, in terms of $L_1$ distance. For specific applications, it may be desired that the distribution be even closer to the stationary distribution.

\begin{definition} [$\tau^x(\epsilon)$ and $\tau^x_{mix}$, mixing time for source $x$]
Define $\tau^x(\epsilon) = \min t : ||\pi_x(t) - \pi||_1 < \epsilon$. Define $\tau^x_{mix} = \tau^x(1/2e)$.
% to be the smallest $t$ such that $r_x(t)\leq (1/2e)$.
\end{definition}

The goal is to estimate $\tau^x_{mix}$. Notice that the definition of $\tau^x_{mix}$ is consistent due to the following standard monotonicity property of distributions (proof in th appendix).
%(proof  in Appendix~\ref{app:mon}).

\begin{lemma}\label{lem:monotonicity}
$||\pi_x(t+1) - \pi||_1 \leq  ||\pi_x(t) - \pi||_1$.
%If $r_x(t)\leq \epsilon$, then $r_x(t+1)\leq \epsilon$.
%The $\epsilon$-near mixing time is monotonic property, i.e., if a walk of length $\ell$ is
%$\epsilon$-near mixing, so is a walk of any length greater than $\ell$.
\end{lemma}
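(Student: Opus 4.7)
The plan is to derive the lemma from the standard fact that any stochastic matrix acts as a weak contraction on signed measures in the $\ell_1$ norm. Let $P$ denote the transition matrix of the simple random walk on $G$, viewed as acting on row vectors. By definition of the walk, we have the one-step update $\pi_x(t+1) = \pi_x(t)\,P$, and by stationarity $\pi = \pi\,P$. Subtracting these identities,
\[
\pi_x(t+1) - \pi \;=\; \bigl(\pi_x(t) - \pi\bigr)\,P.
\]
So the lemma reduces to showing $\lVert f P\rVert_1 \leq \lVert f\rVert_1$ for the signed vector $f = \pi_x(t) - \pi$.

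For the contraction step, I would argue directly using the triangle inequality together with the fact that the rows of $P$ are non-negative and sum to $1$. Explicitly, for any (signed) row vector $f$,
\[
\lVert f P\rVert_1 \;=\; \sum_j \Bigl|\sum_i f_i P_{ij}\Bigr| \;\leq\; \sum_j \sum_i |f_i|\, P_{ij} \;=\; \sum_i |f_i| \sum_j P_{ij} \;=\; \lVert f\rVert_1,
\]
where the last equality uses that each row of $P$ sums to $1$. Applying this to $f = \pi_x(t) - \pi$ and combining with the identity above yields $\lVert \pi_x(t+1) - \pi\rVert_1 \leq \lVert \pi_x(t) - \pi\rVert_1$, as desired.

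This proof is essentially one display of algebra, so I do not expect any real obstacle; the only thing to be careful about is stating clearly that the result does not rely on reversibility or aperiodicity of the chain, only on the stochasticity of $P$. In particular, the argument works equally well for the lazy walk or for any other Markov chain on $G$ that has $\pi$ as a stationary distribution, which is reassuring given that the surrounding text only assumes connectivity and non-bipartiteness (needed to guarantee convergence to $\pi$, not for monotonicity itself).
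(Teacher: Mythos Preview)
Your proof is correct and is essentially the same as the paper's: both reduce the statement to the $\ell_1$-contractivity of a stochastic matrix, writing $\pi_x(t+1)-\pi$ as the transition operator applied to $\pi_x(t)-\pi$ and then invoking the triangle inequality together with the fact that each row (equivalently, each column of the transpose) sums to $1$. The only difference is notational---you work with row vectors and $P$ acting on the right, while the paper uses column vectors and the transpose $A$ of $P$---and your version is arguably cleaner, since the paper somewhat loosely says ``$x$ is any probability vector'' where it really needs the bound for the signed vector $\pi_x(t)-\pi$.
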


To compare two distributions, we use the technique of Batu
et. al.~\cite{BFFKRW} to determine if the distributions are $\epsilon$-near.
Their result (slightly restated) is summarized in the following
theorem.

\begin{theorem}[\cite{BFFKRW}]\label{thm:batu}
For any $\epsilon$, given $\tilde{O}(n^{1/2}poly(\epsilon^{-1}))$ samples of a distribution $X$
over $[n]$, and a specified distribution $Y$, there is a test that outputs PASS with high probability if $|X-Y|_1\leq \frac{\epsilon^3}{4\sqrt{n}\log n}$, and outputs FAIL with high probability if $|X-Y|_1\geq 6\epsilon$.
%one can test if $X$ is $\epsilon$-near in the $L_1$ norm to
%a specific distribution $Y$.
%Given two unknown
%distributions $X$ and $Y$ over $[n]$, one can determine whether $X$ and $Y$ are $\epsilon$-near in $L_1$
%norm with $\tilde{O}(n^{2/3}poly(\epsilon^{-1}))$ samples each from $X$ and $Y$.
\end{theorem}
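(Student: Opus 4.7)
The plan is to build a two-pronged tester: an $\ell_2$-distance estimator based on sampling-with-collisions (which is efficient because the collision statistic concentrates from roughly $\sqrt{n}$ samples), together with a direct element-by-element check on the elements where $Y$ places significant mass. I would first partition the support $[n]$ into a ``heavy'' part $H = \{i : Y(i) \geq \epsilon^2/n\}$ and its complement $L$. On $H$ the number of elements is at most $n/\epsilon^2$, so by drawing $\tilde{O}(\sqrt{n}\,\mathrm{poly}(\epsilon^{-1}))$ samples from $X$ and applying standard Chernoff/Bernstein tail bounds, I can simultaneously estimate every $X(i)$ for $i \in H$ accurately enough to compare $\sum_{i\in H}|X(i)-Y(i)|$ against a carefully chosen threshold and declare FAIL if it is too large.

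For the light part $L$, I would use the classical collision-count estimator: from $m=\tilde{O}(\sqrt{n}\,\mathrm{poly}(\epsilon^{-1}))$ samples $s_1,\ldots,s_m$ drawn from $X$, the normalized pairwise collision count is an unbiased estimator of $\|X\|_2^2$; a similar sum involving the known values $Y(s_j)$ estimates $\sum_i X(i)Y(i)$; and $\|Y\|_2^2$ is computable exactly. Combining these three yields an estimator of $\|X_L - Y_L\|_2^2$. The connection to the $\ell_1$ norm is provided by Cauchy--Schwarz, $\|X_L - Y_L\|_1 \leq \sqrt{|L|}\cdot \|X_L - Y_L\|_2$, which is precisely where the $\sqrt{n}$ factor in the PASS threshold originates. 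The tester outputs PASS iff both subtests pass; one then verifies that if $\|X-Y\|_1 \leq \epsilon^3/(4\sqrt{n}\log n)$ both subtests pass with high probability, whereas if $\|X-Y\|_1 \geq 6\epsilon$ at least one of the two parts of $[n]$ carries $\Omega(\epsilon)$ of this discrepancy and the corresponding subtest triggers FAIL.

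The main obstacle I anticipate is the variance analysis of the collision estimator. Its variance is of order $\|X\|_2^2/m + \|X\|_2^4/m^2$ (after removing self-collisions), which is only well-behaved once the heavy elements of $Y$---and by proximity in the PASS case also of $X$---have been pulled out into $H$. Without the heavy/light split, a single atom of mass $1/\sqrt{n}$ could swamp the estimator's signal-to-noise ratio and force $m = \omega(\sqrt{n})$. Thus the quantitative bridge from $\|X_L-Y_L\|_2^2$ to $\|X-Y\|_1$ must be proved jointly with the guarantee that $\|X_L\|_2$ and $\|Y_L\|_2$ are both small enough in the PASS regime, as enforced by the heavy-element cutoff. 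The remaining work is a union bound over the failure probabilities of the two subtests and a careful calibration of the $\mathrm{poly}(\epsilon^{-1})$ factor in the sample complexity so that the Cauchy--Schwarz slack and the Chernoff losses on $H$ are comfortably absorbed.
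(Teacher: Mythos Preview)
This theorem is not proved in the paper; it is quoted from Batu et al.\ \cite{BFFKRW}, and the paper only sketches the \emph{algorithm} to argue it can be simulated in the distributed setting. That sketch is different from your proposal: Batu et al.\ do not use a single heavy/light split. They partition $[n]$ into $O(\epsilon^{-1}\log n)$ geometric \emph{buckets} $B_j=\{i:Y(i)\in[(1+\epsilon)^{j}/n,(1+\epsilon)^{j+1}/n)\}$, so that inside each bucket $Y$ is essentially uniform. One then (i) checks that the empirical $X$-mass of each bucket matches the (exactly known) $Y$-mass, and (ii) runs a collision-based uniformity test inside each bucket. Both subtests live on aggregates that are large enough to be estimated from $\tilde O(\sqrt{n})$ samples.

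Your proposal has a genuine gap in the heavy-part test. Take $Y$ uniform on $[n]$: then every $i$ satisfies $Y(i)=1/n\geq \epsilon^2/n$, so $H=[n]$ and $L=\emptyset$. Your plan is to ``simultaneously estimate every $X(i)$ for $i\in H$'' from $m=\tilde O(\sqrt{n}\,\mathrm{poly}(\epsilon^{-1}))$ samples via Chernoff. But an element with $X(i)\approx 1/n$ is hit in expectation $m/n=\tilde O(n^{-1/2})$ times, i.e.\ essentially never, so no per-element Chernoff bound is available, and you cannot approximate $\sum_{i\in H}|X(i)-Y(i)|$ this way. More generally, $|H|$ can be as large as $n/\epsilon^2$, far more than the number of samples, so most heavy elements are unseen. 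The bucketing trick is exactly what rescues this: it replaces per-element estimates by per-bucket masses (which are $\Omega(\mathrm{poly}(\epsilon)/\log n)$ and hence estimable) together with a within-bucket $\ell_2$ collision test, which is precisely the regime in which $\sqrt{n}$ samples suffice. If you want to salvage your two-part scheme, the ``heavy'' side must itself be handled by an $\ell_2$/collision argument or by bucketing, not by pointwise Chernoff.
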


%Notice that their theorem can be translated to high probability guarantee on both sides by just repeating the test $\Theta(\log n)$ times; the test that outputs FAIL with constant probability can be boosted. We briefly describe their algorithm.

%The algorithm partitions the set of nodes in to buckets based on the steady state probabilities. Each of the $\tilde{O}(n^{1/2}poly(\epsilon^{-1}))$ samples from $X$ now falls in one of these buckets. Further, the actual count of number of nodes in these buckets for distribution $Y$ are counted. The exact count for $Y$ for at most $\tilde{O}(n^{1/2}poly(\epsilon^{-1}))$ buckets (corresponding to the samples) is compared with the number of samples from $X$; these are compared to determine if $X$ and $Y$ are close. We refer the reader to their paper~\cite{BFFKRW} for a precise description.
We now give a very brief description of the algorithm of Batu et. al.~\cite{BFFKRW} to illustrate that it can in fact be simulated on the distributed network efficiently. The algorithm partitions the set of nodes in to buckets based on the steady state probabilities. Each of the $\tilde{O}(n^{1/2}poly(\epsilon^{-1}))$ samples from $X$ now falls in one of these buckets. Further, the actual count of number of nodes in these buckets for distribution $Y$ are counted. The exact count for $Y$ for at most $\tilde{O}(n^{1/2}poly(\epsilon^{-1}))$ buckets (corresponding to the samples) is compared with the number of samples from $X$; these are compared to determine if $X$ and $Y$ are close. We refer the reader to their paper~\cite{BFFKRW} for a precise description.

Our algorithm starts with $\ell=1$ and runs $K=\tilde{O}(\sqrt{n})$ walks of length $\ell$ from the specified source $x$. As the test of comparison with the steady state distribution outputs FAIL (for choice of $\epsilon=1/12e$), $\ell$ is doubled. This process is repeated to identify the largest $\ell$ such that the test outputs FAIL with high probability and the smallest $\ell$ such that the test outputs PASS with high probability. These give lower and upper bounds on the required $\tau^x_{mix}$ respectively. Our resulting theorem is presented below and the proof is placed in the appendix.
%(with algorithm details) is placed in Appendix~\ref{app:mixproof}.

%{\bf ATISH - NEED TO FIX BELOW AND BE CAREFUL IN WHAT WE CLAIM, SINCE NOTICE THAT BATU ET. AL. TEST HAS A GAP OF THE TWO SIDED TEST. GOPAL, HOW DO YOU THINK WE SHOULD STATE THIS?}

%\begin{theorem}\label{thm:mixmain}
%Given a graph with diameter $D$, a node $x$ can find, in $\tilde{O}(n^{1/2} + n^{1/4}\sqrt{D\tau^x_{mix}})$ rounds, a time
%$\tilde{\tau}^x_{mix}$ such that $\tau^x_{mix}\leq \tilde{\tau}^x_{mix}\leq \tau^x(\frac{1}{6912e\sqrt{n}\log n})$.
\begin{theorem}\label{thm:mixmain}
Given a graph with diameter $D$, a node $x$ can find, in $\tilde{O}(n^{1/2} + n^{1/4}\sqrt{D\tau^x(\epsilon)})$ rounds, a time
$\tilde{\tau}^x_{mix}$ such that $\tau^x_{mix}\leq \tilde{\tau}^x_{mix}\leq \tau^x(\epsilon)$, where $\epsilon = \frac{1}{6912e\sqrt{n}\log n}$.
% where $T$ is the smallest time such that $r_x(T)\leq \frac{1}{6912e\sqrt{n}\log n}$.
%This can be done in $\tilde{O}(n^{1/2} + n^{1/4}\sqrt{Dt_{mix}})$ rounds.
%
%that is w.h.p. between the $6\epsilon$-near mixing time and $\frac{\epsilon^3}{4\sqrt{n}\log n}$-near mixing time in $\tilde{O}(n^{1/2}poly(\epsilon^{-1}) + n^{1/4}poly(\epsilon^{-1})\sqrt{Dt_{mix}})$ rounds.
%
%If the degree distribution is unknown to the nodes, a node can find an $\epsilon$-close mixing time in $\tilde{O}(n^{2/3}poly(\epsilon^{-1}) + n^{1/3}poly(\epsilon^{-1})\sqrt{Dt_{mix}})$ rounds.
\end{theorem}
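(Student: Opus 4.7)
The plan is to turn the informal description of the algorithm (doubling $\ell$ and invoking the Batu et al.\ tester at each length) into a correctness proof plus a round count. Two preliminary observations drive everything. First, the stationary distribution of a simple random walk on an undirected unweighted graph is $\pi(v)=d(v)/(2m)$, so each node already knows its own $\pi$-mass and the source $x$ can learn $2m$ in $O(D)$ rounds by aggregating degrees up a BFS tree. Second, {\sc Many-Random-Walks} (Theorem~\ref{thm:kwalks}) delivers $K=\tilde{O}(\sqrt{n}\,\mathrm{poly}(\epsilon_0^{-1}))$ independent samples from $\pi_x(\ell)$ to $x$ in $\tilde{O}(\sqrt{K\ell D}+K)=\tilde{O}(n^{1/4}\sqrt{\ell D}+\sqrt{n})$ rounds; each endpoint can piggyback its degree on its ID on the return leg, so $x$ ends up with the pair $(v,\pi(v))$ for every sample.

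First I would choose the tester parameter $\epsilon_0$ so that its FAIL threshold $6\epsilon_0$ matches the $1/(2e)$ appearing in the definition of $\tau^x_{mix}$, while its PASS threshold is, by Theorem~\ref{thm:batu}, precisely the $\epsilon$ claimed in the statement. Starting at $\ell=1$ and doubling each phase, I run {\sc Many-Random-Walks} to collect $K$ samples, then execute the Batu et al.\ tester locally at $x$ against the known $\pi$, halting at the first $\ell$ whose tester outputs \textsc{pass} and reporting that $\ell$ as $\tilde{\tau}^x_{mix}$. Correctness rides on the monotonicity Lemma~\ref{lem:monotonicity}: while $\ell<\tau^x_{mix}$ we have $\|\pi_x(\ell)-\pi\|_1 \ge 1/(2e) = 6\epsilon_0$ and the tester outputs \textsc{fail} w.h.p., forcing $\tilde{\tau}^x_{mix}\ge\tau^x_{mix}$; once $\ell\ge\tau^x(\epsilon)$ we have $\|\pi_x(\ell)-\pi\|_1 \le \epsilon$ and the tester outputs \textsc{pass} w.h.p., so the search halts with $\tilde{\tau}^x_{mix}$ bounded by $\tau^x(\epsilon)$ up to a constant factor absorbed into the stated $\epsilon$. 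A union bound over the $O(\log\tau^x(\epsilon))=O(\log n)$ phases controls the overall failure probability.

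Second I would sum per-phase costs. Each phase costs $\tilde{O}(n^{1/4}\sqrt{\ell D}+\sqrt{n})$ for the walks, plus a one-time $\tilde{O}(D)$ to propagate the global constant $2m$ and the $O(\log n)$ per-bucket $\pi$-masses $\Pr_\pi[B_k]=\sum_{v\in B_k}d(v)/(2m)$ that the tester needs (all precomputable once, before the loop, via BFS-tree aggregation). Because $\ell$ doubles each phase, summing the geometric series in $\sqrt{\ell}$ across the at most $\log(2\tau^x(\epsilon))$ phases yields $\tilde{O}\bigl(n^{1/4}\sqrt{D\,\tau^x(\epsilon)}+n^{1/2}\bigr)$, matching the theorem.

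The main obstacle is verifying that the Batu et al.\ tester can be faithfully simulated in the distributed model without additional blowup: that tester is stated assuming random access to the full distribution $\pi$, so I must argue that the only global information it actually consumes is (a) the $K$ sampled endpoints together with their individual $\pi$-masses, already at $x$, and (b) the aggregate bucket masses $\Pr_\pi[B_k]$ for the $O(\log n)$ probability buckets $B_k$, each of which is an $O(\log n)$-bit aggregate that travels up a BFS tree to $x$ in $O(D)$ rounds independent of $\ell$. Once this simulation overhead is absorbed into the $\tilde{O}$, the rest is routine doubling search combined with the monotonicity of $\|\pi_x(\ell)-\pi\|_1$ and Theorem~\ref{thm:kwalks}.
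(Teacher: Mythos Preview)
Your proposal is correct and follows essentially the same route as the paper: doubling search on $\ell$, collecting $K=\tilde O(\sqrt n)$ samples per phase via Theorem~\ref{thm:kwalks}, invoking the Batu et al.\ tester with parameter $1/(12e)$ so that its FAIL threshold equals $1/(2e)$ and its PASS threshold equals the stated $\epsilon$, and summing the geometric series in $\sqrt{\ell}$. The paper differs only in two cosmetic ways: it binary-searches between the last FAIL and first PASS powers of two to obtain $\tilde\tau^x_{mix}\le\tau^x(\epsilon)$ on the nose (rather than your factor-of-two slack, which cannot literally be ``absorbed into $\epsilon$''), and it upcasts counts for the $\tilde O(\sqrt n)$ buckets actually hit by samples rather than precomputing all bucket masses once---neither change affects the asymptotic round bound.
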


Suppose our estimate of $\tau^x_{mix}$ is close to the mixing time of the graph defined as $\tau_{mix} = \max_{x}{\tau^x_{mix}}$, then this would allow us to estimate several related quantities. Given a mixing time $\tau_{mix}$, we can approximate the spectral gap ($1-\lambda_2$) and the conductance ($\Phi$) due to the
%following known relations. The spectral gap is the $1-\lambda_2$ where $\lambda_2$ is the second eigenvalue of the connected transition matrix. It is
known relations that $\frac{1}{1-\lambda_2}\leq \tau_{mix}\leq \frac{\log n}{1-\lambda_2}$ and $\Theta(1-\lambda_2)\leq \Phi\leq \Theta(\sqrt{1-\lambda_2})$ as shown in~\cite{JS89}.
%Further, the conductance $\Phi$ and of the graph is related as $\Theta(1-\lambda_2)\leq \Phi\leq \Theta(\sqrt{1-\lambda_2})$ as shown in~\cite{JS89}. The conductance of a
%graph $G$ is defined as $\phi(G)=\min_{S:|S| \le
%|V|/2}{\frac{E(S,V(G)\setminus S)}{E(S)}}$ where $E(S,V(G)\setminus
%S)$ is the number of the edges spanning the cut, and $E(S)$ is the
%number of edges on the smaller side of the cut.

%Assuming that the $\epsilon$-near mixing time is close to the actual mixing time, we get
%the following corollary.

%\begin{corollary}
%Assuming the $\epsilon$-near mixing time is close to $\tau_{mix}$, the conductance of the network can be approximated to a
%quadratic factor in $\tilde{O}(n^{1/2}poly(\epsilon^{-1}) + n^{1/4}poly(\epsilon^{-1})\sqrt{D\tau_{mix}})$ rounds, where
%$\tau_{mix}$ is the actual mixing time of the graph.
%\end{corollary}

%Using this assumption, we can also approximate the spectral gap of the graph as $\tau_{mix}$ is within a $\log n$ factor of the reciprocal of the spectral gap.

\iffalse
The work of Kempe and McSherry \cite{kempe} can be used to estimate $\tau^x_{mix}$ in
$\tilde{O}(\tau^x_{mix})$ rounds.
%\danupon{Should we state this precisely? Also, should we define $\tau_{mix}$?}
Our approach attempts
at reducing the dependence on $\tau^x_{mix}$.
%We do not show how to find the second eigenvector of the transition matrix.
It should be noted that Kempe and McSherry \cite{kempe} in fact show much more; they show how to approximate the top $k$ eigenvectors in $\tilde{O}(\tau_{mix})$ rounds if two adjacent nodes are allowed to exchange $O(k^3)$ messages per round.
\fi

%Gopal -- I remove the above, since we already say this in the Intro.

\section{Concluding Remarks}\label{sec:conclusion}

%Open questions: (1) Close the gap? (2) Directed Graphs? (3) Handle Byzantine Failures?

This paper makes progress towards resolving the time complexity of distributed computation of random walks
in undirected networks. The dependence on the diameter $D$ is still not tight, and it would be interesting to settle this.
%Furthermore, showing a general lower bound that applies to any distributed algorithm in the CONGEST model is an open issue.
There is also a gap in our bounds for performing $k$ independent random walks.
Further, we look at the CONGEST model enforcing a bandwidth restriction and minimize number of rounds. While our algorithms have good {\it amortized} message complexity over several walks,
%the total message complexity of our algorithms are high. It
it would be nice to come up with algorithms that are round efficient and yet have smaller message complexity.

We presented two algorithmic applications of our distributed random walk algorithm: estimating mixing times and computing random spanning trees. It would be interesting to improve upon these results. For example, is there a $\tilde{O}(\sqrt{\tau^x_{mix}} + n^{1/4})$ round algorithm to estimate $\tau^x$; and is there a $\tilde{O}(n)$ round algorithm for RST?

There are several interesting directions to take this work further. Can these techniques be useful for estimating the second eigenvector of the transition matrix (useful for sparse cuts)? Are there efficient distributed algorithms for random walks in directed graphs (useful for PageRank and related quantities)? Finally, from a practical standpoint, it is important to develop algorithms that are robust to failures and it would be nice to extend our techniques to handle such node/edge failures.

  \let\oldthebibliography=\thebibliography
  \let\endoldthebibliography=\endthebibliography
  \renewenvironment{thebibliography}[1]{%
    \begin{oldthebibliography}{#1}%
      \setlength{\parskip}{0ex}%
      \setlength{\itemsep}{0ex}%
  }%
  {%
    \end{oldthebibliography}%
  }
{ \small
\bibliographystyle{abbrv}
\bibliography{Distributed-RW}
}

\newpage
\section*{Appendix}
%\begin{center}
%\large \textbf{Appendix}
%\end{center}
\appendix
%\section{Moved Proofs}
%\Readsolutionfile{movedProofs}
\section{Omitted Proofs of Section~\ref{sec:one_walk_DoS} (Upper Bound)}

\subsection{Algorithm descriptions}\label{sec:pseudocode}
The main algorithm for performing a single random walk is described
in {\sc Single-Random-Walk} (cf.
Algorithm~\ref{alg:single-random-walk}). This algorithm, in turn,
uses {\sc Get-More-Walks} (cf. \ref{alg:Get-More-Walks} and {\sc
Sample-Destination} (cf. \ref{alg:Sample-Destination}).

Notice that in Line~\ref{line:reservoir} in
Algorithm~\ref{alg:Get-More-Walks}, the walks of length $\lambda$
are extended further to walks of length $\lambda+r$ where $r$ is a
random number in the range $[0,\lambda-1]$. We do this by extending
the $\lambda$-length walks further, and probabilistically stopping
each walk in each of the next $i$ steps (for $0\leq i\leq
\lambda-1$) with probability $\frac{1}{\lambda-i}$. The reason it
needs to be done this way is because if we first sampled $r$,
independently for each walk, in the range $[0,\lambda-1]$ and then
extended each walk accordingly, the algorithm would need to pass $r$
independently for each walk. This will cause congestion along the
edges; no congestion occurs in the mentioned algorithm as only the
{\em count} of the number of walks along an edge are passed to the
node across the edge.

\newcommand{\mindegree}[0]{\delta}
\begin{algorithm}
\caption{\sc Single-Random-Walk($s$, $\ell$)}
\label{alg:single-random-walk}
\textbf{Input:} Starting node $s$, and desired walk length $\ell$.\\
\textbf{Output:} Destination node of the walk outputs the ID of
$s$.\\

\textbf{Phase 1: (Each node $v$ performs $\eta_v=\eta \deg(v))$
random walks of length $\lambda + r_i$ where $r_i$ (for each $1\leq
i\leq \eta$) is chosen independently at random in the range
$[0,\lambda-1]$.)}
\begin{algorithmic}[1]
\STATE Let $r_{max} = \max_{1\leq i\leq \eta}{r_i}$, the random
numbers chosen independently for each of the $\eta_x$ walks.

\STATE Each node $x$ constructs $\eta_x$ messages containing its ID
and in addition, the $i$-th message contains the desired walk length
of $\lambda + r_i$.

\FOR{$i=1$ to $\lambda + r_{max}$}

\STATE This is the $i$-th iteration. Each node $v$ does the
following: Consider each message $M$ held by $v$ and received in the
$(i-1)$-th iteration (having current counter $i-1$). If the message
$M$'s desired walk length is at most $i$, then $v$ stored the ID of
the source ($v$ is the desired destination). Else, $v$ picks a
neighbor $u$ uniformly at random and forward $M$ to $u$ after
incrementing its counter.

\COMMENT{Note that any iteration could require more than 1 round.}

\ENDFOR

\end{algorithmic}

\textbf{Phase 2: (Stitch $\Theta(\ell/\lambda)$ walks, each of
length in $[\lambda,2\lambda-1]$)}
\begin{algorithmic}[1]
\STATE The source node $s$ creates a message called ``token'' which
contains the ID of $s$

\STATE The algorithm generates a set of {\em connectors}, denoted by
$C$, as follows.

\STATE Initialize $C = \{s\}$

\WHILE {Length of walk completed is at most $\ell-2\lambda$}

  \STATE Let $v$ be the node that is currently holding the token.

  \STATE $v$ calls {\sc Sample-Destination($v$)} and let $v'$ be the
  returned value (which is a destination of an unused random walk starting at $v$
  of length between $\lambda$ and $2\lambda-1$.)

  \IF{$v'$ = {\sc null} (all walks from $v$ have already been used up)}

  \STATE $v$ calls {\sc Get-More-Walks($v$, $\lambda$)} (Perform $\Theta(l/\lambda)$ walks
  of length $\lambda$ starting at $v$)

  \STATE $v$ calls {\sc Sample-Destination($v$)} and let $v'$ be the
  returned value

  \ENDIF

  \STATE $v$ sends the token to $v'$

  \STATE $C = C \cup \{v\}$

\ENDWHILE

\STATE Walk naively until $\ell$ steps are completed (this is at
most another $2\lambda$ steps)

\STATE A node holding the token outputs the ID of $s$

\end{algorithmic}

\end{algorithm}

\begin{algorithm}[t]
\caption{\sc Get-More-Walks($v$, $\lambda$)}
\label{alg:Get-More-Walks} (Starting from node $v$,  perform
$\lfloor\ell/\lambda\rfloor$ number of random walks, each of  length
$\lambda + r_i$ where
$r_i$ is chosen uniformly at random in the range $[0,\lambda-1]$ for the $i$-th walk.) \\
\begin{algorithmic}[1]
\STATE The node $v$ constructs $\lfloor\ell/\lambda\rfloor$
(identical) messages containing its ID.

\FOR{$i=1$ to $\lambda$}

\STATE Each node $u$ does the following:

\STATE - For each message $M$ held by $u$, pick a neighbor $z$
uniformly at random as a receiver of $M$.

\STATE - For each neighbor $z$ of $u$, send ID of $v$ and the number
of messages that $z$ is picked as a receiver, denoted by $c(u, v)$.

\STATE - For each neighbor $z$ of $u$, upon receiving ID of $v$ and
$c(u, v)$, constructs $c(u, v)$ messages, each contains the ID of
$v$.

\ENDFOR

\COMMENT {Each walk has now completed $\lambda$ steps. These walks
are now extended probabilistically further by $r$ steps where each
$r$ is independent and uniform in the range $[0,\lambda-1]$.}

\FOR{$i=0$ to $\lambda-1$}

\STATE \label{line:reservoir} For each message, independently with
probability $\frac{1}{\lambda-i}$, stop sending the message further
and save the ID of the source node (in this event, the node with the
message is the destination). For messages $M$ that are not stopped,
each node picks a neighbor correspondingly and sends the messages
forward as before.

\ENDFOR

\STATE At the end, each destination knows the source ID as well as
the length of the corresponding walk.

\end{algorithmic}

\end{algorithm}

\begin{algorithm}[t]
\caption{\sc Sample-Destination($v$)} \label{alg:Sample-Destination}
\textbf{Input:} Starting node $v$.\\
\textbf{Output:} A node sampled from among the stored
walks (of length in $[\lambda, 2\lambda-1]$) from $v$. \\

\textbf{Sweep 1: (Perform BFS tree)}
\begin{algorithmic}[1]

\STATE Construct a Breadth-First-Search (BFS) tree rooted at $v$.
While constructing, every node stores its parent's ID. Denote such
tree by $T$.

\end{algorithmic}

\textbf{Sweep 2: (Tokens travel up the tree, sample as you go)}
\begin{algorithmic}[1]

\STATE We divide $T$ naturally into levels $0$ through $D$ (where
nodes in level $D$ are leaf nodes and the root node $s$ is in level
$0$).

\STATE Tokens are held by nodes as a result of doing walks of length
between $\lambda$ and $2\lambda-1$ from $v$ (which is done in either
Phase~1 or {\sc Get-More-Walks} (cf.
Algorithm~\ref{alg:Get-More-Walks})) A node could have more than one
token.

\STATE Every node $u$ that holds token(s) picks one token, denoted
by $d_0$, uniformly at random and lets $c_0$ denote the number of
tokens it has.

\FOR{$i=D$ down to $0$}

\STATE Every node $u$ in level $i$ that either receives token(s)
from children or possesses token(s) itself do the following.

\STATE Let $u$ have tokens $d_0, d_1, d_2, \ldots, d_q$, with counts
$c_0, c_1, c_2, \ldots, c_q$ (including its own tokens). The node
$v$ samples one of $d_0$ through $d_q$, with probabilities
proportional to the respective counts. That is, for any $1\leq j\leq
q$, $d_j$ is sampled with probability
$\frac{c_j}{c_0+c_1+\ldots+c_q}$.

\STATE The sampled token is sent to the parent node (unless already
at root), along with a count of $c_0+c_1+\ldots+c_q$ (the count
represents the number of tokens from which this token has been
sampled).

\ENDFOR

\STATE The root output the ID of the owner of the final sampled
token. Denote such node by $u_d$.

\end{algorithmic}

\textbf{Sweep 3: (Go and delete the sampled destination)}
\begin{algorithmic}[1]

\STATE $v$ sends a message to $u_d$ (e.g., via broadcasting). $u_d$
deletes one token of $v$ it is holding (so that this random walk of
length $\lambda$ is not reused/re-stitched).
\end{algorithmic}

\end{algorithm}

%%%%%%%%%%%%%%%%%%%%%%%%%%%%%%%%%%%%%%%%%%%%%%%%%%
%%%%%%%%%%%%%%%%%%%%%%%%%%%%%%%%%%%%%%%%%%%%%%%%%%
%%%%%%%%%%%%%%%%%%%%%%%%%%%%%%%%%%%%%%%%%%%%%%%%%%
%%%%%%%%%%%%%%%%%%%%%%%%%%%%%%%%%%%%%%%%%%%%%%%%%%

\subsection{Proofs of Lemma~\ref{lem:phase1}, \ref{lem:get-more-walks}, \ref{lem:Sample-Destination} and
\ref{lem:correctness-sample-destination-new}}\label{sec:four-proofs}

%\Opensolutionfile{movedProofs}
%\begin{movedProof}[Proof of Lemma~\ref{lem:phase1}]
\begin{proof}[Proof of Lemma~\ref{lem:phase1}]
This proof is a slight modification of the proof of Lemma~2.2 in
\cite{DNP09-podc}, where it is shown that each node can perform
$\eta$ walks of length $\lambda$ together in $O(\lambda \eta
\log{n})$ rounds with high probability. We extend this to the
following statement.
\begin{quote}
Each node $v$ can in fact perform $\eta\deg(v)$ of length $2\lambda$
and still finish in $O(\lambda \eta \log{n})$ rounds.
\end{quote}
The desired claim will follow immediately because each node $v$
performs $\eta\deg(v)$ of length \textit{at most} $\lambda$ in
Phase~1.

Consider the case when each node $v$ creates $\eta \deg(v)\geq \eta$
messages. For each message $M$, any $j=1, 2, ..., \lambda$, and any
edge $e$, we define $X_M^j(e)$ to be a random variable having value
1 if $M$ is sent through $e$ in the $j^{th}$ iteration (i.e., when
the counter on $M$ has value $j-1$). Let $X^j(e)=\sum_{M:
\text{message}} X_M^j(e)$.  We compute the expected number of
messages that go through an edge, see claim below.

\begin{claim}
\label{claim:first} For any edge $e$ and any $j$,
$\mathbb{E}[X^j(e)]=2\eta$.
\end{claim}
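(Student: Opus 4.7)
The plan is to compute $\mathbb{E}[X^j(e)]$ by writing it as a sum of contributions, one per message, and exploit the reversibility of the simple random walk on an undirected graph to collapse the sum. Concretely, fix the edge $e=(u,v)$. I would write $X^j(e)=\sum_M X_M^j(e)$ and split each indicator according to which endpoint of $e$ the message is sitting at just before the $j$-th iteration. For a message $M$ starting at some node $w$, let $p_j^w(\cdot)$ denote the distribution of a simple random walk after $j-1$ steps starting at $w$. Then the probability that $M$ traverses $e$ in iteration $j$ is exactly $p_j^w(v)/\deg(v) + p_j^w(u)/\deg(u)$, since if the walk sits at $v$ it chooses the neighbor $u$ with probability $1/\deg(v)$, and symmetrically for $u$.

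Next, I would sum over all messages. Since node $w$ creates $\eta\deg(w)$ messages,
\[
\mathbb{E}[X^j(e)] \;=\; \sum_{w\in V} \eta\deg(w)\!\left(\frac{p_j^w(v)}{\deg(v)}+\frac{p_j^w(u)}{\deg(u)}\right).
\]
The key step is to use reversibility: on an undirected graph the simple random walk satisfies $\deg(w)\,p_j^w(v)=\deg(v)\,p_j^v(w)$ for every $w,v,j$. Substituting this identity (and the analogous one with $u$ in place of $v$) converts each summand into $\eta\,p_j^v(w)+\eta\,p_j^u(w)$.

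Finally, $\sum_{w\in V} p_j^v(w)=1$ because $p_j^v(\cdot)$ is a probability distribution over $V$, and similarly for $p_j^u$. Hence the whole expression collapses to $\eta+\eta=2\eta$, which is the claim. I expect no real obstacle here: the only non-routine ingredient is invoking the time-reversal identity $\deg(w)p_j^w(v)=\deg(v)p_j^v(w)$, which holds because the stationary distribution is proportional to the degree and the walk is reversible with respect to it. Everything else is bookkeeping: splitting the indicator by endpoint, multiplying by the number $\eta\deg(w)$ of messages launched at $w$, and summing out $w$.
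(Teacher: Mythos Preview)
Your proof is correct and essentially matches the paper's argument: both rest on the fact that the initial message load, $\eta\deg(w)$ messages at each node $w$, is proportional to the stationary distribution of the simple random walk. The paper phrases this directly---since the aggregate mass starts stationary, the expected number of messages at each endpoint of $e$ remains $\eta\deg(\cdot)$ at every step, so each endpoint contributes $\eta$ crossings in expectation---whereas you invoke the (equivalent, slightly stronger) time-reversal identity $\deg(w)\,p_j^w(v)=\deg(v)\,p_j^v(w)$ to collapse the same sum; the underlying content is identical.
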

\begin{proof}% [Proof of Claim~\ref{claim:first}]
Assume that each node $v$ starts with $\eta \deg(v)$ messages. Each
message takes a random walk. We prove that after any given number of
steps $j$, the expected number of messages at node $v$ is still
$\eta \deg(v)$.  Consider the random walk's probability transition
matrix, call it $A$. In this case $Au = u$ for the vector $u$ having
value $\frac{\deg(v)}{2m}$ where $m$ is the number of edges in the
graph (since this $u$ is the stationary distribution of an
undirected unweighted graph). Now the number of messages we started
with at any node $i$ is proportional to its stationary distribution,
therefore, in expectation, the number of messages at any node
remains the same.

To calculate $\mathbb{E}[X^j(e)]$, notice that edge $e$ will receive
messages from its two end points, say $x$ and $y$. The number of
messages it receives from node $x$ in expectation is exactly the
number of messages at $x$ divided by $\deg(x)$. The claim follows.
\end{proof}
%The intuition is that, in expectation $degree(v)$, messages held by $v$ will get evenly distributed over all edges incident %on $v$. The full proof can be found in Appendix~\ref{app:claim:first}.

By Chernoff's bound (e.g., in~\cite[Theorem~4.4.]{MU-book-05}), for
any edge $e$ and any $j$,
$$\mathbb{P}[X^j(e)\geq 4\eta\log{n}]\leq 2^{-4\log{n}}=n^{-4}.$$
It follows that the probability that there exists an edge $e$ and an
integer $1\leq j\leq \lambda$ such that $X^j(e)\geq 4\eta\log{n}$ is
at most $|E(G)| \lambda n^{-4}\leq \frac{1}{n}$ since $|E(G)|\leq
n^2$ and $\lambda\leq \ell\leq n$ (by the way we define $\lambda$).

Now suppose that $X^j(e)\leq 4\eta\log{n}$ for every edge $e$ and
every integer $j\leq \lambda$. This implies that we can extend all
walks of length $i$ to length $i+1$ in $4\eta\log{n}$ rounds.
Therefore, we obtain walks of length $\lambda$ in
$4\lambda\eta\log{n}$
rounds as claimed. %(Note that if $\eta$, we still get
%a high probability bound for $X^j(e)\geq 4\log{n}$.)
\end{proof}
%\end{movedProof}

\begin{proof}[Proof of Lemma~\ref{lem:get-more-walks}]
The argument is exactly the same as the proof of Lemma~2.4 in
\cite{DNP09-podc}. That is, there is no congestion. We only consider
longer walks (length at most $2\lambda-1$ ) this time. The detail of
the proof is as follows.

Consider any node $v$ during the execution of the algorithm. If it
contains $x$ copies of the source ID, for some $x$, it has to pick
$x$ of its neighbors at random, and pass the source ID to each of
these $x$ neighbors. Although it might pass these messages to less
than $x$ neighbors, it sends only the source ID and a {\em count} to
each neighbor, where the count represents the number of copies of
source ID it wishes to send to such neighbor. Note that there is
only one source ID as one node calls {\sc Get-More-Walks} at a time.
Therefore, there is no congestion and thus the algorithm terminates
in $O(\lambda)$ rounds.
\end{proof}

\begin{proof}[Proof of Lemma~\ref{lem:Sample-Destination}]
This proof is exactly the same as the proof of Lemma~2.5 in
\cite{DNP09-podc}.

Constructing a BFS tree clearly takes only $O(D)$ rounds. In the
second phase where the algorithm wishes to {\em sample} one of many
tokens (having its ID) spread across the graph. The sampling is done
while retracing the BFS tree starting from leaf nodes, eventually
reaching the root. The main observation is that when a node receives
multiple samples from its children, it only sends one of them to its
parent. Therefore, there is no congestion. The total number of
rounds required is therefore the number of levels in the BFS tree,
$O(D)$. The third phase of the algorithm can be done by broadcasting
(using a BFS tree) which needs $O(D)$ rounds.
\end{proof}

\begin{proof}[Proof of Lemma~\ref{lem:correctness-sample-destination-new}]
The claim follows from the correctness of {\sc Sample-Destination}
that the algorithm samples a walk uniformly at random and the fact
that the length of each walk is uniformly sampled from the range
$[\lambda,2\lambda-1]$. The first part is proved in Lemma~2.6 in Das
Sarma et al.~\cite{DNP09-podc} and included below for completeness.
We now prove the second part.

To show that each walk length is uniformly sampled from the range
$[\lambda,2\lambda-1]$, note that each walk can be created in two
ways.
\begin{enumerate}
\item It is created in Phase~1. In this case, since we pick the
length of each walk uniformly from the length
$[\lambda,2\lambda-1]$, the claim clearly holds.
\item It is created by {\sc Get-More-Walk}. In this case, the claim holds by the
technique of {\em reservoir} sampling: Observe that after the
$\lambda^{th}$ step of the walk is completed, we stop extending each
walk at any length between $\lambda$ and $2\lambda-1$ uniformly. To
see this, observe that we stop at length $\lambda$ with probability
$1/\lambda$. If the walk does not stop, it will stop at length
$\lambda+1$ with probability $\frac{1}{\lambda-1}$. This means that
the walk will stop at length $\lambda+1$ with probability
$\frac{\lambda-1}{\lambda}\times \frac{1}{\lambda-1} =
\frac{1}{\lambda-1}$. Similarly, it can be argue that the walk will
stop at length $i$ for any $i\in [\lambda, 2\lambda-1]$ with
probability $\frac{1}{\lambda}$.
\end{enumerate}

We now show the proof of Lemma~2.6 (with slight modification) in Das
Sarma et al. for completeness.

\begin{lemma}[Lemma 2.6 in \cite{DNP09-podc}]\label{lem:correctness-sample-destination}
Algorithm {\sc Sample-Destination}($v$) (cf.
Algorithm~\ref{alg:Sample-Destination}), for any node $v$, samples a
destination of a walk starting at $v$ uniformly at random.
\end{lemma}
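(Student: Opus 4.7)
The plan is to prove this by induction on the structure of the BFS tree $T$ constructed in Sweep~1, showing that for every node $u$ in $T$, the token forwarded by $u$ to its parent (together with the accompanying count) is uniformly distributed over all the tokens stored in the subtree of $T$ rooted at $u$. Once this invariant is established, applying it to the root $v$ yields the claim.

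For the base case, I would consider a node $u$ that has no children contributing tokens (either a leaf, or an internal node all of whose children send no token up because they have no tokens in their subtrees). In this situation $u$ simply picks one of its own $c_0$ tokens uniformly at random, so each token it holds is forwarded with probability $1/c_0$, and the count sent up is $c_0$. This matches the invariant.

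For the inductive step, suppose $u$ has children $w_1,\ldots,w_q$ in $T$ whose subtrees contain $c_1,\ldots,c_q$ tokens respectively, and that $u$ itself holds $c_0$ tokens. By the inductive hypothesis, each $w_j$ forwards to $u$ a token $d_j$ together with count $c_j$, and conditional on reaching $u$, $d_j$ is uniformly distributed over the $c_j$ tokens in $w_j$'s subtree. Independently, $u$ picks one of its own tokens $d_0$ uniformly at random with multiplicity $c_0$. Node $u$ then samples among $d_0,d_1,\ldots,d_q$ with probability proportional to $c_0,c_1,\ldots,c_q$ and forwards the winner together with count $C:=c_0+c_1+\cdots+c_q$. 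For any specific token $t$ lying in (say) the $j$-th subtree, the probability that $t$ is the token forwarded by $u$ is
\[
\Pr[d_j = t]\cdot\Pr[\text{index }j\text{ wins at }u] \;=\; \frac{1}{c_j}\cdot\frac{c_j}{C} \;=\; \frac{1}{C},
\]
and symmetrically $1/C$ for any token held by $u$ itself. Hence the invariant is preserved at $u$.

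The main subtlety — and the only step that requires care — is the independence used above: the sampling performed at $u$ relies on fresh random bits independent of those used in each child's subtree, and the counts $c_j$ are deterministic functions of the token placements (not of the random choices made during Sweep~2). This is exactly what the algorithm guarantees, because the counts are forwarded up the tree unchanged, and each node's sampling step uses only local randomness. Applying the invariant at the root $v$, the final forwarded token is uniform over the total pool of tokens stored anywhere in $T$; since $T$ spans all nodes and all tokens of $v$ are held somewhere in the network, this is uniform sampling over all walks stored by $v$, which (combined with Sweep~3 deleting the chosen token to prevent reuse) establishes the lemma.
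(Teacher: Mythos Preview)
Your proof is correct and follows essentially the same approach as the paper: both argue by induction on the BFS tree that every node $u$ forwards a token uniformly distributed over all tokens in its subtree, using the identical calculation $\frac{1}{c_j}\cdot\frac{c_j}{C}=\frac{1}{C}$ for the inductive step. Your explicit remark on independence and the deterministic nature of the counts is a nice clarification that the paper leaves implicit, but the structure and key idea are the same.
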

\begin{proof}
Assume that before this algorithm starts, there are  $t$ (without
loss of generality, let $t > 0$) ``tokens'' containing ID of $v$
stored in some nodes in the network. The goal is to show that {\sc
Sample-Destination} brings one of these tokens to $v$ with uniform
probability. For any node $u$, let $T_u$ be the subtree rooted at
$u$ and let $S_u$ be the set of tokens in $T_u$. (Therefore, $T_v=T$
and $|S_v|=t$.)

We claim that any node $u$ returns a destination to its parent with
uniform probability (i.e., for any tokens $x\in S_u$, $Pr[ u$
returns $x ]$ is $1/|S_u|$ (if $|S_u|>0$)). We prove this by
induction on the height of the tree. This claim clearly holds for
the base case where $u$ is a leaf node. Now, for any non-leaf node
$u$, assume that the claim is true for any of its children.
To be precise, suppose that $u$ receives tokens and counts from $q$
children. Assume that it receives tokens $d_1, d_2, ..., d_q$ and
counts $c_1, c_2, ..., c_q$ from nodes $u_1, u_2, ..., u_q$,
respectively. (Also recall that $d_0$ is the sample of its own
tokens (if exists) and $c_0$ is the number of its own tokens.) By
induction, $d_j$ is sent from $u_j$ to $u$ with probability
$1/|S_{u_j}|$, for any $1\leq j\leq q$. Moreover, $c_j=|S_{u_j}|$
for any $j$. Therefore, any token $d_j$ will be picked with
probability $\frac{1}{|S_{u_j}|}\times \frac{c_j}{c_0+c_1+...c_q} =
\frac{1}{S_u}$ as claimed.

The lemma follows by applying the claim above to $v$.
\end{proof}
\end{proof}

%%%%%%%%%%%%%%%%%%%%%%%%%%%%%%%%%%%%%%%%%%%%%%%%%%%%%%%%%
%%%%%%%%%%%%%%%%%%%%%%%%%%%%%%%%%%%%%%%%%%%%%%%%%%%%%%%%%
%%%%%%%%%%%%%%%%%%%%%%%%%%%%%%%%%%%%%%%%%%%%%%%%%%%%%%%%%
%%%%%%%%%%%%%%%%%%%%%%%%%%%%%%%%%%%%%%%%%%%%%%%%%%%%%%%%%

\subsection{Proof of Lemma~\ref{lem:uniformityused}}
\label{sec:uniformityused-proof}
\begin{proof}
Intuitively, this argument is simple, since the connectors are
spread out in steps of length approximately $\lambda$. However,
there might be some {\em periodicity} that results in the same node
being visited multiple times but {\em exactly} at
$\lambda$-intervals. This is where we crucially use the fact that
the algorithm uses walks of length $\lambda + r$ where $r$ is chosen
uniformly at random from $[0,\lambda-1]$.

%-----------------------------

We prove the lemma using the following two claims.

\begin{claim}
Consider any sequence $A$ of numbers $a_1, ..., a_\ell'$ of length
$\ell'$. For any integer $\lambda'$, let $B$ be a sequence
$a_{\lambda'+r_1}, a_{2\lambda'+r_1+r_2}, ...,
a_{i\lambda'+r_1+...+r_i}, ...$ where $r_i$, for any $i$, is a
random integer picked uniformly from $[0, \lambda'-1]$. Consider
another subsequence of numbers $C$ of $A$ where an element in $C$ is
picked from from ``every $\lambda'$ numbers'' in $A$; i.e., $C$
consists of $\lfloor\ell'/\lambda'\rfloor$ numbers $c_1, c_2, ...$
where, for any $i$, $c_i$ is chosen uniformly at random from
$a_{(i-1)\lambda'+1}, a_{(i-1)\lambda'+2}, ..., a_{i\lambda'}$.
Then, $Pr[C \text{ contains } a_{i_1}, a_{i_2}, ..., a_{i_k}\}] =
Pr[B = \{a_{i_1}, a_{i_2}, ..., a_{i_k}\}]$ for any set $\{a_{i_1},
a_{i_2}, ..., a_{i_k}\}$.
\end{claim}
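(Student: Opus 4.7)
The plan is to prove the claim by a combinatorial counting argument that establishes a correspondence between the random outcomes in the two processes. Both $B$ and $C$ can be viewed as sequential procedures that each make a uniform choice from a window of length $\lambda'$, so they share an underlying uniform-sampling structure that should make the induced distributions over selected sets agree.

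First, I would formalize the sample spaces. For $B$, the underlying randomness $(r_1, r_2, \ldots)$ is iid uniform on $\{0, \ldots, \lambda'-1\}$, producing positions $P_i^B = i\lambda' + r_1 + \cdots + r_i$. For $C$, the randomness independently selects $c_i$ uniformly from the $i$-th bucket $I_i = \{(i-1)\lambda'+1, \ldots, i\lambda'\}$. A key observation is that the increasing sequence of positions produced by $B$ uniquely determines the underlying tuple $(r_1, \ldots, r_i)$, via $r_i = P_i^B - P_{i-1}^B - \lambda'$. Hence each realized outcome of $B$ corresponds to exactly one sequence $(r_j)$, carrying probability $(1/\lambda')^i$ where $i$ is the number of steps taken before the walk exceeds $\ell'$.

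I would then proceed by induction on $k$, the size of the target set $\{a_{i_1}, \ldots, a_{i_k}\}$. The base case $k = 0$ is trivial. For the inductive step, I would condition on the largest-indexed element $i_k$ (or alternatively, the smallest), identify which bucket it lies in, and relate this to the last (respectively first) relevant choice in each process. The inductive hypothesis then handles the remaining $k-1$ elements on truncated versions of $B$ and $C$. The factoring works because both the shift $r_j$ in $B$ and the selection $c_j$ in $C$ are independent of past choices and uniform over size-$\lambda'$ windows, so conditioning on one element's location reduces the problem to a smaller instance of the same form.

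The main obstacle will be handling the misalignment between $B$'s cumulatively-shifted windows and $C$'s fixed buckets, together with the fact that the number of elements in $B$ is itself random while the number in $C$ is fixed at $\lfloor \ell'/\lambda' \rfloor$. This will require a careful case analysis on which bucket each $i_j$ lies in, and likely the construction of a measure-preserving bijection between the trajectories of $(r_j)$'s realizing $B = \{a_{i_1}, \ldots, a_{i_k}\}$ and the configurations of $(c_j)$'s realizing $\{a_{i_1}, \ldots, a_{i_k}\} \subseteq C$. I expect this combinatorial accounting --- verifying that the two counts balance once divided by the appropriate powers of $\lambda'$ --- to be the heart of the proof, with the inductive framework and conditioning being largely routine once the bijection is in place.
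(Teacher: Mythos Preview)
Your plan is considerably more elaborate than what the claim requires, and it misses the one observation that makes the paper's argument a two-line computation. The paper does not use induction or a bijection at all: it simply notes that the event $B=\{a_{i_1},\ldots,a_{i_k}\}$ forces the consecutive gaps to satisfy $\lambda'\le i_{j+1}-i_j\le 2\lambda'-1$, and under that constraint there is a \emph{unique} tuple $(r_1,\ldots,r_k)$ realizing it, so the probability is $(1/\lambda')^{k}$ --- a point you already have. The piece you are missing is the companion observation for $C$: the same gap condition $i_{j+1}-i_j\ge\lambda'$ guarantees that the indices $i_1,\ldots,i_k$ fall into $k$ \emph{distinct} buckets of $C$, so ``$C$ contains $\{a_{i_1},\ldots,a_{i_k}\}$'' is just the conjunction of $k$ independent events, each of probability $1/\lambda'$, again giving $(1/\lambda')^{k}$. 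That is the whole proof; no recursion, no measure-preserving map, no case analysis on bucket boundaries.

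Your induction scheme also has a structural snag you should be aware of. Conditioning on the largest index $i_k$ in $B$ fixes only the \emph{sum} $r_1+\cdots+r_k$, not $r_k$ alone, so the residual process on $(r_1,\ldots,r_{k-1})$ is no longer iid uniform and the ``truncated $B$'' is not an instance of the original problem. Conditioning on $i_1$ fixes $r_1$ cleanly, but then the shifted $B$-process no longer aligns with $C$'s fixed bucket boundaries, which is exactly the ``misalignment'' you flagged as the main obstacle. These difficulties are artifacts of the inductive route and disappear entirely once you make the different-buckets observation above and compute both probabilities directly.
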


%\textbf{Danupon:} This claim is stated with full details as we may
%want to bring it out to highlight later.
%\begin{claim}
%Given a sequence of observations $X_1, X_2, \ldots, X_{t}$, if an
%event ${\cal R}$ is observed at most $f_1\cdot f_2$ times. Suppose
%one observation is picked at random from $X_i, X_{i+1}, X_{i+f_2}$
%for each $i$ in $\{1, f_2, 2f_2, \ldots, (f_1-1)f_2\}$, and call
%these observations $Y_1, Y_2, \ldots, Y_{f_1}$, then the probability
%that the event ${\cal R}$ is observed more than $f_2\log n$ times
%among all $Y_i$ observations is at most $1/n$.
%\end{claim}
\begin{proof}
First consider a subsequence $C$ of $A$. Numbers in $C$ are picked
from ``every $\lambda'$ numbers'' in $A$; i.e., $C$ consists of
$\lfloor\ell'/\lambda'\rfloor$ numbers $c_1, c_2, ...$ where, for
any $i$, $c_i$ is chosen uniformly at random from
$a_{(i-1)\lambda'+1}, a_{(i-1)\lambda'+2}, ..., a_{i\lambda'}$.
Observe that $|C|\geq |B|$. In fact, we can say that ``$C$ contains
$B$''; i.e., for any sequence of $k$ indexes $i_1, i_2, ..., i_k$
such that $\lambda'\leq i_{j+1}-i_j\leq 2\lambda'-1$ for all $j$,
$$Pr[B = \{a_{i_1}, a_{i_2}, ..., a_{i_k}\}] = Pr[C \text{ contains
} \{a_{i_1}, a_{i_2}, ..., a_{i_k}\}].$$
To see this, observe that $B$ will be equal to $\{a_{i_1}, a_{i_2},
..., a_{i_k}\}$ only for a specific value of $r_1, r_2, ..., r_k$.
Since each of $r_1, r_2, ..., r_k$ is chosen uniformly at random
from $[1, \lambda']$, $Pr[B = \{a_{i_1}, a_{i_2}, ..., a_{i_k}\}] =
\lambda'^{-k}$.
%(Some technicality: The inequality follows from
%the fact that for some $\{a_{i_1}, a_{i_2}, ..., a_{i_k}\}$, $Pr[B =
%\{a_{i_1}, a_{i_2}, ..., a_{i_k}\}] = 0$.)
Moreover, the $C$ will contain $a_{i_1}, a{i_2}, ..., a_{i_k}\}$ if
and only if, for each $j$, we pick $a_{i_j}$ from the interval that
contains it (i.e., from $a_{(i'-1)\lambda'+1}, a_{(i'-1)\lambda'+2},
..., a_{i'\lambda'}$, for some $i'$). (Note that $a_{i_1}, a_{i_2},
...$ are all in different intervals because $i_{j+1}-i_j\geq
\lambda'$ for all $j$.) Therefore, $Pr[C \text{ contains } a_{i_1},
a_{i_2}, ..., a_{i_k}\}]=\lambda'^{-k}$.
\end{proof}

\begin{claim}
Consider any sequence $A$ of numbers $a_1, ..., a_\ell'$ of length
$\ell'$. Consider subsequence of numbers $C$ of $A$ where an element
in $C$ is picked from from ``every $\lambda'$ numbers'' in $A$;
i.e., $C$ consists of $\lfloor\ell'/\lambda'\rfloor$ numbers $c_1,
c_2, ...$ where, for any $i$, $c_i$ is chosen uniformly at random
from $a_{(i-1)\lambda'+1}, a_{(i-1)\lambda'+2}, ...,
a_{i\lambda'}$.. For any number $x$, let $n_x$ be the number of
appearances of $x$ in $A$; i.e., $n_x=|\{i\ |\ a_i=x\}|$. Then, for
any $R\geq 6n_x/\lambda'$, $x$ appears in $C$ more than $R$ times
with probability at most $2^{-R}$.
\end{claim}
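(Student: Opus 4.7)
The plan is to express the number of appearances of $x$ in $C$ as a sum of independent Bernoulli indicators, compute its expectation in terms of $n_x/\lambda'$, and then apply a standard Chernoff tail bound with slack chosen to match the factor $6$ in the hypothesis.

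More precisely, for each block $i \in \{1, \ldots, \lfloor \ell'/\lambda' \rfloor\}$, let $X_i$ be the indicator of the event $c_i = x$, and let $k_i = |\{j : (i-1)\lambda' < j \le i\lambda',\ a_j = x\}|$ be the number of occurrences of $x$ inside the $i$-th block. Since $c_i$ is drawn uniformly and independently from the $i$-th block, the $X_i$ are mutually independent with $\mathbb{E}[X_i] = k_i/\lambda'$. Setting $Y = \sum_i X_i$ (the number of times $x$ appears in $C$), linearity of expectation and $\sum_i k_i \le n_x$ give
\[
\mu := \mathbb{E}[Y] \;=\; \sum_{i} \frac{k_i}{\lambda'} \;\le\; \frac{n_x}{\lambda'}.
\]

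Next I would apply the standard Chernoff upper tail bound for independent Bernoullis, in the form $\Pr[Y \ge R] \le (e\mu/R)^R$ valid for $R \ge \mu$. Provided that $R \ge 2e\mu$, the base satisfies $e\mu/R \le 1/2$, so $(e\mu/R)^R \le 2^{-R}$. Since $2e < 6$, the hypothesis $R \ge 6 n_x/\lambda' \ge 6\mu$ implies $R \ge 2e\mu$, hence $\Pr[Y > R] \le \Pr[Y \ge R] \le 2^{-R}$, which is the stated bound.

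The only point requiring care is the independence of the $X_i$, which is immediate from the block-by-block construction of $C$ (different blocks are sampled independently). Everything else is a direct Chernoff calculation, so there is no real obstacle; the main design choice is simply absorbing the constant $2e$ into the hypothesized constant $6$.
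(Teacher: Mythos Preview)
Your proposal is correct and follows essentially the same approach as the paper: define block-wise indicator variables $X_i$ for the event $c_i=x$, note their independence and that $\mathbb{E}[\sum_i X_i]\le n_x/\lambda'$, and apply a Chernoff upper-tail bound with the constant $6$ absorbing the slack. If anything, you are slightly more careful than the paper, which asserts $\mathbb{E}[X]=n_x/\lambda'$ rather than $\le$ and does not spell out which Chernoff form is being invoked.
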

\begin{proof}
For $i=1, 2, ..., \lfloor\ell'/\lambda'\rfloor$, let $X_i$ be a 0/1
random variable that is $1$ if and only if $c_i=x$ and
$X=\sum_{i=1}^{\lfloor\ell'/\lambda'\rfloor} X_i$. That is, $X$ is
the number of appearances of $x$ in $C$. Clearly,
$E[X]=n_x/\lambda'$. Since $X_i$'s are independent, we can apply the
Chernoff bound (e.g., in~\cite[Theorem~4.4.]{MU-book-05}): For any
$R\geq 6E[X]=6n_x/\lambda'$,
$$Pr[X\leq R]\geq 2^{-R}.$$
The claim is thus proved.
\end{proof} %of claim

Now we use the claim to prove the lemma. Choose $\ell'=\ell$ and
$\lambda'=\lambda$ and consider any node $v$ that appears at most
$t$ times. The number of times it appears as a connector node is the
number of times it appears in the subsequence $B$ described in the
claim. By applying the claim with $R=t(\log n)^2$, we have that $v$
appears in $B$ more than $t(\log n)^2$ times with probability at
most $1/n^2$ as desired.
\end{proof}

%----------------------------------------------------------------

%%%%%%%%%%%%%%%%%%%%%%%%%%%%%%%%%%%%%%%%%%%%%%%%
%%%%%%%%%%%%%%%%%%%%%%%%%%%%%%%%%%%%%%%%%%%%%%%%
%%%%%%%%%%%%%%%%%%%%%%%%%%%%%%%%%%%%%%%%%%%%%%%%
%%%%%%%%%%%%%%%%%%%%%%%%%%%%%%%%%%%%%%%%%%%%%%%%

\subsection{Proof of Lemma~\ref{lemma:visits bound}}\label{sec:proof of visits
bound}

%Consider a simple random walk on a connected undirected graph on $n$
%vertices. Let $d(x)$ denote the degree of $x$, and let $m$ denote
%the number of edges. Let $N_t^x(y)$ denote the number of visits to
%vertex $y$ by time $t$, given the walk started at vertex $x$.
%
%Now, consider $k$ walks, each of length $\ell$, starting from (not
%necessary distinct) nodes $x_1, x_2, \ldots, x_k$. In this section
%we show that, with high probability, no vertex $y$ is visited more
%than $24 d(x) \sqrt{k\ell+1}\log n + k$ times, as in the following
%lemma.

%\newtheorem{LemmaVisit}{Theorem}[section]
%\begin{theorem}\label{theorem:visits bound}
%For any nodes $x_1, x_2, \ldots, x_k$, \[\Pr\bigl(\exists y\ s.t.\
%\sum_{i=1}^k N_\ell^{x_i}(y) \geq 24 d(x) \sqrt{k\ell+1}\log
%n+k\bigr) \leq 1/n\,.\]
%\end{theorem}

%The rest of this section is devoted to prove the above lemma.
We
start with the bound of the first and second moment of the number of
visits at each node by each walk.

\begin{proposition}\label{proposition:first and second moment} For
any node $x$, node $y$ and $t = O(m^2)$,
\begin{equation}
\e[N_t^x(y)] \le 8 d(y) \sqrt{t+1}\,, \ \ \ \mbox{{\rm  and }} \ \ \
%\var(N_t^x(y)) = \e[N_t^x(y)] + O\bigl(d^2(y) t\bigr)\,.
\e\Bigl[\bigl(N_t^x(y)\bigr)^2\Bigr] \le \e[N_t^x(y)] + 128 \
d^2(y)\  (t+1)\,.
\end{equation}
\end{proposition}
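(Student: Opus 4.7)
The plan is to express the number of visits as a sum of indicator variables, apply linearity of expectation, and then control the resulting sum of transition probabilities using the pointwise heat-kernel bound alluded to in the proof sketch of Lemma~\ref{lemma:visits bound} (Lyons, Lemma~3.4 and Remark~4 of \cite{Lyons}). Specifically, for $k \le O(m^2)$ the walk has not yet reached its stationary behavior in an absolute sense, but one still has the estimate $p_k^x(y) := \Pr[W_k = y \mid W_0 = x] \le 4 d(y)/\sqrt{k+1}$, valid for every pair $x,y$ and every $k$ in this range. This is the only external input I need.

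For the first moment, writing $N_t^x(y) = \sum_{k=0}^{t} \mathbf{1}\{W_k = y\}$ and taking expectations gives
\[
\e[N_t^x(y)] \;=\; \sum_{k=0}^{t} p_k^x(y) \;\le\; 4 d(y) \sum_{k=0}^{t} \frac{1}{\sqrt{k+1}} \;\le\; 4 d(y) \cdot 2\sqrt{t+1} \;=\; 8 d(y)\sqrt{t+1},
\]
where I used the elementary integral comparison $\sum_{j=1}^{t+1} j^{-1/2} \le 2\sqrt{t+1}$.

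For the second moment, expand the square and split into diagonal and off-diagonal contributions:
\[
\e\bigl[(N_t^x(y))^2\bigr] \;=\; \e[N_t^x(y)] \;+\; 2 \sum_{0 \le k < j \le t} \Pr[W_k = y,\, W_j = y].
\]
By the Markov property, $\Pr[W_k = y, W_j = y] = p_k^x(y)\, p_{j-k}^y(y)$. Substituting $m = j-k$ and dropping the constraint $m \le t-k$ in the inner sum,
\[
\sum_{k<j} p_k^x(y)\, p_{j-k}^y(y) \;\le\; \Bigl(\sum_{k=0}^{t} p_k^x(y)\Bigr)\Bigl(\sum_{m=0}^{t} p_m^y(y)\Bigr) \;=\; \e[N_t^x(y)] \cdot \e[N_t^y(y)].
\]
Applying the first-moment bound to both factors (the second one starting the walk at $y$ itself) yields $\e[N_t^x(y)]\cdot\e[N_t^y(y)] \le (8 d(y)\sqrt{t+1})^2 = 64\, d^2(y)(t+1)$, and multiplying by $2$ gives the claimed $128\, d^2(y)(t+1)$ additive term.

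The main substantive content is really just the pointwise bound $p_k^x(y) \le 4d(y)/\sqrt{k+1}$; once that is in hand, both inequalities reduce to routine summations and the Markov property. The only place where the hypothesis $t = O(m^2)$ is actually used is to guarantee that Lyons' estimate applies in the stated form for every $k \le t$, so I would make sure to cite the precise version of that lemma before invoking it.
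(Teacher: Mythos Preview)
Your proof is correct and follows essentially the same route as the paper: both use Lyons' pointwise bound $p_k^x(y)\le 4d(y)/\sqrt{k+1}$, sum it for the first moment, and for the second moment expand the square, apply the Markov property, and bound the resulting double sum. The only cosmetic difference is that you package the off-diagonal double sum as $\e[N_t^x(y)]\cdot\e[N_t^y(y)]$ and then invoke the first-moment bound twice, whereas the paper keeps the nested sums explicit and bounds them directly; the arithmetic is identical and both land on $2\cdot 64\,d^2(y)(t+1)=128\,d^2(y)(t+1)$.
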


To prove the above proposition, let $P$ denote the transition
probability matrix of such a random walk and let $\pi$ denote the
stationary distribution of the walk, which in this case is simply
proportional to the degree of the vertex, and let $\pi_\m = \min_x
\pi(x)$.

The basic bound we use is the following estimate from Lyons (see
Lemma~3.4 and Remark~4  in \cite{Lyons}). Let $Q$ denote the
transition probability matrix of a chain with self-loop probablity
$\alpha > 0$, and with $c= \min{\{\pi(x) Q(x,y) : x\neq y \mbox{ and }                      Q(x,y)>0\}}\,.$
Note that for a random walk on an undirected graph, $c=\frac{1}{2m}$. For $k >
0$ a positive integer (denoting time) ,

\begin{equation}
\label{kernel_decay} \bigl|\frac{Q^k(x,y)}{\pi(y)} - 1\bigr| \le
\min\Bigl\{\frac{1}{\alpha c \sqrt{k+1}}, \frac{1}{2\alpha^2 c^2
(k+1)} \Bigr\}\,.
\end{equation}

For $k\leq\beta m^2$ for a sufficiently small constant $\beta$, and small $\alpha$, the above can be simplified to the following bound; see
Remark~3 in \cite{Lyons}.
\begin{equation}
\label{one_sided_decay} Q^k(x,y)  \le \frac{4\pi(y)}{c \sqrt{k+1}} =
\frac{4d(y)}{\sqrt{k+1}}\,.
\end{equation}

Note that given a simple random walk on a graph $G$, and a
corresponding matrix  $P$, one can always switch to the lazy version
$Q=(I+P)/2$, and interpret it as a walk on graph $G'$, obtained by
adding  self-loops  to vertices in $G$ so as to double the degree of
each vertex. In the following, with abuse of notation we assume our
$P$ is such a lazy version of the original one.

\begin{proof}
Let $X_0, X_1, \ldots $ describe the random walk, with $X_i$
denoting the position of the walk at time $i\ge 0$, and let
$\bone_A$ denote the indicator (0-1) random variable, which takes
the value 1 when the event $A$ is true. In the following we also use
the subscript $x$ to denote the fact that the probability or
expectation is with respect to starting the walk at vertex $x$.
%Let $X_0=x$
First the expectation.
\begin{eqnarray*}
\e[N_t^x(y)] & = & \e_x[  \sum_{i=0}^t \bone_{\{X_i=y\}}] = \sum_{i=0}^t P^i(x,y) \\
& \le &  4 d(y) \sum_{i=0}^t \frac{1}{\sqrt{i+1}} , \ \ \mbox{ (using the above inequality  (\ref{one_sided_decay})) } \\
& \le & 8 d(y) \sqrt{t+1}\,.
\end{eqnarray*}

%One does not expect Gaussian tails here, since the random variable
%in question is distributed more like a Geometric one.

Abbreviating $N^x_t(y)$   as $N_t(y)$, we now compute the second
moment:
\begin{eqnarray*}
\e[N^2_t(y)] & = & \e_x \Bigl[  \bigl(\sum_{i=0}^t \bone_{\{X_i=y\}} \bigr) \bigl(\sum_{j=0}^t \bone_{\{X_j=y\}} \bigr) \Bigr] \\
& = & \e_x\Bigl[  \sum_{i=0}^t \bone_{\{X_i=y\}}  +  2 \sum_{0\le i < j\le t}^t \bone_{\{X_i = y, \ X_j=y\}} \Bigr] \\
& = & \e[N_t(y)] + 2 \sum_{0\le i < j\le t}^t \Pr(X_i = y, \
X_j=y)\,.
\end{eqnarray*}
To bound the second term on the right hand side above, consider for
$0\le i<j $:
\begin{eqnarray*}
\Pr(X_i = y, \ X_j=y)
& = &  \Pr(X_i = y) \ \Pr(X_j = y | X_i = y) \\
& = & P^i(x,y) \ \ P^{j-i}(y,y)\,, \ \ \ \mbox{ due to the Markovian property }\\
& \le & \frac{4 d(y)}{\sqrt{i+1}} \  \ \frac{4
d(y)}{\sqrt{j-i+1}}\,.
 \ \ \mbox{ (using   (\ref{one_sided_decay})) }
\end{eqnarray*}
Thus,
\begin{eqnarray*}
\sum_{0\le i < j \le t}  \Pr(X_i = y, \ X_j=y) & \le &
\sum_{0\le i \le t} \frac{4 d(y)}{\sqrt{i+1}} \ \sum_{0< j-i \le t-i} \frac{4d(y)}{\sqrt{j-i+1}} \\
& = & 16d^2(y) \sum_{0\le i \le t} \frac{1}{\sqrt{i+1}} \ \sum_{0< k \le t-i} \frac{1}{\sqrt{k+1}}\\
& \le & 32 d^2(y)  \sum_{0\le i \le t} \frac{1}{\sqrt{i+1}} \ \sqrt{t-i+1}\\
& \le & 32 d^2(y) \sqrt{t+1}  \sum_{0\le i \le t} \frac{1}{\sqrt{i+1}} \\
& \le & 64 d^2(y) \ (t+1)\,,
\end{eqnarray*}
which yields the  claimed bound on the second moment in the
proposition. %\hfill $\qed$
\end{proof}

Using the above proposition, we bound the number of visits of each
walk at each node, as follows.

\begin{lemma}\label{lemma:whp one walk one node bound}
For $t=O(m^2)$ and any vertex $y \in G$, the random walk
started at $x$ satisfies:
\begin{equation*}
\Pr\bigl(N^x_t(y) \ge  24  \ d(y) \sqrt{t+1}\log n \bigr) \le
\frac{1}{n^2} \,.
\end{equation*}
\end{lemma}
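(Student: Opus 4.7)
The plan is to reduce the bound to a statement about the walk started at $y$ itself, then boost a one-shot Markov-inequality bound to a high-probability bound via a strong-Markov decomposition into i.i.d.\ subwalks.

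First I would reduce to the case $x=y$. By the strong Markov property applied at the first hitting time $\tau_y$ of $y$ (and noting $N_t^x(y)\le 1+N_t^y(y)$ since after the first visit the walk restarts at $y$), it suffices to show $\Pr\bigl(N_t^y(y)\ge 24 d(y)\sqrt{t+1}\log n-1\bigr)\le 1/n^2$; the $-1$ is harmless and can be absorbed into constants.

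Next, I would apply Markov's inequality to the first moment bound from Proposition~\ref{proposition:first and second moment}. Setting $M=24 d(y)\sqrt{t+1}$, one gets the crude ``one-block'' bound
\[
\Pr\bigl(N_t^y(y)\ge M\bigr)\;\le\; \frac{\e[N_t^y(y)]}{M}\;\le\; \frac{8 d(y)\sqrt{t+1}}{M}\;=\;\frac{1}{3}.
\]
This is only a constant, so the key step is to amplify it. Let $T^{(k)}$ denote the time of the $k$-th visit to $y$, and split the first $M\log n$ visits into $\log n$ consecutive blocks of $M$ visits each. By the strong Markov property applied at each $T^{(iM)}$, the block durations $S_i := T^{(iM)}-T^{((i-1)M)}$ are i.i.d., each distributed as the time taken by the walk starting at $y$ to accumulate $M$ visits to $y$. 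The event $N_t^y(y)\ge M\log n$ is exactly $\sum_{i=1}^{\log n}S_i\le t$, and in particular forces every $S_i\le t$. Hence by independence,
\[
\Pr\bigl(N_t^y(y)\ge M\log n\bigr)\;\le\;\bigl(\Pr(S_1\le t)\bigr)^{\log n}\;\le\;(1/3)^{\log n},
\]
where the last inequality is the one-block Markov bound above applied to the walk restarted at $y$.

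Choosing the logarithm base (or, equivalently, a slightly larger constant in place of the leading $24$ and a slight inflation of the number of subwalks) makes $(1/3)^{\log n}\le 1/n^2$. The only real obstacle is this bookkeeping on constants: a naive application of Markov gives per-block success probability $1/3$, and $(1/3)^{\log n}$ only matches $n^{-2}$ if one is flexible about the $24$ and about whether $\log$ denotes $\ln$ or $\log_2$. If a tighter absolute constant is needed, one may invoke the second-moment estimate of Proposition~\ref{proposition:first and second moment} together with the Paley--Zygmund inequality to sharpen $\Pr(S_1\le t)$ below $1/3$, or partition into $c\log n$ blocks for a suitable $c>1$; both preserve the structure of the argument.
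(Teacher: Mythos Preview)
Your approach is essentially the paper's: translate $\{N_t\ge r\}$ into $\{L_r\le t\}$, split the $r^*\log n$ visits into $\log n$ blocks of $r^*=24d(y)\sqrt{t+1}$ visits each, invoke the strong Markov property so that block durations are independent, and multiply the per-block tail probabilities. The one substantive difference is the per-block bound. You use Markov's inequality with the first-moment estimate to get $\Pr(N_t^y(y)\ge r^*)\le 1/3$, which then forces the bookkeeping you flag at the end. The paper instead applies Chebyshev with the second-moment bound of Proposition~\ref{proposition:first and second moment} to obtain $\Pr(N_t^x(y)\ge r^*)\le 1/4$ directly (for \emph{any} start $x$, so no separate reduction to $x=y$ is needed); with $\log=\log_2$ this yields $(1/4)^{\log n}=n^{-2}$ on the nose, eliminating exactly the constant slack you were worried about. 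Your suggested fixes (more blocks, or appealing to the second moment) both work, and the second is precisely what the paper does.
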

\begin{proof}
First, it follows from the Proposition that
\begin{equation} \Pr\bigl(N^x_t(y) \ge  2\cdot 12 \
d(y) \sqrt{t+1}\bigr) \le \frac{1}{4} \,.\label{eq:simple bound}
\end{equation}
This is done by using the standard Chebyshev argument that for $B >
0$, $\Pr\bigl(N_t(y) \ge  B \bigr) \le \Pr
\bigl(N^2_t(y) \ge  B^2)  \le \frac{\e\bigl(N_t^2(y)\bigr)}{B^2}$.

%
%Let $r=24  \ d(y) \sqrt{t+1}$. By Lemma~\ref{lemma:one walk one node
%bound}, \[\Pr\bigl(N^x_t(y) \ge r \bigr) \le \frac{1}{4}\,.\] This
%implies that the probability that
%
For any $r$, let $L^x_r(y)$ be the time that the random walk
(started at $x$) visits $y$ for the $r^{th}$ time. Observe that, for
any $r$, $N^x_t(y)\geq r$ if and only if $L^x_r(y)\leq t$.
Therefore,
\begin{equation}
\Pr(N^x_t(y)\geq r)=\Pr(L^x_r(y)\leq t).\label{eq:visits eq length}
\end{equation}

Let $r^*=24  \ d(y) \sqrt{t+1}$. By \eqref{eq:simple bound} and
\eqref{eq:visits eq length}, $\Pr(L^x_{r^*}(y)\leq t)\leq
\frac{1}{4}\,.$ We claim that
\begin{equation}
\Pr(L^x_{r^*\log n}(y)\leq t)\leq \left(\frac{1}{4}\right)^{\log
n}=\frac{1}{n^2}\,.\label{eq:hp length bound}
\end{equation}
To see this, divide the walk into $\log n$ independent subwalks,
each visiting $y$ exactly $r^*$ times. Since the event $L^x_{r^*\log
n}(y)\leq t$ implies that all subwalks have length at most $t$,
\eqref{eq:hp length bound} follows.
Now, by applying \eqref{eq:visits eq length} again,
\[\Pr(N^x_t(y)\geq r^*\log n) = \Pr(L^x_{r^*\log n}(y)\leq t)\leq
\frac{1}{n^2}\] as desired.

\end{proof}

We now extend the above lemma to bound the number of visits of {\em
all} the walks at each particular node.

\begin{lemma}\label{lemma:k walks one node bound}
For $\gamma > 0$, and $t=O(m^2)$, and for any vertex $y \in
G$, the random walk started at $x$ satisfies:
\begin{equation*}
\Pr\bigl(\sum_{i=1}^k N^{x_i}_t(y) \ge  24  \ d(y) \sqrt{kt+1} \log
n+k\bigr) \le \frac{1}{n^2} \,.
\end{equation*}
\end{lemma}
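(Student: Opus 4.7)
The plan is to reduce the $k$-walk bound to the single-walk bound of Lemma~\ref{lemma:whp one walk one node bound}, following the hint in the main text: view the $k$ walks of length $\ell$ as pieces of a single random walk of length $k\ell$, and absorb the small counting slack into the additive ``$+k$'' term. The crux of the argument is the strong Markov property, which lets us equate the visit distribution of each segment (of length $\ell$) of the long walk with the visit distribution of an independent walk of length $\ell$ started at that segment's starting vertex.

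Concretely, I would proceed in three steps. First, apply Lemma~\ref{lemma:whp one walk one node bound} with $t = k\ell$ to a walk starting at $x_1$. Assuming $k\ell = O(m^2)$ (the regime in which this bound is actually invoked; otherwise the algorithm falls back on the naive random-walk method), the lemma yields
\[
\Pr\Bigl(N^{x_1}_{k\ell}(y) \ge 24\,d(y)\sqrt{k\ell+1}\log n\Bigr) \le \frac{1}{n^2}.
\]
Second, invoke the strong Markov property to view this length-$k\ell$ walk as $k$ consecutive segments of length $\ell$. If $X_j$ denotes the walker's position at time $j$ and we set $x_i := X_{(i-1)\ell}$, then the $i$-th segment has, conditional on $x_i$, the distribution of an independent random walk of length $\ell$ from $x_i$, and hence its visit count to $y$ is distributed as $N^{x_i}_\ell(y)$.

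Third, compare the two counts. Summing visits per segment reproduces the total number of visits in the full length-$k\ell$ walk, except that each of the $k-1$ internal boundary vertices (shared endpoints between consecutive segments) is counted twice; hence
\[
\sum_{i=1}^k N^{x_i}_\ell(y) \;\le\; N^{x_1}_{k\ell}(y) \;+\; (k-1) \;\le\; 24\,d(y)\sqrt{k\ell+1}\log n + k
\]
with probability at least $1 - 1/n^2$. A union bound over the $n$ choices of $y$ then gives failure probability at most $1/n$, completing the proof.

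The main obstacle I anticipate is reconciling the fact that the lemma quantifies over \emph{arbitrary} tuples $(x_1,\dots,x_k)$, whereas the Markovian decomposition produces intermediate segment endpoints $x_2,\dots,x_k$ that are random functions of the long walk. The reconciliation is that Lemma~\ref{lemma:whp one walk one node bound} holds uniformly over the starting vertex, and the counting inequality $\sum_i N^{x_i}_\ell(y) \le N^{x_1}_{k\ell}(y) + (k-1)$ is a pathwise statement — it holds for every realization of the long walk and hence for whatever tuple of intermediate $x_i$'s happens to arise. This is precisely the ``technicality'' mentioned in the main text and is what forces the additive ``$+k$'' in the final bound.
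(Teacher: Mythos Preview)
Your decomposition runs in the wrong direction, and the gap you flag in your last paragraph is real and is \emph{not} closed by the reconciliation you offer. You start from a single walk of length $k\ell$ out of $x_1$, chop it into $k$ segments, and read off $x_2,\dots,x_k$ as the positions of that long walk at times $\ell,2\ell,\dots$. The pathwise inequality $\sum_i N^{x_i}_\ell(y)\le N^{x_1}_{k\ell}(y)+(k-1)$ is correct, but the $x_i$'s in it are the \emph{random} segment endpoints determined by the long walk. What you have proved is therefore: ``for the random tuple $(x_1,X_\ell,X_{2\ell},\dots)$ induced by a $k\ell$-step walk from $x_1$, the sum of visits is small with high probability.'' The lemma, however, must hold for every \emph{fixed} tuple $(x_1,\dots,x_k)$ --- in the application these are the $k$ prescribed source nodes $s_1,\dots,s_k$, which are not obtained by sampling intermediate positions of a single walk. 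Saying that Lemma~\ref{lemma:whp one walk one node bound} holds uniformly over the starting vertex does not help, because you never apply that lemma to the walks from $x_2,\dots,x_k$; you apply it only to the long walk from $x_1$.

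The paper's proof goes the other way: it starts from the $k$ independent walks $W_1,\dots,W_k$ from the given (arbitrary) $x_1,\dots,x_k$ and couples them to a \emph{single} walk $W$ of length $kt$ starting from $y$. The device that makes the concatenation legitimate is to extract from each $W_i$ only the segment $W_i'$ between its \emph{first} and \emph{last} visit to $y$; every visit of $W_i$ to $y$ lies in $W_i'$, and each $W_i'$ both begins and ends at $y$, so the pieces $W_1',\dots,W_k'$ can be strung together (and then padded with fresh random-walk steps) into a walk from $y$ of length at most $kt$. Under this coupling one gets $\sum_i N^{x_i}_t(y)-k\le N^y_{kt}(y)$, and the single-walk Lemma~\ref{lemma:whp one walk one node bound} applied to the walk from $y$ finishes the argument. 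The ``$+k$'' arises because the $k$ initial visits at the concatenation points are identified. The essential point you are missing is that the reduction must go from $k$ walks to one, not from one to $k$.
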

\begin{proof}
First, observe that, for any $r$, $$\Pr\bigl(\sum_{i=1}^k
N^{x_i}_t(y) \geq r-k\bigr)\leq \Pr[N^y_{kt}(y)\geq r].$$ To see
this, we construct a walk $W$ of length $kt$ starting at $y$ in the
following way: For each $i$, denote a walk of length $t$ starting at
$x_i$ by $W_i$. Let $\tau_i$ and $\tau'_i$ be the first and last
time (not later than time $t$) that $W_i$ visits $y$. Let $W'_i$ be
the subwalk of $W_i$ from time $\tau_i$ to $\tau_i'$. We construct a
walk $W$ by stitching $W'_1, W'_2, ..., W'_k$ together and complete
the rest of the walk (to reach the length $kt$) by a normal random
walk. It then follows that the number of visits to $y$ by $W_1, W_2,
\ldots, W_k$ (excluding the starting step) is at most the number of
visits to $y$ by $W$. The first quantity is $\sum_{i=1}^k
N^{x_i}_t(y)-k$. (The term `$-k$' comes from the fact that we do not
count the first visit to $y$ by each $W_i$ which is the starting
step of each $W'_i$.) The second quantity is $N^y_{kt}(y)$. The
observation thus follows.

Therefore, \[\Pr\bigl(\sum_{i=1}^k N^{x_i}_t(y)\geq 24 \ d(y)
\sqrt{kt+1}\log n + k\bigr) \leq \Pr\bigl(N^y_{kt}(y)\geq 24 \ d(y)
\sqrt{kt+1}\log n\bigr) \leq \frac{1}{n^2}\]
where the last inequality follows from Lemma~\ref{lemma:whp one walk
one node bound}.
%
%The lemma follows by bounding $\Pr\bigl(N^y_{kt}(y)\geq 24 \ d(y)
%\sqrt{kt+1}\log n\bigr)$ using Lemma~\ref{lemma:whp one walk one
%node bound}.
%
%Therefore, by Lemma~\ref{lemma:one walk one node bound},
%%
%\begin{eqnarray*}
%Pr[\sum_{i=1}^k N^{x_i}_t(y) \geq 12  \ \gamma \ d(y)
%\sqrt{kt+1}\bigr)] &\leq& Pr[N^y_{kt}(y)\geq 12  \ \gamma \ d(y)
%\sqrt{kt+1}\bigr)]\\
%&\leq& \frac{1}{\gamma^2}
%\end{eqnarray*}
%as desired.
\end{proof}

Lemma~\ref{lemma:visits bound} follows immediately from
Lemma~\ref{lemma:k walks one node bound} by union bounding over all
nodes.

\subsection{Proof of Theorem~\ref{thm:kwalks}}\label{app:kwalks}
\begin{proof}
First, consider the case where $\lambda>\ell$. In this case,
$\min(\sqrt{k\ell D}+k, \sqrt{k\ell}+k+\ell)=\tilde
O(\sqrt{k\ell}+k+\ell)$. By Lemma~\ref{lemma:visits bound}, each
node $x$ will be visited at most $\tilde O(d(x) (\sqrt{k\ell}+k))$
times. Therefore, using the same argument as Lemma~\ref{lem:phase1},
the congestion is $\tilde O(\sqrt{k\ell} + k)$ with high
probability. Since the dilation is $\ell$, {\sc Many-Random-Walks}
takes $\tilde O(\sqrt{k\ell} + k +\ell)$ rounds as claimed. Since $2\sqrt{k\ell}\leq k+\ell$, this bound reduces to $O(k+\ell)$.

Now, consider the other case where $\lambda\leq \ell$. In this case,
$\min(\sqrt{k\ell D}+k, \sqrt{k\ell}+k+\ell)=\tilde O(\sqrt{k\ell
D}+k)$. Phase~1 takes $\tilde O(\lambda \eta) = \tilde O(\sqrt{k\ell
D}+k)$. The stitching in Phase~2 takes $\tilde O(k\ell D/\lambda) =
\tilde O(\sqrt{k\ell D})$. Moreover, by Lemma~\ref{lemma:visits
bound}, {\sc Get-More-Walks} will never be invoked. Therefore, the
total number of rounds is $\tilde O(\sqrt{k\ell D}+k)$ as claimed.
\end{proof}

%%%%%%%%%%%%%%%%%%%%%%%%%%%%%%%%%%%%%%%%%%%%%%%%
%%%%%%%%%%%%%%%%%%%%%%%%%%%%%%%%%%%%%%%%%%%%%%%%
%%%%%%%%%%%%%%%%%%%%%%%%%%%%%%%%%%%%%%%%%%%%%%%%
%%%%%%%%%%%%%%%%%%%%%%%%%%%%%%%%%%%%%%%%%%%%%%%%

\section{Omitted Proofs of Section~\ref{sec:lowerbound} (Lower Bound)}

\subsection{Proof of Lemma~\ref{lem:one}}
\label{proof:lem:one}
\begin{proof}
After the first $k$ free rounds, consider the intervals that the
left subtree can have, in the best case. Recall that these $k$
rounds allowed communication only along the path. The $path\_dist$
of any node in $L$ from the breakpoints of $sub(L)$ along the path
is at least $k+1$.
\end{proof}

%%%%%%%%%%%%%%%%%%%%%%%%%%%%%%%%%%%%%%%%%%%%%%%%
%%%%%%%%%%%%%%%%%%%%%%%%%%%%%%%%%%%%%%%%%%%%%%%%
%%%%%%%%%%%%%%%%%%%%%%%%%%%%%%%%%%%%%%%%%%%%%%%%
%%%%%%%%%%%%%%%%%%%%%%%%%%%%%%%%%%%%%%%%%%%%%%%%

\subsection{Proof of Lemma~\ref{lem:two}}
\label{proof:lem:two}
\begin{proof}
First, notice that each left breakpoint is at a path-distance of
$k+1$ from every node in the right subtree. That is,
$path\_dist(u,L) = path\_dist(v,R) = k+1$ for all $u\in B_l$ and all
$v\in B_r$.

%{\bf Gopal:} Explain "connected" below.

%{\bf Danupon:} The sentence is changed.

Each breakpoint needs to be combined into one interval in the end.
However, there could be one interval that is communicated from the
$sub(l)$ to the $sub(r)$ (or vice versa) such that it connects
several breakpoints. We show that this cannot happen. Consider all
the breakpoints $v\in B_l\cup B_r$.

\noindent{\bf Definition of {\em scratching}}.

Let us say that we {\em scratch out} the breakpoints from the list
$k+1$, $k'/2+k+1$, $k'+k+1$, $k'+k'/2+k+1$, $2k'+k+1$, ... that get
connected when an interval is communicated between $sub(l)$ and
$sub(r)$. We scratch out a breakpoint if there is an interval in the
graph that contains it and both (or one in case of the first and
last breakpoints) its adjacent breakpoints. For example, if the left
subtree has intervals $[1, k'/2+k]$ and $[k'/2+k+2, k'+k'/2+k+1]$
and the right subtree has $[k+2, k'+k]$ and the latter interval is
communicated to a node in the left subtree, then the left subtree is
able to obtain the merged interval $[1,k'+k'/2+k+1]$ and therefore
breakpoints $k+1$ and $k'/2+k+1$ are scratched out.

\begin{claim}
\label{claim:one} At most $O(1)$ breakpoints can be scratched out
with one message/interval communicated between $sub(r)$ and $sub(l)$
\end{claim}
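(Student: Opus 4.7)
\medskip
\noindent \textbf{Proof proposal.} The plan is to track, during any execution of $\mathcal A$, which breakpoints have been scratched and to bound the growth of this set whenever a single message crosses between $sub(l)$ and $sub(r)$. I will carry out the argument in three steps.

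\medskip
\noindent \emph{Step 1: Geometry and the initial state.} Consecutive breakpoints in the list $k+1,\; k'/2+k+1,\; k'+k+1,\;\ldots$ lie exactly $k'/2$ apart on $P$ and alternate between lying in $L$ and lying in $R$; in particular, any three consecutive breakpoints span exactly $k'$ positions, so an interval must have length at least $k'+1$ to contain all three. Before any cross-message, using the $k$ free rounds along $P$ together with the tree edges internal to $sub(l)$, the intervals known inside $sub(l)$ form a disjoint collection of segments of the form $[jk'+1-k,\; jk'+k'/2+k]$, each of length $k'/2+2k<k'$ (using $4k<k'$ from the construction). No such segment is long enough to contain three consecutive breakpoints, so no breakpoint is scratched initially; the same holds for $sub(r)$.

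\medskip
\noindent \emph{Step 2: Isolating new scratchings.} Consider a message carrying interval $[a,b]$ sent from a node in $sub(r)$ to a receiving node $v\in sub(l)$ (the other direction is symmetric). Because $[a,b]$ was already known in $sub(r)$ before the send, every breakpoint that $[a,b]$ scratches on its own was already scratched. Hence any newly scratched breakpoint must be scratched by the merged interval $I=[a',b']$ formed at $v$ by gluing $[a,b]$ with $v$'s previously held (pairwise disjoint) intervals. Let $I_L$ denote the interval at $v$ that overlaps or abuts the left endpoint $a$ (if any), and $I_R$ the analogous one at $b$; intervals of $v$ that are strictly inside $[a,b]$ contribute no new positions to $I$, and any breakpoints they scratch are already scratched by $[a,b]$. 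Thus $I=I_L\cup[a,b]\cup I_R$, and the only genuinely new positions appearing anywhere in the graph lie in $I_L\setminus[a,b]$ or $I_R\setminus[a,b]$.

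\medskip
\noindent \emph{Step 3: Boundary count.} A newly scratched breakpoint $w$ satisfies $\{w-k'/2,\,w,\,w+k'/2\}\subseteq I$ while no previously-known interval at any node contained all three; equivalently, the triple does not lie entirely inside any single one of $I_L$, $[a,b]$, or $I_R$. It must therefore straddle at least one of the two boundaries of $[a,b]$ with $I_L$ and with $I_R$. Since consecutive breakpoints are $k'/2$ apart, each boundary admits at most two candidate $w$'s (one on each side within distance $k'/2$), and at most one additional $w$ can straddle both boundaries simultaneously (requiring $b-a<k'$). In total, at most five breakpoints are newly scratched per cross-message, which is $O(1)$.

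\medskip
\noindent \emph{Main obstacle.} The delicate case in Step~3 is the inductive one, where $I_L$ or $I_R$ has itself been enlarged by earlier cross-messages and may already contain many breakpoints together with both their neighbors. Making this rigorous requires maintaining the set of scratched breakpoints monotonically and attributing each new scratching to the unique ``closing'' message that first places $\{w-k'/2,\,w,\,w+k'/2\}$ into one known interval at a single node. Should this strict per-message bound prove too fragile in the presence of complex message histories, a sufficient fallback is the amortized statement that $t$ cross-messages can cause at most $O(t)$ scratchings in total, which is all Lemma~\ref{lem:two} needs to complete the proof of Theorem~\ref{thm:mainLB}.
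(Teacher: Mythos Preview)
Your argument reaches the right $O(1)$ conclusion, but it takes a longer route than the paper and partly obscures the one observation that makes the claim immediate. The paper's proof rests on a single invariant you never state explicitly, though you are implicitly relying on it: by the very definition of scratching, \emph{any} interval anywhere in the graph contains at most two unscratched breakpoints---the leftmost and rightmost breakpoints it contains. Any interior breakpoint already has both of its neighboring breakpoints inside the same interval and is therefore scratched. Given this, the claim is a one-line case analysis: the communicated interval $\mathcal{I}$ carries at most two unscratched breakpoints, and it can merge with at most two overlapping intervals on the receiving side (each also carrying at most two), so at most four breakpoints can become scratched in one merge.

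Your boundary-counting in Step~3 is essentially a geometric rediscovery of this invariant, which is why it works, but the precise count (``two candidates per boundary, plus one straddling both'') is stated loosely and not fully justified---the ranges you get from ``straddling'' have length up to $k'$ with breakpoints at spacing $k'/2$, so pinning down $5$ versus $4$ versus $6$ requires more care than you give it. More importantly, the ``main obstacle'' you flag---that $I_L$ or $I_R$ might be arbitrarily long after prior merges---dissolves once the invariant is stated: it holds for every interval regardless of its history, with no induction on the message sequence needed. Step~1 (the initial state before any cross-message) is likewise unnecessary for the claim; the paper's argument is purely local to a single merge. Recognizing the invariant would collapse your three steps into the paper's short paragraph.
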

\begin{proof}
We argue that with the communication of one interval across the left
and right subtrees, at most $4$ breakpoints that have not been
scratched yet can get scratched. This follows from a simple
inductive argument. Consider a situation where the left subtree has
certain intervals with all overlapping intervals already merged, and
similarly right subtree. Suppose an interval ${\cal I}$ is
communicated between $sub(r)$ and $sub(l)$, one of the following
cases arise:
%\begin{itemize}
\squishlist
\item ${\cal I}$ contains one breakpoint: Can be merged with at most two other intervals. Therefore, at most three breakpoints can get scratched.
\item ${\cal I}$ contains two breakpoints: Can get connected with at most two other intervals and therefore at most four breakpoints can get scratched.
\item ${\cal I}$ contains more than two breakpoints: This is impossible since there are at most two breakpoints in each interval, its left most and
right most numbers (by definition of scratching).
%The important point is that ${\cal I}$ can contain at most two breakpoints that have not been scratched yet (by definition).
%\end{itemize}
\squishend This completes the proof of the claim.
\end{proof}

%{\bf Gopal:} Better to say that "the last case does not arise because ...."

%{\bf Danupon:} The sentence of the last case is changed. See above.

The proof now follows from Lemma~\ref{lem:one}. For any breakpoint
$b$, let $M_b$ be the set of messages that represents an interval
containing $b$ while $b$ is still unscratched. If $b$ is in $sub(l)$
and gets scratched because of the combination of some intervals in
$sub(r)$, then we claim that $M_b$ has covered a path-distance of at
least $k$. (Define the path-distance covered by $M_b$ by the total
path-distance covered by all messages in $M_b$.) This is because $b
= v_i$ (say), being a breakpoint in $sub(l)$ has $i$ equal to $(k+1
\mod k')$. Therefore, $b$ is at a path distance of at least $k$ from
any node in $R$. Consequently, $b$ is at a path-distance of at least
$k$ from any node in $sub(r)$. Since there are
$\Theta(\frac{n}{4k})$ breakpoints, and for any interval to be
communicated across the left and right subtree, a path-distance of
$k$ must be covered, in total, $\Theta(n)$ path-distance must be
covered for all breakpoints to be scratched. This follows from three
main observations:
%\begin{itemize}
\squishlist
\item As shown above, for any breakpoint to be scratched, an interval with a breakpoint must
be communicated from $sub(l)$ to $sub(r)$ or vice versa (thereby all
messages $m$ containing the breakpoint together covering a
path-distance of at least $k$)
\item Any message/interval with unscratched breakpoints has at most two unscratched breakpoints
\item As shown in Claim~\ref{claim:one}, at most four breakpoints can be scratched when two intervals are merged.
\squishend
%\end{itemize}

The proof follows. (Also see Figure~\ref{fig:scratch_4} for the idea of this proof.)
\end{proof}

%%%%%%%%%%%%%%%%%%%%%%%%%%%%%%%%%%%%%%%%%%%%%%%%
%%%%%%%%%%%%%%%%%%%%%%%%%%%%%%%%%%%%%%%%%%%%%%%%
%%%%%%%%%%%%%%%%%%%%%%%%%%%%%%%%%%%%%%%%%%%%%%%%
%%%%%%%%%%%%%%%%%%%%%%%%%%%%%%%%%%%%%%%%%%%%%%%%

%\iffalse
\subsection{Proof of Lemma~\ref{lem:three}}
\label{proof:lem:three}
\begin{proof}
We consider the total number of messages that can go through nodes
at any level of the graph, starting from level $0$ to level $\log k$
under the congest model.

First notice that if a message is passed at level $i$ of the tree,
this can cover a $path\_dist$ of at most $2^i$. This is because the
subtree rooted at a node at level $i$ has $2^i$ leaves. Further, by
our construction, there are $2^{\log (k') - i}$ nodes at level $i$.
Therefore, all nodes at level $i$ together, in a given round of
$\mathcal A$ can cover a $dist-path$, path distance, of at most
$2^i2^{\log (k') - i} = 4k+2$. Therefore, over $k$ rounds, the total
$path\_dist$ that can be covered in a single level is $k(k')$. Since
there are $O(\log k)$ levels, the total $path\_dist$ that can be
covered in $k$ rounds over the entire graph is $O(k^2\log k)$. (See
Figure~\ref{fig:max_path_cover}.)
\end{proof}
%\fi

%%%%%%%%%%%%%%%%%%%%%%%%%%%%%%%%%%%%%%%%%%%%%%%%%%%%%
%%%%%%%%%%%%%%%%%%%%%%%%%%%%%%%%%%%%%%%%%%%%%%%%%%%%%
%%%%%%%%%%%%%%%%%%%%%%%%%%%%%%%%%%%%%%%%%%%%%%%%%%%%%
%%%%%%%%%%%%%%%%%%%%%%%%%%%%%%%%%%%%%%%%%%%%%%%%%%%%%

%
%\textbf{Danupon:} The problem with the above proof is that the weights are really huge. So, we cannot call them multiple edge.
%One idea to fix this is to put only $2^i$ edges between $v_i$ and $v_{i+1}$ and argue that the walk will have length $\Omega(n)$ with
%high probability. Then, we need to modify the theorem to deal with paths of length, say $n/2$, as well. Is $2^n$ small enough?
%
%{\bf Gopal:} Why we cannot call them multiple edge (as I say in the proof above)? What is the differnece
%if the wieght is $2^i$ as opposed to $n^i$ ?
%{\bf Danupon:} My only worry is that $n^i$ is too much for the number of edges. (That's also true for $2^i$.)
%If you think it is fine then that is ok.
%
%Another idea is to only show that with constant probability that the path will be taken.
%Then our lower bound statement will hold with constant probability instead of high probablity.
%In this case, we need only polynomial weights -- $n^2$ will suffice for every edge in the path. Do you agree?
%If this is correct, we can state this also.

\section{Omitted Proofs of Section~\ref{sec:mixingtime} (Mixing Time)}

\subsection{Brief description of algorithm for Theorem~\ref{thm:batu}}\label{app:batu}

The algorithm partitions the set of nodes in to buckets based on the steady state probabilities. Each of the $\tilde{O}(n^{1/2}poly(\epsilon^{-1}))$ samples from $X$ now falls in one of these buckets. Further, the actual count of number of nodes in these buckets for distribution $Y$ are counted. The exact count for $Y$ for at most $\tilde{O}(n^{1/2}poly(\epsilon^{-1}))$ buckets (corresponding to the samples) is compared with the number of samples from $X$; these are compared to determine if $X$ and $Y$ are close. We refer the reader to their paper~\cite{BFFKRW} for a precise description.

\subsection{Proof of Lemma~\ref{lem:monotonicity}}\label{app:mon}
\begin{proof}
The monotonicity follows from the fact that
$||Ax||_1 \le ||x||_1$ where $A$ is the transpose of the transition probability matrix of the graph and $x$ is any probability vector. That is, $A(i,j)$ denotes the probability of transitioning from node $j$ to node $i$. This in turn follows from the fact that the sum of entries of any column of $A$ is 1.

Now let $\pi$ be the stationary distribution of the transition matrix $A$. This implies that if $\ell$ is $\epsilon$-near mixing, then $||A^lu - \pi||_1 \leq \epsilon$, by definition of $\epsilon$-near mixing time. Now consider $||A^{l+1}u - \pi||_1$. This is equal to $||A^{l+1}u - A\pi||_1$ since $A\pi = \pi$.  However, this reduces to $||A(A^{l}u - \pi)||_1 \leq \epsilon$. It follows that $(\ell+1)$ is $\epsilon$-near mixing.
\end{proof}
%\Closesolutionfile{movedProofs}

\subsection{Proof of Theorem~\ref{thm:mixmain}}\label{app:mixproof}
\begin{proof}
For undirected unweighted graphs, the
stationary distribution of the random walk is known and is
$\frac{deg(i)}{2m}$ for node $i$ with degree $deg(i)$, where $m$ is
the number of edges in the graph.  If a source node in the network knows the degree distribution, we only need
$\tilde{O}(n^{1/2}poly(\epsilon^{-1}))$ samples from a distribution to
compare it to the stationary distribution.  This can be achieved by
running {\sc MultipleRandomWalk} to obtain $K = \tilde{O}(n^{1/2}poly(\epsilon^{-1}))$ random walks. We choose $\epsilon = 1/12e$.
To find the approximate mixing time, we try out
increasing values of $l$ that are powers of $2$.  Once we find the
right consecutive powers of $2$, the monotonicity property admits a
binary search to determine the exact value for the specified $\epsilon$.
%of $\epsilon$-near mixing
%time. Note that we can apply binary search as $\epsilon$-near mixing
%time is a monotonic property.

The result
in~\cite{BFFKRW} can also be adapted to compare with the steady state distribution even if the source does not know the entire distribution. As described previously, the source only needs to know the {\em count} of number of nodes with steady state distribution in given buckets. Specifically, the buckets of interest are at most $\tilde{O}(n^{1/2}poly(\epsilon^{-1}))$ as the count is required only for buckets were a sample is drawn from. Since each node knows its own steady state probability (determined just by its degree), the source can broadcast a specific bucket information and recover, in $O(D)$ steps, the count of number of nodes that fall into this bucket. Using the standard upcast technique previously described, the source can obtain the bucket count for each of these at most $\tilde{O}(n^{1/2}poly(\epsilon^{-1}))$ buckets in $\tilde{O}(n^{1/2}poly(\epsilon^{-1}) + D)$ rounds.

We have shown previously that a source node can obtain $K$ samples from $K$ independent random walks of length $\ell$ in $\tilde{O}(K + \sqrt{KlD})$ rounds. Setting $K=\tilde{O}(n^{1/2}poly(\epsilon^{-1}) + D)$ completes the proof.
\end{proof}

\section{Figures}

\begin{figure}[h]
\centering
\includegraphics{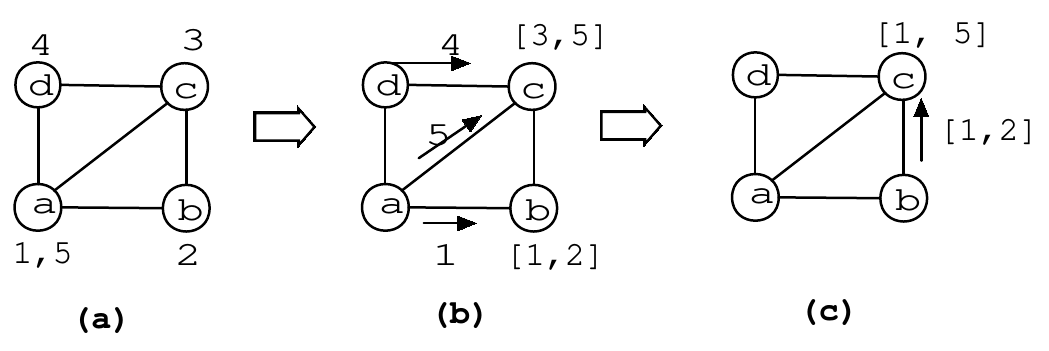}
\caption{Example of path verification problem. {\bf (a)} In the
beginning, we want to verify that the vertices containing numbers
$1..5$ form a path. (In this case, they form a path $a, b, c, d,
a$.) {\bf (b)} One way to do this is for $a$ to send $1$ to $b$ and
therefore $b$ can check that two vertices $a$ and $b$ corresponds to
label $1$ and $2$ form a path. (The interval $[1,2]$ is used to
represent the fact that vertices corresponding to numbers $1, 2$ are
verified to form a path.) Similarly, $c$ can verify $[3,5]$. {\bf
(c)} Finally, $c$ combine $[1,2]$ with $[3, 5]$ and thus the path
corresponds to numbers $1,2, ..., 5$ is verified. }
\label{fig:path_verify_definition}
\end{figure}

\begin{figure}[h]
\centering
\includegraphics[width=0.98\linewidth]{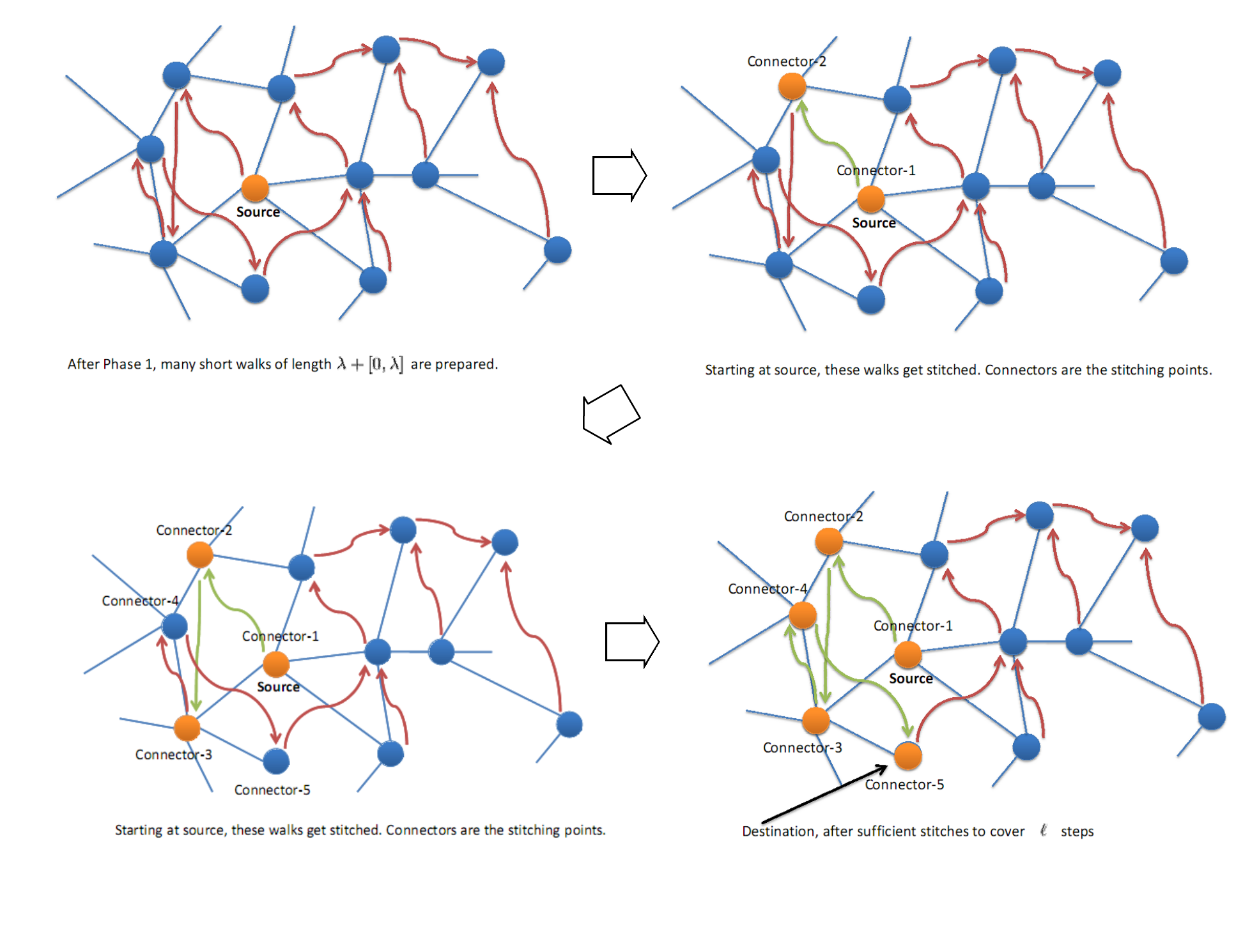}
\caption{Figure illustrating the Algorithm of stitching short walks together.}
\label{fig:connector}
\end{figure}

\begin{figure}[h]
  \centering
  \includegraphics[width=0.7\linewidth]{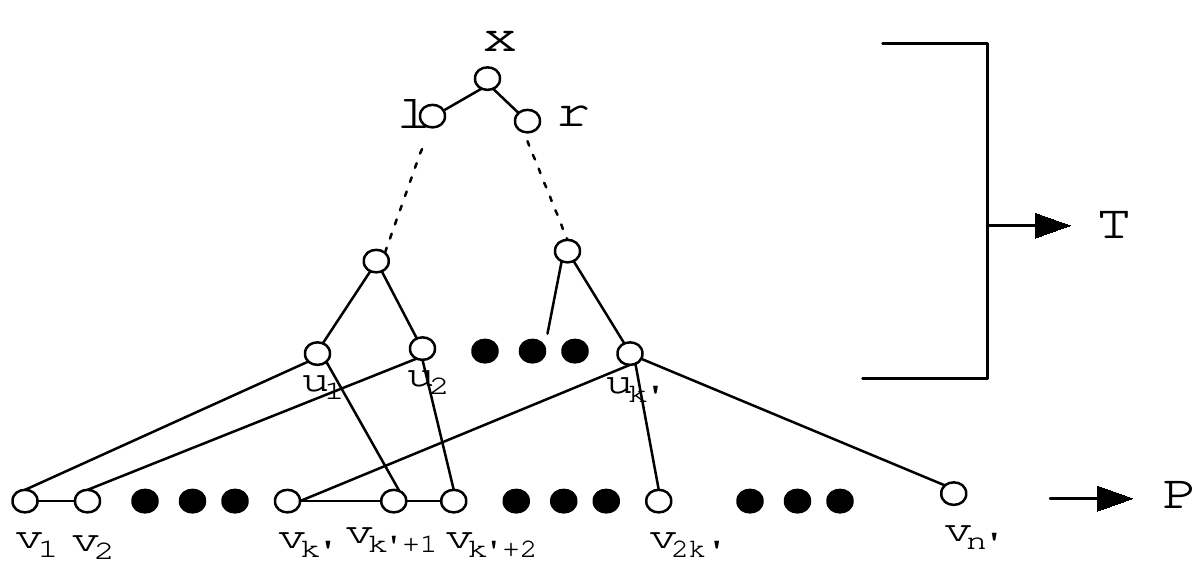}\\
  \caption{$G_n$}\label{fig:graph_construction}
\end{figure}

\begin{figure}[h]
\centering
\includegraphics{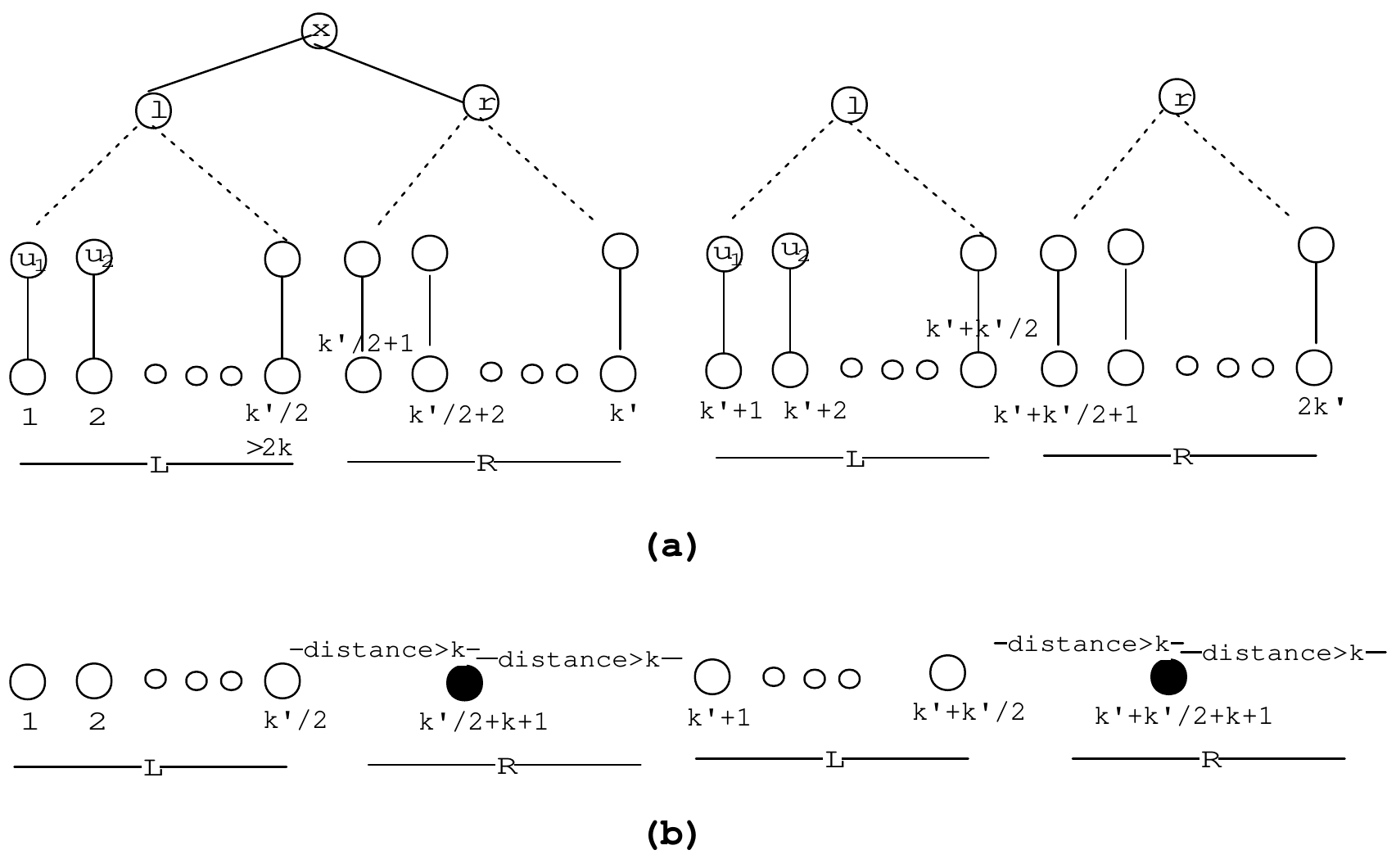}
\caption{\textbf{Breakpoints.} {\bf (a)} $L$ and $R$ consist of
every other $k'/2$ vertices in $P$. (Note that we show the vertices
$l$ and $r$ appear many times for the convenience of presentation.)
{\bf (b)} $v_{k'/2+k+1}$ and $v_{k'+k'/2+k+1}$ (nodes in black) are
two of the breakpoints for $L$. Notice that there is one breakpoint
in every connected piece of $L$ and $R$.}
\label{fig:breaking-points}
\end{figure}

\begin{figure}[h]
\centering \subfigure[{Path-distance.}]{
\includegraphics[width=0.35\linewidth]{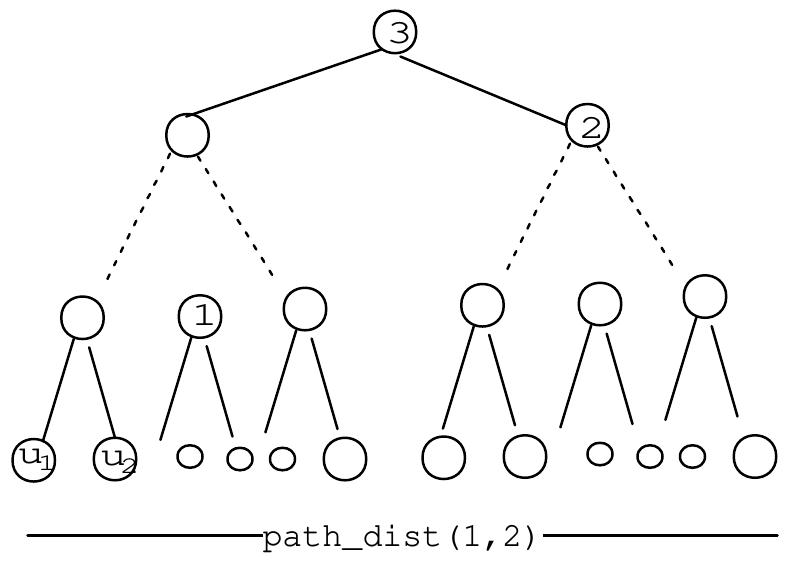}
\label{fig:path_distance}}
\subfigure[{Idea of Claim~\ref{claim:one}}]{
\includegraphics[width=0.35\linewidth]{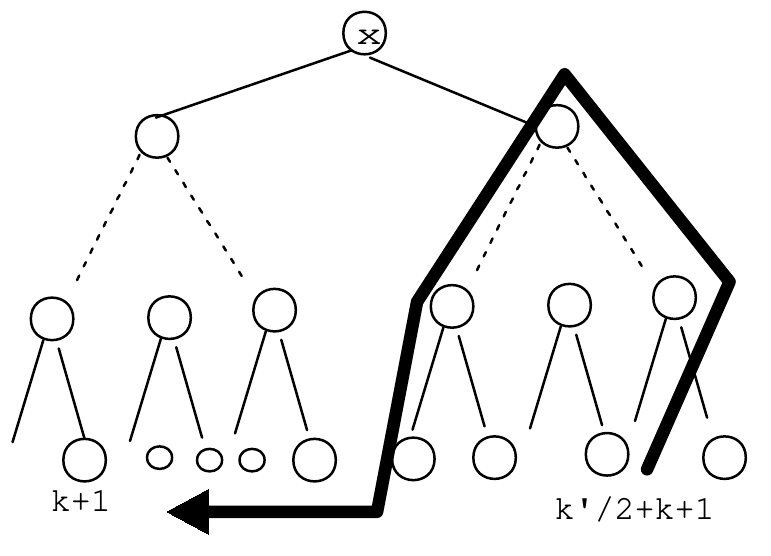}
\label{fig:scratch_4} }
\subfigure[Idea of Lemma~\ref{lem:three}.]{
\includegraphics[width=0.25\linewidth]{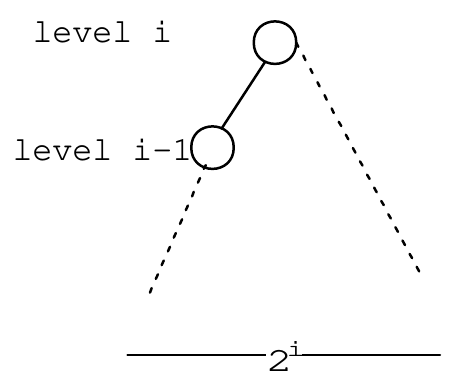}
\label{fig:max_path_cover} } \caption{{\bf (a)} Path distance
between 1 and 2 is the number of leaves in the subtree rooted at 3,
the lowest common ancestor of 1 and 2. {\bf (b)} For one unscratched
left breakpoint, $k'/2+k+1$ to be combined with another right
breakpoint $k+1$ on the left, $k'/2+k+1$ has to be carried to $L$ by
some intervals. Moreover, one interval can carry at most two
unscratched breakpoints at a time. {\bf (c)} Sending a message
between nodes on level $i$ and $i-1$ can increase the covered path
distance by at most $2^i$.}
\end{figure}

\end{document}